\documentclass[runningheads]{llncs}

\usepackage{algorithm}
\usepackage{algpseudocode}
\usepackage{amsmath}
\usepackage{comment}
\usepackage{graphicx}
\usepackage{gensymb}
\usepackage{color}
\usepackage{hyperref}
\usepackage{placeins}
\usepackage{mathtools}
\usepackage{subcaption}

\newtheorem{observation}{Observation}

\begin{document}

\title{Centralised Connectivity-Preserving Transformations by Rotation: 3 Musketeers for all Orthogonal Convex Shapes}

\titlerunning{Transformations by Rotation: 3 Musketeers for Orthogonal Convex Shapes}

\author{Matthew Connor \and Othon Michail}

\authorrunning{M. Connor and O. Michail}

\institute{Department of Computer Science, University of Liverpool, United Kingdom\\
\email{M.Connor3@liverpool.ac.uk, Othon.Michail@liverpool.ac.uk}}

\maketitle

\begin{abstract}
We study a model of programmable matter systems consisting of $n$ devices lying on a 2-dimensional square grid, which are able to perform the minimal mechanical operation of rotating around each other. The goal is to transform an initial shape A into a target shape B. We are interested in characterising the class of shapes which can be transformed into each other in such a scenario, under the additional constraint of maintaining global connectivity at all times. This was one of the main problems left open by $[$Michail \emph{et al.}, JCSS'19$]$. Note that the considered question is about structural feasibility of transformations, which we exclusively deal with via centralised constructive proofs. Distributed solutions are left for future work and form an interesting research direction. Past work made some progress for the special class of \emph{nice} shapes. We here consider the class of \emph{orthogonal convex shapes}, where for any two nodes $u, v$ in a horizontal or vertical line on the grid, there is no empty cell between $u$ and $v$. We develop a generic centralised transformation and prove that, for any pair $A$, $B$ of colour-consistent orthogonal convex shapes, it can transform $A$ into $B$. In light of the existence of \emph{blocked} shapes in the considered class, we use a minimal 3-node \emph{seed} to trigger the transformation. The running time of our transformation is an optimal $O(n^2)$ sequential moves, where $n=|A|=|B|$. We leave as an open problem the existence of a universal connectivity-preserving transformation with a small seed. Our belief is that the techniques developed in this paper might prove useful to answer this.

\keywords{Programmable matter  \and Transformation \and Reconfigurable robotics \and Shape formation \and Centralised algorithms}
\end{abstract}

\section{Introduction}\label{sec1}

Programmable matter refers to matter which can change its physical properties in an algorithmic way. This means that the change is the result following the procedure of an underlying program. The implementation of this program can either be a system-level external centralised algorithm or an internal decentralised algorithm executed by the material itself. The model for such systems can be further refined to specify properties that are relevant to real-world applications, for example connectivity, colour \cite{CLSL11} and other physical properties.

As the development of these systems continues, it becomes increasingly necessary to develop theoretical models which are capable of describing and explaining the emergent properties, possibilities and limitations of such systems in an abstract and fundamental way. To this end, models have been developed for programmable matter. For example, algorithmic self-assembly \cite{D12,R06} focuses on programming molecules like DNA to grow in a controllable way, and the Abstract Tile Assembly Model \cite{RW00,W98}, the Kilobot model \cite{RCN14}, the Robot Pebbles system \cite{GKR10}, and the nubot model \cite{WCGD13}, have all been developed for this area. Network Constructors \cite{MS16} is an extension of population protocols \cite{AADFP06} that allows for network formation and reconfiguration.

The latter model is formally equivalent to a restricted version of chemical reaction networks, which ``are widely used to describe information processing occurring in natural cellular regulatory networks'' \cite{SCWB08,D13}. The CATOMS system \cite{TPB19,TPB20,FKK21} is a further implementation which constructs 3D shapes by first creating a ``scaffolding structure" as a basis for construction. Finally, there is extensive research into the amoebot model \cite{DDGR14,DDGP18,DGRS15,DGRS16}, where finite automata on a triangular lattice follow a distributed algorithm to achieve a desired goal, including a recent extension \cite{FPS22} to a circuit-based model.

Recent progress in this direction has been made in previous papers, for example \cite{MSS19}, covering questions related to a specific model of programmable matter where nodes exist in the form of a shape on a 2D grid and are capable of performing two specific movements: rotation around each other and sliding a node across two other nodes. The authors investigated the problem of transformations with rotations with the restriction that shapes must always remain connected (RotC-Transformability), and left universal RotC-Transformability as an open problem. They hinted at the possibility of universal transformation in an arxiv draft \cite{MSS17}. To the best of our knowledge, progress on this open question has only been made in \cite{CMP21}, where, by using a small seed, connectivity-preserving transformations by rotation were developed for a restricted class of shapes. In general, such transformations are highly desirable due to the large numbers of programmable matter systems which rely on the preservation of connectivity and the simplicity of movement, which is not only of theoretical interest but is also more likely to be applicable to real-world systems. Related progress was also made in \cite{AADD21}, which used a similar model but with a different type of movement. The authors allowed for a greater range of movement, for example ``leapfrog" and ``monkey" movements. They accomplished universal transformation in $O(n^2)$ movements using a ``bridging" procedure assisted by at most 5 seed-nodes, which they called \emph{musketeers}.

\section{Contribution}\label{sec2}

We investigate the RotC-Transformability problem, introduced in \cite{MSS19}, which asks to characterise which families of connected shapes can be transformed into each other via rotation movements without breaking connectivity. The model represents programmable matter on a 2D grid which is only capable of performing rotation movements, defined as the $90\degree$ rotation of a node $u$ around a neighbouring edge-adjacent node $v$, so long as the goal and intermediate cells are empty. As our focus is on the feasibility and complexity of transformations, our approach is naturally based on structural characterisations and \emph{centralised} procedures. Structural and algorithmic progress is expected to facilitate more applied future developments, such as distributed implementations.

We assume the existence of a \emph{seed}, a group of nodes in a shape $S$ which are placed in empty cells neighbouring a shape $A$ to create a new connected shape which is the unification of $S$ and $A$.
Seeds allow shapes which are blocked or incapable of meaningful movement to perform otherwise impossible transformations.
The use of seeds was established in \cite{MSS19}, leaving open the problem of universal RotC-Transformability. Another work \cite{CMP21} investigated this problem in the context of \emph{nice} shapes, first defined in \cite{AMP20} as a set of shapes containing any shape $S$ which has a central line $L$, where, for all nodes $u \in S$, either $u \in L$ or $u$ is connected to $L$ by a line of nodes perpendicular to $L$. Universal reconfiguration in the context of connectivity-preserving transformations using different types of movement has been demonstrated in \cite{AADD21}. That paper calls the seed nodes ``musketeers'' and their transformation requires the use of 5 such nodes. 

The present paper moves towards a solution which is based on connectivity-preservation and the tighter constraints of rotation-only movement of \cite{MSS19} while aiming to (i) widen the characterization of the class of transformable shapes and (ii) minimise the seed required to trigger those transformations. By achieving these objectives for orthogonal convex shapes, we make further progress towards the ultimate goal of an exact characterisation (possibly universal) for seed-assisted RotC-Transformability.

We study the transformation of shapes of size $n$ with \emph{orthogonal convexity} into other shapes of size $n$ with the same property, via the canonical shape of a \emph{diagonal line-with-leaves}. Orthogonal convexity is the property that for any two nodes $u, v$ in a horizontal or vertical line on the grid, there is no empty cell between $u$ and $v$.
A diagonal line-with-leaves is a group of components, the main being a series of 2-node columns where each column is offset such that the order of the nodes is equivalent to a line, and two optional components: two 1-node columns on either end of the shape and additional nodes above each column, making them into 3-node columns.

We show that transforming a orthogonal convex shape of $n$ nodes into a diagonal line-with-leaves is possible and can be achieved by $O(n^2)$ moves using a 3-node seed. This bound on the number of moves is optimal for the considered class, due to a matching lower bound from \cite{MSS19} on the distance between a line and a staircase, both of which are orthogonal convex shapes. A seed is necessary due to the existence of blocked orthogonal convex shapes, an example being a rhombus. As \cite{CMP21} shows, any seed with less than $3$ nodes is incapable of non-trivial transformation of a line of nodes. Since a line of nodes is orthogonal convex, the 3-node seed employed here is minimal.

The class of orthogonal convex shapes cannot easily be compared to the class of nice shapes. A diagonal line of nodes in the form of a staircase belongs to the former but not the latter. Any nice shape containing a gap between two of its columns is not a orthogonal convex shape. Finally, there are shapes like a square of nodes which belong to both classes. Nevertheless, the nice shapes that are not orthogonal convex have turned out to be much easier to handle than the orthogonal convex shapes that are not nice. We hope that the methods we had to develop in order to deal with the latter class of shapes, will bring us one step closer to an exact characterisation of connectivity-preserving transformations by rotation.

In Section \ref{sec3}, we formally define the programmable matter model used in this paper.
Section \ref{sec4} presents some basic properties of orthogonal convex shapes and of their elimination and generation sequences.
In Section \ref{sec5}, we provide our algorithm for the construction of the diagonal line-with-leaves which, through reversibility, can be used to construct other orthogonal convex shapes and give time bounds for it.
In Section \ref{sec6}, we conclude and give directions for potential future research.

\section{Model}\label{sec3}

We consider the case of programmable matter on a 2D grid, with each position (or cell) of the grid being uniquely referred to by its $(x, y)$ coordinates. Such a system consists of a set $V$ of $n$ nodes. Each node may be viewed as a spherical module fitting inside a cell of the grid. At any given time, each node occupies a cell, with the positioning of the nodes defining a shape, and no two nodes may occupy the same cell. It also defines an undirected \emph{neighbouring relation} $E\subset V\times V$, where $uv\in E$ iff $u$ and $v$ occupy \emph{horizontally} or \emph{vertically} adjacent cells of the grid. A shape is \emph{connected} if the graph induced by its neighbouring relation is a connected graph.

In general, shapes can \emph{transform} to other shapes via a sequence of one or more movements of individual nodes.
We consider only one type of movement: \emph{rotation}. In this movement, a single node moves relative to one or more neighbouring nodes. A single rotation movement of a node $u$ is a 90° rotation of $u$ around one of its neighbours. Let $(x, y)$ be the current position of $u$ and let its neighbour be $v$ occupying the cell $(x, y-1)$ (i.e., lying below $u$). Then $u$ can rotate $90\degree$ clockwise (counterclockwise) around $v$ iff the cells $(x+1, y)$ and $(x + 1, y - 1)$ ($(x-1, y)$ and $(x - 1, y - 1)$, respectively) are both empty. By rotating the whole system $90\degree$, $180\degree$, and $270\degree$, all possible rotation movements can be defined.

Let $A$ and $B$ be two connected shapes. We say that $A$ transforms to $B$ via a rotation $r$, denoted $A \stackrel{r}{\rightarrow} B$, if there is a node $u$ in $A$ such that if $u$ applies $r$, then the shape resulting after the rotation is $B$. We say that $A$ transforms in one step to $B$ (or that $B$ is reachable in one step from $A$), denoted $A \rightarrow B$, if $A \stackrel{r}{\rightarrow} B$ for some rotation $r$. We say that $A$ transforms to $B$ (or that $B$ is reachable from $A$) if there is a sequence of shapes $A = S_1, S_2, \ldots , S_t = B$, such that $S_i \rightarrow S_{i+1}$ for all $1 \leq i \leq t-1$. Rotation is a reversible movement, a fact that we use in our results. All shapes $S_1, S_2, \ldots, S_t$ must be \emph{edge connected}, meaning that the graph defined by the neighbouring relation $E$ of all nodes in any $S_i$, where $1 \leq i \leq t$, must be a connected graph.

At the start of the transformation, we will be assuming the existence of a \emph{seed}: a small connected shape $M$ placed on the perimeter of the given shape $S$ to trigger the transformation. This is essential because under rotation-only there are shapes $S$ that are $k$-blocked, meaning that at most $k\geq 0$ moves can be made before a configuration is repeated. When $k=0$, no move is possible from $S$, an example of $0$-blocked shape being the rhombus. (see, e.g., Fig. \ref{fig:blocked}).

For the sake of providing clarity to our transformations, we say that every cell in the 2D grid has a colour from $\{red, black\}$ in such a way that the cells form a black and red checkered colouring of the grid, similar to the colouring of a chessboard. This colouring is fixed so long as there is at least one node on the grid. This represents a property of the rotation movement, which is that any given node in a coloured cell can only enter cells of the same colour. We define $c(u)\in\{black,red\}$ as the colour of node $u$ for a given chessboard colouring of the grid. We represent this in our figures by colouring the nodes red or black. See Figure \ref{fig:Rot-abbrev} for an example and for special notation that we use to abbreviate certain rotations which we perform throughout the paper.

\begin{figure}
\centering
\includegraphics[width=0.6\textwidth]{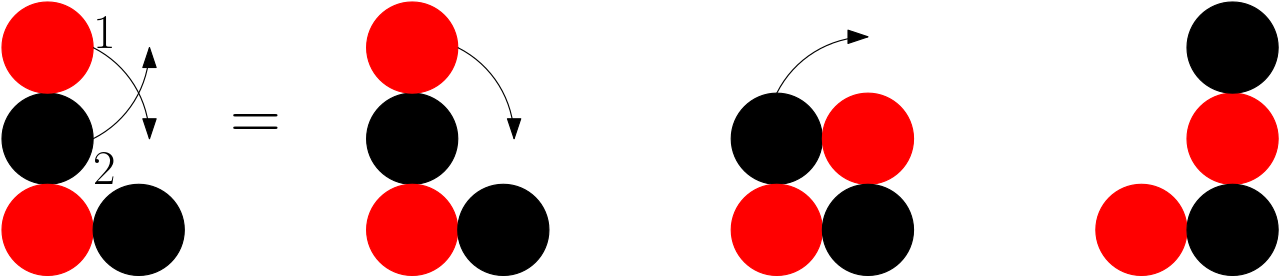}
\caption{The rotation on the left is an abbreviated version of the rotations on the right, used throughout the paper. The numbers represent the order of rotations. Reds appear grey in print, throughout the paper.}
\label{fig:Rot-abbrev}
\end{figure}

\FloatBarrier
\begin{figure}
\centering
\includegraphics[width=0.45\textwidth]{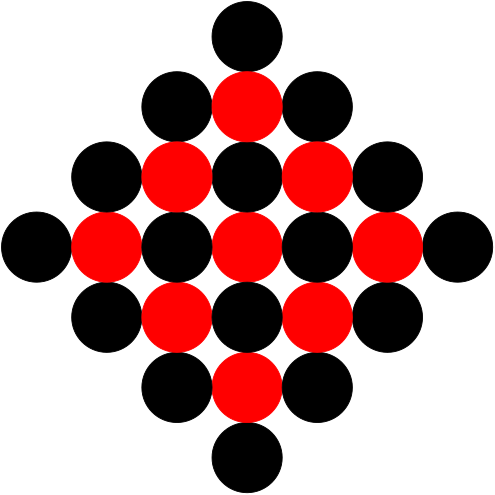}
\caption{An example of a black parity rhombus.}
\label{fig:blocked}
\end{figure}
\FloatBarrier

Any shape $S$ may be viewed as a coloured shape consisting of $b(S)$ blacks and $r(S)$ reds.
Two shapes A and B are \emph{colour-consistent} if $b(A) = b(B)$ and $r(A) = r(B)$.
For any shape $S$ of $n$ nodes, the \emph{parity} of $S$ is the colour of the majority of nodes in $S$. If there is no strict majority, we pick any as the parity colour. We use \emph{non-parity} to refer to the colour which is not the parity.

Depending on the context and purpose, the term \emph{node} will be used to refer both to the actual entity that may move between co-ordinates and to the co-ordinates of that entity at a given time.

\section{Preliminaries}\label{sec4}

\subsection{General Geometric Definitions}

We now define specific sections of the shape which we will refer to extensively throughout the paper.

\begin{definition} \label{def:perimeter}
Let $A$ be a connected shape. Mark each cell of the grid that is occupied by a node of $A$. A cell $(i, j)$ is part of a hole of $A$ if every infinite length single path starting from $(i, j)$ (moving only horizontally and vertically) necessarily goes through a black cell. Mark every cell that is part of a hole of $A$ as well, to obtain a compact shape of marked cells $A'$. Consider now polygons defined by unit-length line segments of the grid. Define the \emph{perimeter} of $A$ as the minimum-area such polygon that completely encloses $A'$ in its interior. The fact that the polygon must have an interior and an exterior follows directly from the Jordan curve theorem \cite{J93}.
\end{definition}

\begin{definition}\label{def:cell-perimeter}
Any cell of the grid that has contributed at least one of its line segments to the perimeter and has not been marked (i.e., is not occupied by a node of $A$) is the cell perimeter of shape $A$. See Figure \ref{fig:perimeter} for an example.
\end{definition}

\begin{definition} \label{def:surface}
The external surface of a connected shape $A$, is a shape $B$, not necessarily connected, consisting of all nodes $u \in A$ such that $u$ occupies a cell defining at least one of the line segments of $A$’s perimeter.
\end{definition}

\begin{definition} \label{def:extended-surface}
The extended external surface of a connected shape $A$, is defined by adding to $A$’s external
surface all nodes of $A$ whose cell shares a corner with $A$’s perimeter.
\end{definition}

\begin{figure}
\centering
\includegraphics[width=0.5\textwidth]{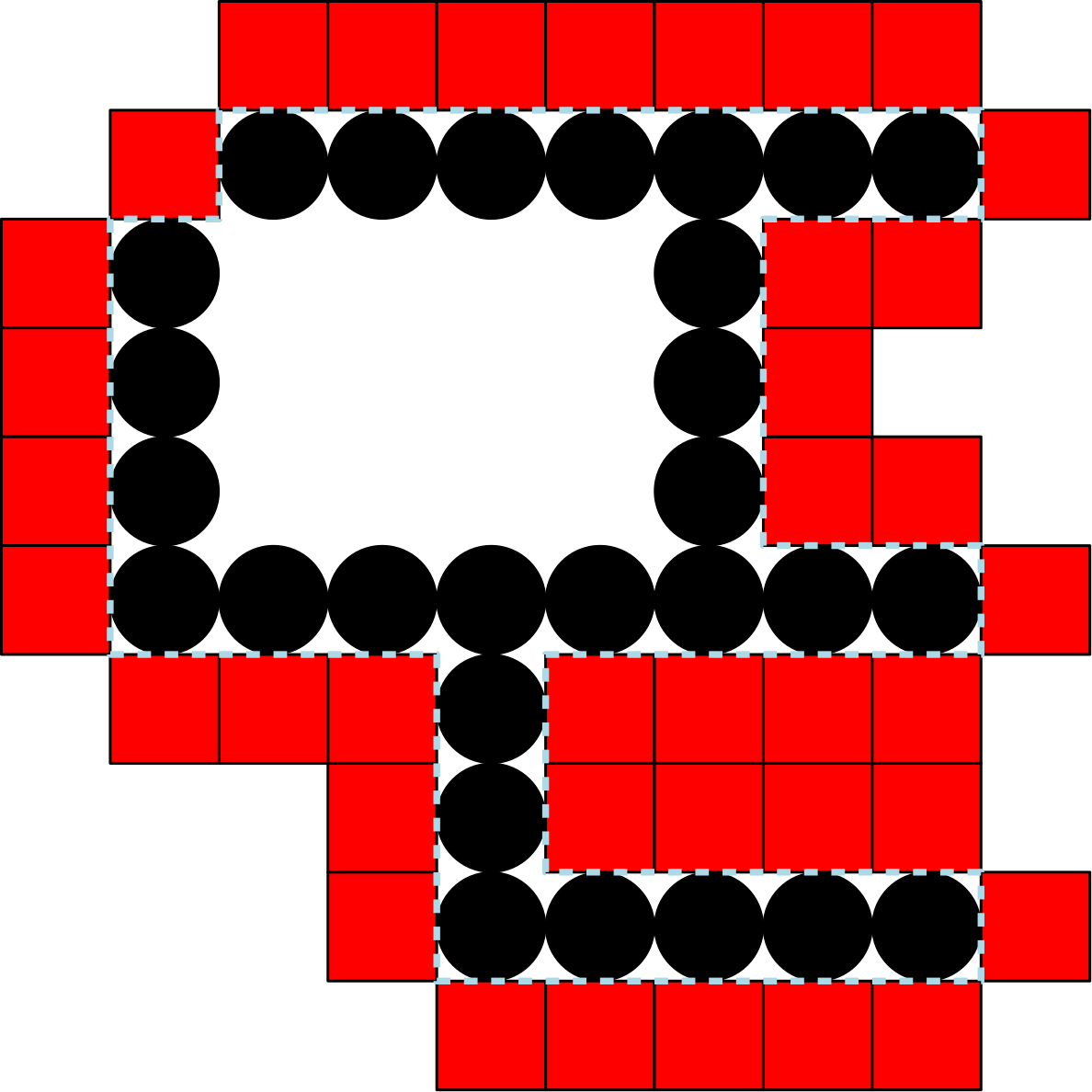}
\caption{The light-blue dashed line is the perimeter of the shape. The red squares are the cell perimeter, or the set of empty cells which contribute at least one side to the perimeter. All nodes which share a side with the cell perimeter are part of the exterior, and all cells enclosed by the exterior constitute the interior.}
\label{fig:perimeter}
\end{figure}

\subsection{Orthogonal Convex Shapes}

We now present the class of shapes considered in this paper together with some basic properties about them that will be useful later. 

\begin{definition} \label{def:orthogonal-convex}
A shape $S$ is said to belong to the family of \emph{orthogonal convex} shapes, if, for any pair of distinct nodes $(x_1,y_1),(x_2,y_2)\in S$, $x_1=x_2$ implies $(x_1,y)\in S$ for all $\min\{y_1,y_2\}<y<\max\{y_1,y_2\}$ while $y_1=y_2$ implies $(x,y_1)\in S$ for all $\min\{x_1,x_2\}<x<\max\{x_1,x_2\}$.
\end{definition}

\begin{observation}
Any discrete convex shape $S$ is also orthogonal convex.
\end{observation}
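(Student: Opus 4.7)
The plan is to unpack the standard definition of a discrete convex shape, namely $S = \mathrm{conv}(S) \cap \mathbb{Z}^2$, and then show directly that the orthogonal convexity condition of Definition \ref{def:orthogonal-convex} drops out from ordinary convexity of $\mathrm{conv}(S)$. Structurally, the argument is a short containment chase rather than a combinatorial construction, so I would present it as a couple of lines with no case analysis beyond swapping the roles of $x$ and $y$.

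First, I would fix two nodes $(x_1,y_1),(x_2,y_2)\in S$ lying on a common vertical grid line, i.e.\ $x_1 = x_2 =: x$, and pick an arbitrary integer $y$ with $\min\{y_1,y_2\} < y < \max\{y_1,y_2\}$. Since $(x,y_1),(x,y_2)\in S \subseteq \mathrm{conv}(S)$ and $\mathrm{conv}(S)$ is convex in $\mathbb{R}^2$, the closed segment joining these two points lies entirely in $\mathrm{conv}(S)$. The point $(x,y)$ is a convex combination of $(x,y_1)$ and $(x,y_2)$, hence $(x,y)\in\mathrm{conv}(S)$, and because $(x,y)\in\mathbb{Z}^2$ we get $(x,y)\in\mathrm{conv}(S)\cap\mathbb{Z}^2 = S$. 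Then I would observe that the horizontal case $y_1 = y_2$ is handled verbatim by exchanging the roles of the two coordinates, completing the verification of Definition \ref{def:orthogonal-convex}.

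There is essentially no obstacle here; the only subtlety worth flagging is that the statement presumes the reader's familiarity with the convention ``discrete convex'' $=$ ``$\mathbb{Z}^2$-points of the continuous convex hull.'' If the paper were to adopt a weaker definition (for instance, closure only under axis-aligned or under diagonal grid steps), the observation would either become trivial or require a separate argument; since the stated conclusion is orthogonal convexity, the axis-aligned instantiation of convexity used above is sufficient, and I would include a one-sentence remark to make this dependence on the definition explicit.
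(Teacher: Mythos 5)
Your proof is correct: the paper states this as an unproven observation, so there is no authorial proof to compare against, and your containment chase ($(x,y)$ lies on the segment joining $(x,y_1)$ and $(x,y_2)$, hence in $\mathrm{conv}(S)$, hence in $\mathrm{conv}(S)\cap\mathbb{Z}^2=S$) is exactly the standard argument the authors are implicitly relying on. Your closing caveat is also well placed, since the paper never formally defines ``discrete convex shape,'' so making the convention $S=\mathrm{conv}(S)\cap\mathbb{Z}^2$ explicit is the only substantive content the observation needs.
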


Observe now that the perimeter of any connected shape is a cycle drawn on the grid, i.e., a path where its end meets its beginning. The cycle is drawn by using consecutive grid-edges of unit length, each being characterized by a direction from $\{up, right, down, left\}$. For each pair of opposite directions, $(up,down)$ and $(left,right)$, the perimeter always uses an equal number of edges of each of the two directions in the pair and uses every direction at least once. For the purposes of the following proposition, let us denote $\{up, right, down, left\}$ by $\{d_1, d_2, d_3, d_4\}$, respectively. The perimeter of a shape can then be defined as a sequence of moves drawn from $\{d_1, d_2, d_3, d_4\}$, w.l.o.g. always starting with a $d_1$. Let also $N_i$ denote the number of times $d_i$ appears in a given perimeter.

\begin{proposition} \label{prop:regular-expression}
A shape $S$ is a connected orthogonal convex shape if and only if its perimeter satisfies both the following properties:
\begin{itemize}
    \item It is described by the regular expression 
    \begin{equation*}
    d_1(d_1\;|\;d_2)^{*}d_2(d_2\;|\;d_3)^{*}d_3(d_3\;|\;d_4)^{*}d_4(d_4\;|\;d_1)^{*}
    \end{equation*}
    under the additional constraint that $N_1=N_3$ and $N_2=N_4$.
    \item Its interior has no empty cell.
\end{itemize}
\end{proposition}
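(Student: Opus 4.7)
The plan is to prove both directions via the unimodality of the row and column extent functions of $S$. For each row $y$ meeting $S$, let $[l(y),r(y)]$ be the interval of $x$-coordinates occupied by $S$ in that row, and analogously define $[b(x),t(x)]$ for each column; these are well-defined intervals because orthogonal convexity applied within a single row or column. A short argument shows $l$ is unimodal (first weakly decreases, then weakly increases as $y$ grows): if $l$ had a strict local maximum at some row $y$, then the row-overlap of consecutive rows (forced by connectivity plus orthogonal convexity) would give $(l(y)-1,y-1),(l(y)-1,y+1)\in S$ while $(l(y)-1,y)\notin S$, violating column convexity. The symmetric arguments yield unimodality of $r$, $b$, and $t$.

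For the forward direction, the ``no empty interior'' property is immediate: any hole cell $(x_0,y_0)$ has, by the hole definition, nodes reached along every infinite ray, so in particular there exist $(x_0,y_1),(x_0,y_2)\in S$ with $y_1<y_0<y_2$, and orthogonal convexity forces $(x_0,y_0)\in S$. For the regular expression, I would traverse the perimeter clockwise starting at the bottom-left corner of the lowest cell of the leftmost column of $S$, which makes the first move $d_1$. The unimodality of $l$ shows that while climbing the left side the perimeter alternates between $d_1$ (vertical ascents along a column) and $d_2$ (single right-shifts each time $l$ increases); once the topmost row is reached, the analogous behaviour of $t$ forces only $d_2$ and $d_3$, and the two remaining sides follow symmetrically using $r$ and $b$. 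Since $d_1,d_2,d_3,d_4$ are introduced in this exact cyclic order, the move sequence admits the claimed parse. The equalities $N_1=N_3$ and $N_2=N_4$ simply assert that the perimeter, a closed polygon, has zero net vertical and horizontal displacement.

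For the converse, I exploit the monotonicity built into each factor of the regex. In the first factor the $x$-coordinate along the traversal is weakly non-decreasing (only $d_2$'s change it, and only to the right), so at most one $d_2$-edge in that factor can lie at any given column $x_0$; the same holds for the second factor, and the $x$-ranges covered by $d_2$-edges in factors $1$ and $2$ are disjoint because the $d_2$ that opens factor $2$ starts at the final $x$ of factor $1$. Hence every column carries at most one $d_2$-edge, and by the symmetric argument over factors $3$ and $4$ at most one $d_4$-edge. Combined with the assumption that the interior contains no empty cell, this forces $\{y:(x_0,y)\in S\}$ to be precisely the interval bounded by the unique $d_4$- and $d_2$-edges at column $x_0$, so every column of $S$ is an interval; the symmetric row argument gives orthogonal convexity. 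Connectivity follows because the regex together with $N_1=N_3,\ N_2=N_4$ describes a single simple closed rectilinear curve enclosing a hole-free region, which is therefore cell-connected. The main obstacle I anticipate is the bookkeeping around the ``pivot'' positions where the extent functions change monotonicity, and around phases of length zero or one; once those boundary cases are isolated, every remaining step reduces to applying orthogonal convexity to a vertical or horizontal segment whose endpoints lie in $S$.
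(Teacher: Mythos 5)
Your proof is essentially correct but follows a genuinely different route from the paper's. The paper's forward direction argues directly on the word describing the perimeter: it shows that if $d_{i-2}$ appeared between the first and last occurrence of $d_i$, the perimeter would contain a subword of the form $d_id_{i-1}^{+}d_{i-2}$ or $d_{i-2}d_{i+1}^{+}d_i$, and the two cells attached to the first and last edges of such a subpath violate Definition~\ref{def:orthogonal-convex}; its converse shows that a horizontal gap with occupied endpoints would force the subword $d_3d_2^{+}d_1$ or $d_1d_4^{+}d_3$, which the regular expression forbids. You instead first establish unimodality of the four extent functions $l,r,b,t$ --- which is, in substance, Proposition~\ref{prop:columns-rows}, proved in the paper as a separate later statement --- and then read off the four-staircase decomposition of the boundary for the forward direction, and count boundary edges per column and per row for the converse (at most one $d_2$- and one $d_4$-edge per column, at most one $d_1$- and one $d_3$-edge per row, which together with hole-freeness forces every column and row to be a single interval). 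Your converse is arguably cleaner and more explicit than the paper's rather terse ``there are two possible ways to achieve this''; the cost is that you must prove the unimodality lemma and handle the pivot/degenerate-phase bookkeeping you already flag.

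There is one genuine, though easily repaired, gap: excluding \emph{strict} local maxima of $l$ does not give unimodality. A sequence such as $1,2,2,1,2,2,1$ has no index $y$ with $l(y)>l(y-1)$ and $l(y)>l(y+1)$, yet it is not weakly-decreasing-then-weakly-increasing. You must also exclude \emph{plateau} local maxima, i.e.\ a maximal constant run $l(y_1)=\cdots=l(y_2)=v$ with $l(y_1-1)<v$ and $l(y_2+1)<v$; the same two-flanking-cells argument applied to rows $y_1-1$ and $y_2+1$ (both contain $(v-1,\cdot)$ by the consecutive-row overlap, while no row in $[y_1,y_2]$ does) yields the required contradiction with column convexity. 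This plateau-inclusive formulation is exactly the ``no three columns'' statement of Proposition~\ref{prop:columns-rows}, so the fix aligns your lemma with what the paper proves there. With that repair, and the boundary-case bookkeeping you acknowledge, your argument goes through.
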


\begin{proof}
We begin by considering the forward direction, starting from a connected orthogonal convex shape $S$. For the first property, the $N_i$ equalities hold for the perimeter of any shape, thus, also for the perimeter of $S$. In the regular expression, the only property that is different from the regular expressions of more general perimeters is that, for all $i\in\{1,2,3,4\}$, $d_{i-2}$, where the index is modulo 4, does not appear between the first and the last appearance of $d_i$.

Assume that it does, for some $i$.

Then $d_{i-2}$ must have appeared immediately after a $d_{i-1}$ or a $d_{i+1}$, because a $d_{i-2}$ can never immediately follow a $d_i$. If it is after a $d_{i-1}$, then this forms the expression $d_i(d_{i-1}\;|\;d_i)^{*}d_{i-1}d_{i-2}$,  
which always has $d_id_{i-1}^{+}d_{i-2}$ as a sub-expression. But for any sub-path of the perimeter defined by the latter expression, the nodes attached to its first and last edges would then contradict Definition \ref{def:orthogonal-convex}, as the horizontal or vertical line joining them goes through at least one unoccupied cell, i.e., one of the cells external to the $d_{i-1}^{+}$ part of the sub-path. The $d_{i+1}$ case follows by observing that, in this case, the sub-expression satisfied by the perimeter would be $d_{i-2}d_{i+1}^{+}d_i$, which would again violate orthogonal convexity of $S$.

The second property, follows immediately by observing that if $(x,y)$ is an empty cell within the perimeter's interior, then the horizontal line that goes through $(x,y)$ must intersect the perimeter at two distinct points, one to the left of $(x,y)$ and one to its right. Thus, these three points would contradict the conditions of Definition \ref{def:orthogonal-convex}.

For the other direction, let $S$ be a shape satisfying both properties. For the sake of contradiction, assume that $S$ is not orthogonal convex, which means that there is a line, w.l.o.g horizontal and of the form $(x_l,y),(x_l+1,y),\ldots,(x_r,y)$, where $(x_l,y)$ and $(x_r,y)$ are occupied by nodes of $S$ while $(x_l+1,y),\ldots,(x_r-1,y)$ are not. Observe first that any gap in the interior would violate the second property, thus $(x_l+1,y),\ldots,(x_r-1,y)$ must be cells in the exterior of the perimeter of $S$ and $(x_l,y)$, $(x_r,y)$ nodes on the perimeter.
There are two possible ways to achieve this: $d_3d_2^+d_1$ and $d_1d_4^+d_3$.
These combinations are impossible to create with the regular expression, thus contradicting that $S$ satisfies the properties. Similarly for vertical gaps.
It follows that any shape fulfilling the two properties must belong to the family of connected orthogonal convex shapes.
\qed
\end{proof}

\begin{figure}
\centering
\includegraphics[width=0.5\textwidth]{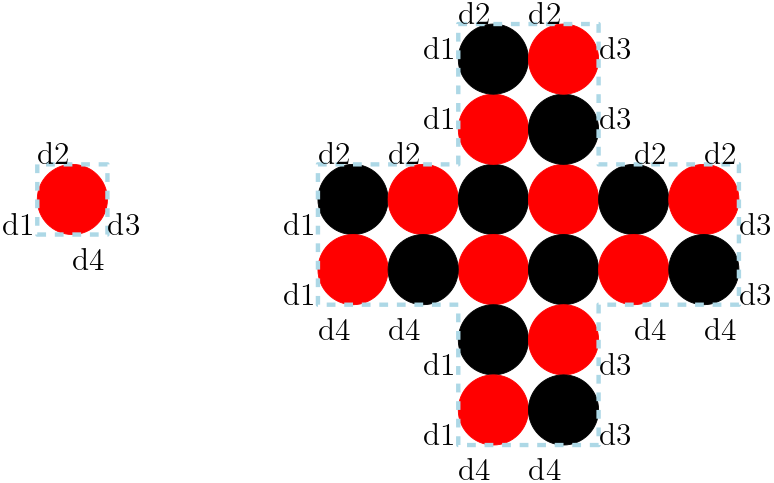}
\caption{An example of two orthogonal convex shapes, with the directions of the perimeter labelled.}
\label{fig:prop-one}
\end{figure}

Let $c_x$ denote the column of a given shape $S$ at the $x$ coordinate, i.e., the set of all nodes of $S$ at $x$. Let $y_{max}(x)$ ($y_{min}(x)$) be the largest (smallest) $y$ value in the $(x, y)$ coordinates of the cells which nodes of a column $c_x$ occupy.

\begin{proposition} \label{prop:columns-rows}
For any connected orthogonal convex shape $S$, all the following are true:
    \begin{itemize}
    \item Every column $c_x$ of $S$ consists of the consecutive nodes\\ $(x,y_{min}(x)),(x,y_{min}(x)+1),\ldots,(x,y_{max}(x))$. 
    \item There are no three columns $c_{x_1}$, $c_{x_2}$, and $c_{x_3}$ of $S$, $x_1<x_2<x_3$, for which both $y_{max}(x_1) > y_{max}(x_2)$ and $y_{max}(x_3) > y_{max}(x_2)$ hold.
    \item There are no three columns $c_{x'_1}$, $c_{x'_2}$, and $c_{x'_3}$ of $S$, $x'_1<x'_2<x'_3$, for which both $y_{min}(x'_1) < y_{min}(x'_2)$ and $y_{min}(x'_3) < y_{min}(x'_2)$ hold.
    \end{itemize}
    All the above hold for rows too in an analogous way.
\end{proposition}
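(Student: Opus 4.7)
My plan is to prove the three column bullets directly and then obtain the row statement via the grid symmetry $x \leftrightarrow y$, which preserves both connectedness and orthogonal convexity. Property~1 is essentially immediate: the cells $(x, y_{min}(x))$ and $(x, y_{max}(x))$ both lie in $S$ by definition, so applying the vertical case of Definition~\ref{def:orthogonal-convex} to this pair forces every intermediate cell of $c_x$ to be occupied. Property~3 for $y_{min}$ is the mirror image of Property~2 for $y_{max}$ (reflect the shape across a horizontal line), so the only substantive work is Property~2.

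For Property~2 I argue by contradiction, assuming $x_1 < x_2 < x_3$ with $y_{max}(x_1) > y_{max}(x_2)$ and $y_{max}(x_3) > y_{max}(x_2)$. Set $h = y_{max}(x_2)+1$, so $(x_2, h) \notin S$. The goal is to exhibit cells $(x_L, h), (x_R, h) \in S$ with $x_L < x_2 < x_R$, which would contradict the horizontal case of Definition~\ref{def:orthogonal-convex}. The naive candidates $(x_1, h)$ and $(x_3, h)$ may fail to be in $S$ because $y_{min}(x_1)$ or $y_{min}(x_3)$ could in principle exceed $h$, so a more careful choice of $x_L, x_R$ is needed.

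The extra ingredient is connectivity of $S$. Since $c_{x_1}$ and $c_{x_3}$ are both nonempty and $S$ is edge-connected, every column $c_x$ with $x_1 \le x \le x_3$ must be nonempty (any empty intermediate column would separate left from right), and any two horizontally adjacent nonempty columns must share a common row (otherwise no horizontal edge links them and the two halves are again disconnected). I would then define $x_L = \max\{x < x_2 : y_{max}(x) \ge h\}$, which is well-defined because $x_1$ qualifies. By the maximality of $x_L$ we have $y_{max}(x_L+1) < h$, so the row shared between $c_{x_L}$ and $c_{x_L+1}$ lies strictly below $h$; combined with $y_{max}(x_L) \ge h$ this forces $y_{min}(x_L) < h \le y_{max}(x_L)$, and Property~1 then gives $(x_L, h) \in S$. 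The symmetric choice $x_R = \min\{x > x_2 : y_{max}(x) \ge h\}$ delivers $(x_R, h) \in S$ by the same reasoning, closing the contradiction.

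The step I expect to be the main obstacle is the connectivity observation that two horizontally adjacent nonempty columns must share at least one common row; this is the only place where the global hypothesis on $S$ is genuinely used, and everything else is local propagation from $x_1$ and $x_3$ towards $x_2$.
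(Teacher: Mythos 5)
Your proposal is correct, and at its core it takes the same route as the paper: derive a contradiction with the horizontal case of Definition~\ref{def:orthogonal-convex} by exhibiting two occupied cells in a common row on either side of an empty cell above $(x_2,y_{max}(x_2))$, with the first bullet being immediate from vertical convexity, the third by symmetry, and the row statements by exchanging the roles of $x$ and $y$. The difference is in how the two occupied endpoints are produced. The paper assumes w.l.o.g.\ $y_{max}(x_3)\leq y_{max}(x_1)$ and uses the horizontal line at height $y_{max}(x_3)$ joining $(x_1,y_{max}(x_3))$ and $(x_3,y_{max}(x_3))$; this tacitly assumes $(x_1,y_{max}(x_3))\in S$, i.e.\ that $y_{min}(x_1)\leq y_{max}(x_3)$, which is exactly the issue you flag about the ``naive candidates.'' Your repair --- every intermediate column is nonempty and horizontally adjacent columns must have overlapping $y$-intervals (both forced by edge-connectivity), so $x_L=\max\{x<x_2: y_{max}(x)\geq h\}$ with $h=y_{max}(x_2)+1$ satisfies $y_{min}(x_L)<h\leq y_{max}(x_L)$ and hence $(x_L,h)\in S$ by the first bullet --- closes this gap cleanly, at the cost of invoking connectivity, which the paper's version of the argument never uses explicitly. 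So your proof is a slightly longer but more airtight version of the same argument; the interval-overlap observation you single out as the main obstacle is indeed the only genuinely new ingredient relative to the paper.
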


\begin{proof}
For the first property, observe that any discontinuity would violate vertical convexity of column $c_x$ of $S$, thus, vertical convexity of $S$. Next, assume that the second property does not hold, that is, that there are columns $c_{x_1}$, $c_{x_2}$, and $c_{x_3}$ of $S$, $x_1<x_2<x_3$, for which both $y_{max}(x_1) > y_{max}(x_2)$ and $y_{max}(x_3) > y_{max}(x_2)$ hold true. Let w.l.o.g. $y_{max}(x_3)\leq y_{max}(x_1)$. Then the horizontal line joining $(x_3,y_{max}(x_3))$ and $(x_1,y_{max}(x_3))$ passes through an empty cell above $(x_2,y_{max}(x_2))$, thus contradicting orthogonal convexity of $S$. A symmetrical argument holds for the third property. The proof for rows is identical, by rotating the whole system 90\textdegree.
\qed
\end{proof}

\begin{lemma}\label{lem:maximum-difference}
For all $n\geq 3$, the maximum colour-difference of a connected horizo\-ntal-vertical convex shape of size $n$ is $n-2\lfloor n/3 \rfloor$.
\end{lemma}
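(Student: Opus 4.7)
The plan is to prove matching upper and lower bounds on the colour-difference. For the upper bound $|b(S)-r(S)|\leq n-2\lfloor n/3\rfloor$, equivalently the claim that the minority colour of any connected orthogonal convex shape $S$ of size $n$ occupies at least $\lfloor n/3\rfloor$ cells, I would proceed by strong induction on $n$. The base cases $n\in\{3,4,5\}$ would be dispatched by direct enumeration of the (few) connected orthogonal convex shapes of those sizes and checking the bound on each.

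For the inductive step with $n\geq 6$, the key claim is that $S$ contains three mutually edge-connected cells $T$ --- either a straight $3$-segment or an L-tromino --- whose removal leaves a connected orthogonal convex shape $S'$ of size $n-3$. To locate $T$ I would inspect the top of the leftmost column $u=(x_1, b_{x_1})$, and split into cases using the unimodality of $y_{max}(x)$ and $y_{min}(x)$ granted by Proposition~\ref{prop:columns-rows}. If the neighbouring column $c_{x_1+1}$ reaches height $b_{x_1}$, then $u$, the cell below $u$, and the cell immediately to its right form an L-tromino; otherwise $b_{x_1}$ is a strict local maximum of $y_{max}$, the leftmost column has length at least $2$, and one either takes the top three cells of $c_{x_1}$ as a straight segment (when $\ell_{x_1}\geq 3$) or an L-tromino bending rightwards through $(x_1+1,a_{x_1})$ (when $\ell_{x_1}=2$; the cell $(x_1+1,a_{x_1})$ exists by the column-overlap forced by connectivity). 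A direct check shows that, in each case, removal preserves both the unimodality required by Proposition~\ref{prop:columns-rows} and the overlap of every pair of consecutive columns, hence preserves orthogonal convexity and connectivity. Since any three edge-adjacent cells of the checkerboard grid contain cells of both colours, $T$ contributes at least one cell of $S$'s minority colour, and combining with the inductive hypothesis applied to $S'$ gives $m(S)\geq\lfloor(n-3)/3\rfloor+1=\lfloor n/3\rfloor$.

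For the matching lower bound I would exhibit an explicit family of extremal shapes. The plus pentomino already attains colour-difference $3=5-2\lfloor 5/3\rfloor$ at $n=5$; for general $n$ one takes a central vertical column of suitable odd length together with a controlled set of side-pegs and arm-extensions, placed so as to respect the unimodality of Proposition~\ref{prop:columns-rows}, so that every group of three successive additions contributes a $2{:}1$ colour imbalance in the same direction. A short direct verification confirms that this construction realises colour-difference exactly $n-2\lfloor n/3\rfloor$ for every $n\geq 3$.

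The main obstacle I anticipate is the case analysis for the removable triple in the inductive step: in boundary configurations where the natural corner triple would either split a neighbouring column into two non-consecutive pieces or leave the leftmost column empty without an adequate replacement, one has to either invoke the row-analogue of Proposition~\ref{prop:columns-rows} at a different corner of $S$, or reorient $T$ to absorb the problematic cells of the adjacent column; verifying that one of these options is always available is what makes the argument delicate.
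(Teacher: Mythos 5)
Your inductive step rests on a structural claim that is false: it is not true that every connected orthogonal convex shape with $n\geq 6$ contains a connected triple of cells whose removal leaves a connected (let alone orthogonal convex) shape. Take the plus-shaped cross of size $9$: a centre cell with four arms of two cells each (every row and column is a contiguous segment, so it is orthogonal convex). Any connected $3$-subset of this shape must contain the centre, since no arm has three cells; but removing the centre together with at most two other cells leaves at least two untouched arms as separate components. So no removable tromino $T$ exists, and neither reorienting $T$ nor moving to a different corner --- the escape hatches you anticipate --- can help, because the obstruction is symmetric in all four corners. Note also that you cannot relax $T$ to an arbitrary (disconnected) $3$-subset, since your colour-counting step, that $T$ meets both colour classes, is exactly the property guaranteed by connectedness of the triple. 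The case split you do describe has smaller problems too (e.g.\ ``the cell below $u$'' need not exist when the leftmost column is a singleton), but the cross shows the induction cannot be repaired locally: thin, branching shapes simply do not admit the decrement-by-three decomposition. Your lower-bound construction is also left essentially unverified for general $n$; the extremal shape is the diagonal line-with-leaves of Figure~\ref{fig:diagonal-line-with-leaves}, and exhibiting it explicitly is needed to close that direction.

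For comparison, the paper avoids induction entirely and argues column by column: Proposition~\ref{prop:columns-rows} makes every column a contiguous segment, so even-length columns contribute $0$ to the colour-difference and odd-length columns contribute $\pm 1$; a ratio analysis then shows an internal single-node column of the majority colour costs $5$ nodes per unit of imbalance while a length-$3$ column costs $3$, so the optimum packs $\lfloor n/3\rfloor$ length-$3$ columns plus at most two terminal singletons. That global counting argument handles shapes like the cross without needing any removable substructure, which is precisely where your approach breaks down.
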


\begin{proof}
We shall perform a column-based analysis of the maximum colour-differe\-nce of a shape $S$, assuming w.l.o.g. that the majority colour is black. By Proposition \ref{prop:columns-rows}, every column is a consecutive sequence of nodes. This implies that every even-length column has an equal number of blacks and reds, thus does not contribute to the colour-difference of $S$. It also implies that every odd-length column can contribute at most 1 ($-1$) to the colour-difference and a contribution of 1 ($-1$, resp.) happens iff that column starts and ends with a black (red, resp.), including the case of single-node columns. As a consequence, for a shape to maximise its colour-difference it must be maximising the number of black-parity odd-length columns while minimising the number of red-parity odd-length columns.

Consider any internal (i.e., which is not the leftmost or the rightmost) black-parity column $c_x$ of $S$ of length 1. Due to connectivity of $S$, the single black node $(x,y)$ forming $c_x$ must have the red neighbours $(x-1,y)$ and $(x+1,y)$. Note now that $c_{x-1}$ and $c_{x+1}$ cannot both have nodes above $y$ nor both below $y$, as any of these would violate horizontal convexity of $S$. If only one of these two columns has additional nodes, then the contribution to the colour-difference by these 3 columns is 1 by using 5 nodes. If both columns have additional nodes, then let w.l.o.g. $c_{x-1}$ have nodes above $y$ and $c_{x+1}$ below $y$. Then, again, the best contribution to the colour-difference is 1 by using 5 nodes, obtained by adding one black to each column. Adding more nodes to any of these cases cannot improve the $1/5$ ratio.

Next, over all columns of odd length at least 3, the maximum contribution is obtained by the length-3 column $(black,red,black)$, which contributes to the colour-difference 1 per 3 nodes.

Consequently, given $n\geq 3$ nodes, a shape maximising the colour-difference is the one consisting of $\lfloor n/3 \rfloor$ columns of length 3 and $n-3\lfloor n/3 \rfloor\leq 2$ terminal single-node columns, for a maximum colour-difference of $\lfloor n/3 \rfloor+n-3\lfloor n/3 \rfloor =n-2\lfloor n/3 \rfloor$, as required. This shape, which we call the \emph{diagonal line-with-leaves}, is depicted in Figure \ref{fig:diagonal-line-with-leaves}.
\qed
\end{proof}

\FloatBarrier

\begin{figure}
\centering
\includegraphics[width=0.4\textwidth]{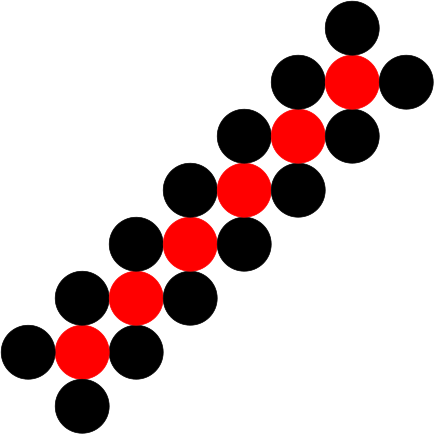}
\caption{The diagonal line-with-leaves shape on $n$ nodes, consisting of $k$ red nodes and $2k$ black nodes arranged in $\lfloor n/3 \rfloor$ columns of length 3, plus at most $2$ terminal single-node columns at the left and right ends.}
\label{fig:diagonal-line-with-leaves}
\end{figure}

\FloatBarrier

A \emph{staircase} is a shape of the form $(x,y), (x+1,y), (x+1,y+1), (x+2,y+1), \ldots$ or $(x,y), (x,y+1), (x+1,y+1), (x+1,y+2), \ldots$. An \emph{extended staircase} is a staircase $Stairs=\{(x_l,y_d), (x_l,y_d+1), (x_l+1,y_d+1), (x_l+1,y_d+2), \ldots\}$ with a bicolour pair at $(x_l-1,y_d)$, $(x_l-1,y_d+1)$ or at $(x_l-1,y_d-1)$, $(x_l-1,y_d)$. Additionally, there are three optional \emph{node-repositories}, $BRep$, $RRep$ and a single-black repository. $BRep=\{(x_l,y_d+2),(x_l+1,y_d+3),(x_l+2,y_d+4),\ldots\}$, $RRep=\{(x_l,y_d-1),(x_l+1,y_d),(x_l+2,y_d+1),\ldots\}$ and the single-black repository at $(x_l-2,y_d)$.

\subsection{Elimination and Generation Sequences}

For convenience, we define $E(S) = \{R_1, R_p, C_1, C_q\}$ as the set containing the first and last rows and columns of a given shape $S$ (omitting $S$ when clear from context), called \emph{terminal} rows/columns, and $adjacent\colon E \to E'$, where $E'=\{R_2,R_{p - 1},C_2,C_{q - 1}\}$, as a function mapping $R_1$ to $R_2$, $R_p$ to $R_{p - 1}$, $C_1$ to $C_2$ and $C_q$ to $C_{q - 1}$.

Recall that, by Proposition \ref{prop:columns-rows}, every row/column of a horizontal/vertical convex shape is a line.

Let $S$ be a connected orthogonal convex shape. A \emph{shape elimination sequence} $\sigma=(u_1, u_2, \ldots, u_n)$ of $S$ is a permutation of the nodes of $S$ satisfying the following properties. Let $S_t=S_{t-1}\setminus \{u_t\}$, where $1\leq t\leq n$ and $S_0=S$. Observe that $S_n$ is always the empty shape. The first property is that, for all $1\leq t\leq n-1$, $S_t$ must be a connected orthogonal convex shape. Moreover, for all $1\leq t\leq n$, $u_t$ must be a node on the external surface of $S_{t-1}$. Essentially, $\sigma$ defines a sequence $S=S_0[u_1]S_1[u_2]S_2[u_3]\ldots S_{n-1}[u_n]S_n=\emptyset$, where, for all $1\leq t\leq n$, a connected orthogonal convex shape $S_t$ is obtained by removing the node $u_t$ from the external surface of the shape $S_{t-1}$.

A \emph{row elimination sequence} $\sigma$ of $S$ is an elimination sequence of $S$ which consists of $p$ sub-sequences $\sigma=\sigma_1\sigma_2\ldots\sigma_{p}$, each sub-sequence $\sigma_i$, $1\leq i\leq p$, satisfying the following properties. Sub-sequence $\sigma_i$ consist of the $k=|R_i|$ nodes of row $R_i$, where $u_1,u_2,\ldots,u_{k}$ is the line formed by row $R_i$. Additionally, $\sigma_i$ is of the form $\sigma_i=\sigma_i^1\sigma_i^2$, where (i) $\sigma_i^1=(u_1,\ldots,u_{k})$ or $\sigma_i^1=(u_{k},\ldots,u_{1})$ and $\sigma_i^2$ is empty or (ii) there is a $u_j\in R_i$, for $2\leq j < k$, such that $\sigma_i^1=(u_1,\ldots,u_{j})$ and $\sigma_i^2=(u_{k},\ldots,u_{j+1})$ or (iii) there is a $u_j\in R_i$, for $1\leq j < k-1$, such that $\sigma_i^1=(u_{k},\ldots,u_{j+2})$ and $\sigma_i^2=(u_{1},\ldots,u_{j+1})$. We shall call any such sub-sequence $\sigma_i$ an elimination sequence of row $R_i$. A \emph{column elimination sequence} of $S$ can be obtained by rotating the whole system by 90\textdegree.

Given a connected orthogonal convex shape $S$ of $n$ nodes, a \emph{shape generation sequence} $\sigma=(u_1, u_2, \ldots, u_n)$ of $S$ is a permutation of the nodes of $S$ satisfying the following properties. Let $S_t=S_{t-1} \cup \{u_t\}$, where $1\leq t\leq n$ and $S_0 = \emptyset$. Observe that $S_n = S$. Any shape generation sequence also satisfies the following properties, which it shares with the shape elimination sequence. The first property is that, for all $1\leq t\leq n-1$, $S_t$ must be a connected orthogonal convex shape. Moreover, for all $1\leq t\leq n$, $u_t$ must be placed in the cell perimeter of $S_{t-1}$. Essentially, $\sigma$ defines a sequence $\emptyset=S_0[u_1]S_1[u_2]S_2[u_3]\ldots S_{n-1}[u_n]S_n=S$, where, for all $1\leq t\leq n$, a connected orthogonal convex shape $S_t$ is obtained by adding the node $u_t$ to the cell perimeter of $S_{t-1}$.

Let $S$ be an extended staircase of $n$ nodes. An \emph{extended staircase generation sequence} $\sigma=(u_1, u_2, \ldots, u_n)$ of $S$ is a generation sequence of $S$ which consists of $q$ sub-sequences $\sigma=\sigma_1\sigma_2\ldots\sigma_{q}$, where each $\sigma_i$ contains the nodes of the column $C_i$ of $S$, ordered such that they do not violate the properties of a shape generation sequence. A \emph{diagonal line-with-leaves generation sequence} is an \emph{extended staircase generation sequence} where the repository of the constructed extended staircase is $\emptyset$.

\begin{lemma}\label{lem:remove-multiple-nodes}
Every connected orthogonal convex shape $S$ has a row (and column) elimination sequence $\sigma$.
\end{lemma}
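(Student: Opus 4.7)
The plan is to induct on the number of rows $p$ of $S$, producing $\sigma=\sigma_1\sigma_2\ldots\sigma_p$ in which $\sigma_1$ eliminates the top row $R_1$ and the tail is given by the inductive hypothesis applied to $S\setminus R_1$. The base case $p=1$ is immediate: $S=R_1$ is itself a horizontal line and the strict left-to-right order fits case (i).

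For the inductive step, write $R_1=\{(a,y_1),\ldots,(b,y_1)\}$ using Proposition \ref{prop:columns-rows}, and let $L=\{x\in[a,b]:(x,y_1-1)\in S\}$ index the columns of $R_1$ that are supported from below. By horizontal convexity of the row at height $y_1-1$ and since $L\subseteq[a,b]$, $L$ is either empty (precisely when $p=1$) or an interval $[a',b']$ with $a\le a'\le b'\le b$. The guiding geometric principle is that only an endpoint of the current top row may be removed (otherwise a row-gap appears, violating horizontal convexity), and a supported endpoint must not be removed while an overhang still dangles on its side (otherwise that overhang becomes disconnected from the rest of $S$). I match this principle to the syntactic forms (i)--(iii) by cases on $L$ and the overhangs: no right overhang uses case (i) left-to-right; no left overhang uses case (i) right-to-left; both overhangs present with $|L|\ge 2$ uses case (ii) with pivot $j=a'-a+1$, peeling the left overhang together with $(a',y_1)$ first and then sweeping the right side inward; both overhangs present with $L=\{x_0\}$ a single interior column uses case (ii) with $j=x_0-a$ when the left overhang has at least two nodes and case (iii) with $j=1$ when it has exactly one, leaving $x_0$ as the final removal in both sub-cases.

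For every intermediate configuration I verify three invariants: the removed node lies on the external surface (immediate, as it is always a row-endpoint), the shape stays connected (guaranteed by the case choice, which always leaves a supported node attached to the remaining top row), and it stays orthogonal convex. For the last invariant I appeal to Proposition \ref{prop:regular-expression}: rows shrink into consecutive lines, each column only loses its top, and the $y_{max}$ profile stays unimodal because nodes are stripped only from endpoints. The same profile analysis then shows that $S\setminus R_1$ is a connected orthogonal convex shape on $p-1$ rows, closing the induction. The column elimination sequence is obtained by rotating the system $90\degree$ and applying the row version to the rotated shape. The main obstacle I anticipate is the bookkeeping in the small cases $|R_1|\le 2$ or $|L|\le 1$, where the tight index bounds $2\le j<k$ in case (ii) and $1\le j<k-1$ in case (iii) interact delicately with the geometric requirement of preserving a support up to the very last removal.
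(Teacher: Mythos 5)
Your proof is correct and follows essentially the same strategy as the paper's: peel a terminal row from its endpoints inward, keeping a node adjacent to the neighbouring row until last (the paper observes, as you do, that the supported nodes form a consecutive interval), and recurse on the remaining shape. The only differences are cosmetic --- you work top-down where the paper works bottom-up, and you match the sequence to forms (i)--(iii) and verify the index bounds more explicitly than the paper does.
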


\begin{proof}
Let $R_1$ be the bottom-most row of $S$, $u_1,u_2,\ldots,u_{k}$ being the line formed by row $R_1$. It is sufficient to prove that there is an elimination sequence $\sigma_1$ of $R_1$, as this can then be applied repeatedly to each subsequent bottom-most row $R_i$, $2\leq i\leq p$, until $S$ becomes empty, $\sigma$ then being obtained by $\sigma=\sigma_1,\sigma_2,\ldots,\sigma_{p}$. 

If there is a single node $u_j$ in $R_1$ which is adjacent to a node $v$ in $R_2$, then, if $2\leq j\leq k-1$, $\sigma_1=(u_1,\ldots,u_{j},u_{k},\ldots,u_{j+1})$ is an elimination sequence of $R_1$ and, if $j=1$ or $j=k$ the same holds for $\sigma_1=(u_{k},\ldots,u_{1})$ and $\sigma_1=(u_1,\ldots,u_{k})$, respectively. This holds because, in all these cases, only removing $u_{j+1}$ before the last step in the sequence could disconnect the shape, thus, connectivity is preserved. Moreover, orthogonal convexity is not violated by any removal as this would contradict either the assumption that $R_1$ is bottom-most or the fact that nodes are only removed from the current endpoints of $R_1$. 

Finally, observe that if multiple nodes in $R_1$ are adjacent to distinct nodes in $R_2$, then these must necessarily be consecutive, otherwise orthogonal convexity would be violated in $R_2$. Setting any of those nodes of $R_1$ as the $u_{j+1}$ of the previous case, will again give elimination sequences of $R_1$.
\qed
\end{proof}

\begin{lemma}\label{lem:hvconvex-parity-bound}
For any connected orthogonal convex shape $S$ of $n$ nodes, given a row elimination sequence $\sigma$ of $S$ and a diagonal line-with-leaves generation sequence $\sigma'$ of a fixed parity which is colour-order preserving w.r.t $\sigma$, the maximum imbalance of any prefix of size $m \leq n$ of $\sigma'$ is at most $O(2m/3)$.
\end{lemma}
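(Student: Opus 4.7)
The plan is to reduce the claim to a direct application of Lemma~\ref{lem:maximum-difference}, exploiting the key structural property that every prefix of $\sigma'$ is, by definition of a shape generation sequence, itself a connected orthogonal convex shape.

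First, I would set up notation. For each $1 \leq m' \leq m$, let $L_{m'} := \{u_1, u_2, \ldots, u_{m'}\}$ denote the shape formed by the first $m'$ nodes placed by $\sigma'$. By the defining property of a shape generation sequence (which the diagonal line-with-leaves generation sequence inherits), $L_{m'}$ is a connected orthogonal convex shape. Since the nodes of the size-$m'$ prefix of $\sigma'$ are precisely the elements of $L_{m'}$, the imbalance of this prefix is exactly the colour-difference $|b(L_{m'}) - r(L_{m'})|$.

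I would then apply Lemma~\ref{lem:maximum-difference} to $L_{m'}$: for every $m' \geq 3$, it yields $|b(L_{m'}) - r(L_{m'})| \leq m' - 2\lfloor m'/3 \rfloor$, and the elementary inequality $2\lfloor m'/3 \rfloor \geq 2m'/3 - 2$ gives $|b(L_{m'}) - r(L_{m'})| \leq m'/3 + 2$. The small cases $m' \in \{1,2\}$ are handled directly: connectivity together with the chessboard colouring of the grid forces any two adjacent nodes to have opposite colours, so the imbalance is at most $1$. Taking the maximum over $m' \leq m$ gives the overall bound $m/3 + O(1)$, which is subsumed by $O(2m/3)$.

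No serious obstacle appears to arise. Notably, neither the row elimination sequence $\sigma$ nor the colour-order preserving condition enters the argument for the bound itself; these hypotheses serve only as context, guaranteeing that a valid $\sigma'$ exists in the first place (presumably established by a separate argument constructing $\sigma'$ from $\sigma$). The only nontrivial check is that $L_{m'}$ is orthogonal convex, which is immediate from the definition of a shape generation sequence given earlier.
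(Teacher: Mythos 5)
Your proposal is correct, but it takes a genuinely different route from the paper. The paper's proof simply exhibits what it takes to be the worst case: it assumes the source shape is a red-parity diagonal line-with-leaves whose columns each contain three nodes, observes that the colour-order-preserving $\sigma'$ then delivers two reds per black, and concludes the $O(2m/3)$ bound from that single scenario, without arguing that no other shape or prefix can do worse. You instead observe that every prefix $L_{m'}$ of $\sigma'$ is, by the definition of a shape generation sequence, itself a connected orthogonal convex shape, and then invoke Lemma~\ref{lem:maximum-difference} to bound its colour-difference by $m'-2\lfloor m'/3\rfloor\leq m'/3+2$, handling $m'\in\{1,2\}$ by connectivity. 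This is tighter and more rigorous: it covers all prefixes of all admissible $\sigma'$ uniformly rather than one extremal instance, and it makes explicit that the hypotheses about $\sigma$ and colour-order preservation are not needed for the bound itself. One small caveat: the paper never defines ``imbalance,'' and its proof reads as if it may be counting non-parity nodes (giving $2m/3$) rather than the colour-difference (giving $m/3$); your reading is the natural one, and in either reading your estimate suffices, since $|b-r|\leq m'/3+2$ forces the non-parity count to be at most $2m'/3+1$, so the discrepancy is only in which constant sits inside the $O(\cdot)$. It would be worth stating which quantity you are bounding before applying Lemma~\ref{lem:maximum-difference}.
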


\begin{proof}
Assume w.l.o.g that the parity of $\sigma$ is black. If $S$ is a diagonal line-with-leaves with the red parity, where each column in $S$ has $3$ nodes, then every prefix of $\sigma'$ of $m$ nodes will have an imbalance of $2$ red nodes for every black node for all $m$ nodes, leading to the maximum imbalance of $O(2m/3)$.
\qed
\end{proof}

\begin{lemma}\label{lem:dll-parity-bound}
For any diagonal line-with-leaves generation sequence $\sigma$ generating a shape $S$ with a fixed parity column by column, for any sub-sequence $\sigma'$ which is a prefix of $\sigma$, the number of non-parity nodes in $\sigma'$ cannot exceed the number of parity nodes by more than 2.
\end{lemma}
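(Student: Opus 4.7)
The plan is to exploit the column-by-column structure of a diagonal line-with-leaves generation sequence, which, by definition, is an extended staircase generation sequence with all repositories empty. I would write $\sigma=\sigma_1\sigma_2\cdots\sigma_q$, where each $\sigma_i$ generates a single column $C_i$ of $S$, and then track the running imbalance $\Delta(\sigma') := \#\{\text{non-parity in }\sigma'\}-\#\{\text{parity in }\sigma'\}$ as nodes are appended in order.

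The first observation is that, with all repositories empty, every column $C_i$---including the bicolour pair at $x_l-1$ and every staircase column of the form $(x_l+k,y_d+k),(x_l+k,y_d+k+1)$---consists of exactly two vertically adjacent cells. By the chessboard colouring of the grid, these two cells necessarily have opposite colours, so each $C_i$ contributes exactly one parity node and one non-parity node. Hence the cumulative imbalance after completing any number of columns equals exactly $0$.

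Now consider an arbitrary prefix $\sigma'$ of $\sigma$ and decompose it as $\sigma'=\sigma_1\cdots\sigma_{i-1}\tau$, where $\tau$ is a (possibly empty) prefix of $\sigma_i$. By the preceding paragraph, the contribution of $\sigma_1\cdots\sigma_{i-1}$ to $\Delta$ is $0$. Since $\sigma_i$ contributes only two nodes of opposite colour, the intra-column contribution $\Delta(\tau)$ is $0$ when $\tau$ is empty or equals $\sigma_i$ in full, $+1$ if the non-parity node of $C_i$ was appended first, and $-1$ otherwise. In every case $\Delta(\sigma')\leq 1$, which is strictly less than the $2$ claimed by the lemma.

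The only point needing care is the verification that each $C_i$ really has exactly two nodes of opposite colour; this is immediate from the definition of the extended staircase combined with the fixed chessboard colouring. Because the desired bound holds for \emph{either} admissible intra-column ordering, no case analysis of the connectivity and orthogonal-convexity constraints governing the order within a column is actually required, and the slack between $1$ and the stated bound $2$ leaves room for later extensions that allow non-empty repositories.
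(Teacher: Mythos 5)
Your argument rests on the claim that every column of a diagonal line-with-leaves consists of exactly two vertically adjacent cells, so that each completed column is colour-balanced and only the active column can contribute an imbalance of $\pm 1$. That premise contradicts the paper's own definition of the shape: a diagonal line-with-leaves explicitly admits 1-node terminal columns and ``additional nodes above each column, making them into 3-node columns'' --- these extra nodes are precisely the ``leaves,'' and Figure \ref{fig:diagonal-line-with-leaves} depicts the canonical instance as $\lfloor n/3\rfloor$ columns of length $3$ (pattern $black,red,black$) with $2k$ blacks against $k$ reds. Indeed, if all columns really were balanced bicolour pairs, the shape would have zero colour difference, which is incompatible with the lemma's hypothesis of ``a fixed parity column by column'' and with the shape's role in Lemma \ref{lem:maximum-difference} as the maximiser of colour difference. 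So the central counting step of your proof is applied to the wrong shape.

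The conclusion is salvageable, but you have to argue it for the actual column structure: a column of fixed (say black) parity of length $1$, $2$, or $3$ contains at most one red and at least as many blacks as reds, so every \emph{completed} column contributes a non-positive amount to $\Delta$, and the single red of the active column, if placed before its blacks, contributes at most $+1$; since the generation sequence finishes each $\sigma_i$ before starting $\sigma_{i+1}$, this yields $\Delta\le 1\le 2$. Note that this is a genuinely different route from the paper, which does not count column lengths at all: it argues via connectivity that every red in the partial shape must have a black neighbour within its column, and that only $C_1$ and $C_q$ can each harbour one otherwise-unmatched red, giving the stated bound of $2$. The fact that your count lands at $1$ while the paper's argument reserves two slack slots should have prompted you to re-examine which columns can actually occur --- in particular the bicolour pair at $x_l-1$ and the terminal single-node columns, neither of which your two-cell template describes.
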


\begin{proof}
Assume that there is such a $\sigma'$, constructing a diagonal line-with-leaves $S'$ of $q$ columns $C_1, C_2, \ldots, C_{q}$. It must be the case that, in the process of constructing $S$, the shape generation sequence generates the shape constructed by $\sigma'$. We assume w.l.o.g. that the parity of $\sigma$ (and by extension $\sigma'$) is black. Therefore, each column $C_i$ in $S'$ constructed by $\sigma'$ must have at least one black node neighbouring every red node to preserve connectivity. Therefore, $\sigma'$ has two possible locations to store additional red nodes without increasing the number of black nodes: by placing one red node in $C_1$ and by placing another in $C_{q}$. Placing any more red nodes violates the structure of a black parity diagonal line-with-leaves by making the lowest node in any $C_i$ the non-parity colour, and is therefore impossible.
\qed
\end{proof}

For the next proof, we ignore the trivial shape of a node surrounded by four other nodes.

\begin{lemma}\label{lem:res-subset-bound}
Let $S$ be a connected orthogonal convex shape. Then there is a row (column) elimination sequence of $S$ which has no single-coloured 3-sub-sequence.
\end{lemma}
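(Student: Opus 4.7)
The plan is to argue that \emph{any} row elimination sequence $\sigma=\sigma_1\sigma_2\cdots\sigma_p$ produced by Lemma \ref{lem:remove-multiple-nodes} is already free of single-coloured 3-sub-sequences, so the existence claim needs no further construction. The engine of the argument is the chessboard colouring: every pair of edge-adjacent cells has opposite colours, and hence the colours of consecutive nodes along any row or column of $S$ strictly alternate.

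First I will treat a triple of consecutive entries contained in a single block $\sigma_i$. Each $\sigma_i$ is of the form (i), (ii) or (iii) from the definition of a row elimination sequence, and in all three the first two and the last two entries of $\sigma_i$ are genuine row-neighbours in $R_i$, hence opposite-coloured. The only pair of consecutive entries of $\sigma_i$ that can share a colour is the unique ``turn'' (the jump $u_j\to u_k$ in case (ii), or $u_{j+2}\to u_1$ in case (iii)); but the entries flanking the turn are row-neighbours of the turn endpoints, so the local colour pattern around the turn is $\bar X,X,X,\bar X$ and contains no monochromatic 3-sub-sequence.

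Next, a triple crossing the boundary between $\sigma_i$ and $\sigma_{i+1}$ must contain either the last two entries of $\sigma_i$ or the first two entries of $\sigma_{i+1}$, which by the previous paragraph are row-adjacent and opposite-coloured; this rules out monochromaticity. A triple straddling three blocks can occur only when the middle block has length one, say $\sigma_{i+1}=(u)$, in which case $u$ is the unique connector both downward to $R_i$ and upward to $R_{i+2}$, so the entries immediately before and after $u$ in $\sigma$ are vertical neighbours of $u$ and hence opposite-coloured to $u$. Combining the cases yields the row version, and the column version follows by a $90\degree$ rotation. The main thing to be careful about is the ``turn'' inside cases (ii)/(iii) and the single-node middle-row case; no deeper obstacle arises since the chessboard alternation plus the row-adjacency of the endpoints of each $\sigma_i$ already do the work.
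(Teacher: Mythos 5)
Your argument tries to prove something strictly stronger than the lemma, namely that \emph{every} row elimination sequence is free of single-coloured 3-sub-sequences, and that stronger claim is false. The weak point is the assertion that ``the first two and the last two entries of $\sigma_i$ are genuine row-neighbours in $R_i$.'' In case (iii) of the definition with $j=k-2$, the block is $\sigma_i=(u_k,u_1,u_2,\ldots,u_{k-1})$, so its \emph{first two} entries are the two endpoints of the row; these are not adjacent, and when $|R_i|$ is odd they have the same colour. The dangerous triple is then (last entry of $\sigma_{i-1}$, $u_k$, $u_1$), and the last entry of $\sigma_{i-1}$ is only guaranteed to sit below \emph{some} node of $R_i$ -- not below $u_k$ or $u_1$ -- so nothing forces it to have the opposite colour. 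Concretely, take $S=\{(1,0),(0,1),(1,1),(2,1),(1,2),(1,3)\}$ (a plus sign with a lengthened top arm, which is connected, orthogonal convex, and not the excluded trivial shape). The bottom-up row elimination sequence is forced to begin $\bigl((1,0),(2,1),(0,1),(1,1),\ldots\bigr)$, because the middle node $(1,1)$ is the unique connector of $R_2$ to $R_3$ and only the case-(iii) ordering $(u_3,u_1,u_2)$ of $R_2$ preserves connectivity; the first three nodes all have coordinate-sum parity $1$ and hence the same colour. So your conclusion fails for this sequence, and the lemma must be proved existentially.

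This is exactly where your route diverges from the paper's: the paper's proof is an existence argument that exploits the remaining freedom -- choosing between the $\sigma^1\sigma^2$ orderings, choosing which endpoint to start from, and ``rotating'' to a different elimination direction (for my example, a column elimination sequence works) -- and it must explicitly exclude the one trivial shape (a node surrounded by four others) for which \emph{no} row or column elimination sequence avoids a monochromatic triple. Your proposal uses neither of these devices, which is itself a warning sign: if all sequences worked, the paper's exclusion of the trivial shape would be unnecessary. A second, more minor slip: in your three-block case the entry immediately \emph{after} the singleton $u$ is the first node of $\sigma_{i+2}$, which is an endpoint of $R_{i+2}$ and need not be a vertical neighbour of $u$ (though the entry before $u$ is, which happens to save that particular case). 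To repair the proof you would need to show that whenever the bad case-(iii) configuration with a singleton $\sigma_i^1$ arises and clashes in colour with the preceding block, some alternative choice of block ordering or elimination direction avoids it -- which is the substance of the paper's argument.
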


\begin{proof}
Assume that every row (column) elimination sequence $\sigma$ has such a single-coloured 3-sub-sequence $\sigma^\prime=(u_i, u_{i + 1},$ $u_{i + 2})$. Assume there is a row $R$ of $S$ such that $u_i, u_{i + 1}, u_{i + 2}\in R$. Recall that a row elimination sequence for a given row $R$ is of the form $\sigma^1\sigma^2$ resulting from the partitioning of $R$ into two consecutive lines, where at most one can be empty. It follows that $\sigma^\prime$ cannot be a sub-sequence of $\sigma^1$ or $\sigma^2$ because each is an alternating sequence of colours. So, $\sigma^\prime$ must be spanning the switching point from $\sigma^1$ to $\sigma^2$, sharing a 2-sub-sequence with either the suffix of $\sigma^1$ or the prefix of $\sigma^2$. But that 2-sub-sequence cannot be single-coloured because each of $\sigma^1$ and $\sigma^2$ is an alternating sequence of colours.

Next, we consider the situation where $\sigma^\prime$ spans multiple rows. Note that if $S$ is a series of one node rows, then $\sigma^\prime$ cannot contain nodes belonging to different rows of $S$ because any row elimination sequence must switch colour to move between rows. If there are two rows $R_1$ and $R_2$, then if $R_2$ is even then we can select the colour by selecting between $\sigma^1$ and $\sigma^2$. If $R_2$ is odd, then both sequences can start with the same colour, but because each alternates there cannot be a 3-sub-sequence unless one is immediately followed by the other. This is only possible if $R_2$ is a $3$ node line and there is a third line $R_3$ with one node. Because we ignore the trivial shape, there must be an $R_4$, and by rotating the row elimination sequence we can get a $\sigma$ without $\sigma^\prime$.
\qed
\end{proof}

Given an extended staircase $S$, an \emph{empty slot} is a cell in the cell perimeter of $S$ which can be occupied by a node $u$ such that $T = S \cup \{u\}$ is an extended staircase.\\

We now present an algorithm (see Algorithm \ref{alg:extended-staircase-gen-alg}), which given a row elimination sequence $\sigma$ returns an extended staircase generation sequence $\sigma^\prime$. The algorithm, which is stated in more general terms and works for a larger set of bi-coloured sequences, first constructs a prefix of 4-5 nodes (4 plus an optional repository for a black node) and then extends it by placing nodes on the staircase. If this is not possible, then it places nodes in the repository corresponding to the colour of the node.

The following is an informal description of Algorithm \ref{alg:extended-staircase-gen-alg}. By assumption, it expects the first two nodes of the input sequence to form a bicolour pair, the third node to be black, and no single-colour 3-sub-sequence to ever arrive. The algorithm positions the pair vertically and the black to its right at $(x_l,y_d)$. If the fourth node is black, it goes to an optional single-black repository to the left of the pair and the fifth node must then be red. Otherwise the fourth node is red. In both cases, a red will be placed over the $(x_l,y_d)$ black. Thus, the prefix of the shape constructed by the algorithm always consists of two vertical pairs and the possibility of a black stored at the single-black repository to their left. If the next node is a red it will be stored in the first red repository position at $(x_l,y_{d-1})$. If not, it is a black. In both cases the next black will start a new column to the right and the algorithm has finished the construction of the prefix having reached its invariant configuration. The invariant satisfies the following properties. New columns always start with the placement of a black. The red repository position of that column below the black and the black repository of the previous column are unoccupied at that point. Any nodes that alternate colours keep growing the staircase part of the shape, preserving the above invariant conditions. If two consecutive nodes of the same colour ever arrive, the second of these nodes will be stored to the first available position of the repository corresponding to its colour. This keeps growing a staircase extended with an upper black and a lower red repository. Both repositories are diagonal lines of consecutive nodes attached to the staircase, starting from its bottom left and having no gaps. The current length of the staircase is an upper bound on the length of the red repository and on the length of the black repository plus 1. 

The following assumptions are made by Algorithm \ref{alg:extended-staircase-gen-alg}. The third node is always a black node. This is a necessary technical assumption that we shall later ensure is always satisfied by our transformations. Variables $N_B$, $N_R$ are assumed to be always set to the current \#nodes in the black, red repository, respectively. The single-black repository at $(x_l-2,y_d)$, not counted in $N_B$, stores the fourth node if both the third and the fourth node of the sequence are black.

\begin{algorithm}
\caption{ExtendedStaircase($\sigma$)}\label{alg:extended-staircase-gen-alg}
\begin{algorithmic}
\Require row elimination sequence $\sigma=(u_1,u_2,\ldots,u_n)$
\Ensure extended staircase generation sequence $\sigma^\prime=(u'_1,u'_2,\ldots,u'_n)$ which is colour-order preserving w.r.t. $\sigma$\\

\State $N_B$, $N_R$: current \#nodes in the black and red repository, respectively\\

\If{$c(u_1)=red$ and $c(u_2)=black$} \Comment{1st and 2nd are always a bicolour pair}
    \State $u'_1=(x_l-1,y_d)$, $u'_2=(x_l-1,y_d+1)$
\Else
    \State $u'_1=(x_l-1,y_d-1)$, $u'_2=(x_l-1,y_d)$
\EndIf

\State $u'_3=(x_l,y_d)$ \Comment{Assumption that 3rd is always black}\\
\If{$c(u_4)=black$}
    \State To be stored at the single-black repository
    \State $u'_5=(x_l,y_d+1)$ \Comment{5th must be red}
    \State $i=6$
\Else
    \State $u'_4=(x_l,y_d+1)$
    \State $i=5$
\EndIf\\

\For{all remaining $i\leq n$}
\State If first of new column, then $u'_i=(x_r+1,y_u)$ \Comment{this is always black}
    \If{$c(u_i)\neq c(u_i-1)$}
        \If{$c(u_i)=black$}
            \State $u'_i=(x_r+1,y_u)$
        \Else
            \State $u'_i=(x_r,y_u+1)$
        \EndIf
    \Else
        \If{$c(u_i)=black$}
            \State $u'_i=(x_l+N_B,y_d+N_B+2)$
        \Else
            \State $u'_i=(x_l+N_R,y_d+N_R-1)$
        \EndIf
    \EndIf
\EndFor
\end{algorithmic}
\end{algorithm}

\begin{lemma}\label{lem:approx-dll-gen-bound}
Let $\sigma$ be a bicoloured sequence of nodes that fulfills all the following conditions:
\begin{itemize}
    \item The set of the first two nodes in $\sigma$ is not single-coloured.
    \item The third node of $\sigma$ is black.
    \item $\sigma$ does not contain a single-coloured 3-sub-sequence.
\end{itemize}
Then there is an extended staircase generation sequence $\sigma' = (u'_1, u'_2, \ldots, u'_n)$ which is colour-order preserving with respect to $\sigma$.
\end{lemma}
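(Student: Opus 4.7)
The plan is induction on $i$, maintaining that after Algorithm~\ref{alg:extended-staircase-gen-alg} processes $u_1, \ldots, u_i$ the partial configuration $S'_i$ is a valid partial extended staircase satisfying: (a) colour-order preservation $c(u'_j) = c(u_j)$ for all $j \leq i$; (b) $u'_i$ lies in the cell perimeter of $S'_{i-1}$, so the placement is a legal generation step; and (c) writing $k$ for the current number of staircase columns, $N_R \leq k$ and $N_B \leq k-1$, with both repositories being contiguous diagonal segments anchored at the bottom-left of the staircase.

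The base case reduces to a case analysis on $(c(u_1), c(u_2), c(u_4))$. Because $\{u_1, u_2\}$ is bicoloured, the two initial placements fall on parity-correct cells forming the extension pair at $x = x_l - 1$; because $u_3$ is black, $u'_3 = (x_l, y_d)$ is parity-correct. If $u_4$ is black it goes to the single-black repository at $(x_l - 2, y_d)$, and the no-3-sub-sequence hypothesis applied to $(u_3, u_4, u_5)$ forces $u_5$ to be red, making $u'_5 = (x_l, y_d + 1)$ parity-correct; otherwise $u'_4 = (x_l, y_d + 1)$ directly. In both branches $S'_i$ is a valid extended staircase with $k = 1$ and $N_R = N_B = 0$, so the invariants hold.

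For the inductive step I would split on whether $c(u_i) = c(u_{i-1})$. If the two colours differ, the algorithm appends to the staircase: a black goes to $(x_r+1, y_u)$ as the bottom of a new column (incrementing $k$), and a red to $(x_r, y_u+1)$ as the top of the current rightmost column (leaving $k$ unchanged). In either sub-case the target cell is empty, parity-correct by the chessboard structure, and on the cell perimeter of $S'_{i-1}$ by the diagonal layout. If the two colours coincide, the algorithm places $u'_i$ at the next repository slot $(x_l + N_R, y_d + N_R - 1)$ or $(x_l + N_B, y_d + N_B + 2)$; parity-correctness is immediate from the parity of these diagonal slots, and cell-perimeter membership follows from invariant (c), since the slot is adjacent to the existing repository endpoint and to the corresponding staircase column, which exists by the bound.

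The main obstacle is preserving invariant (c), and this is where the no-single-coloured-3-sub-sequence hypothesis is indispensable. A $(B, B)$ event at positions $(i-1, i)$ forces $u_{i-2} = red$, so $u_{i-1}$ is an alternating $R \to B$ step that increments $k$ immediately before $u_i$ is added to $BRep$; hence $k$ and $N_B$ grow together and $N_B \leq k-1$ is preserved. A $(R, R)$ event does not itself grow $k$, but between any two $R$-repository events there must be an intermediate input node of colour $B$ (else three $R$s would appear in a row), and placing a $B$ following an $R$ always starts a new column, which increments $k$. A direct bookkeeping then shows $N_R \leq k$ holds throughout. Once the induction is completed, $\sigma' = (u'_1, \ldots, u'_n)$ is, by construction, an extended staircase generation sequence colour-order preserving with respect to $\sigma$.
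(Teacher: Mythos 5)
Your proposal is correct and takes essentially the same route as the paper: both proofs run Algorithm~\ref{alg:extended-staircase-gen-alg} on $\sigma$ and argue that the no-single-coloured-3-sub-sequence condition guarantees a repository slot of the needed colour is always available, the paper doing this via an informal slot-counting argument while you make explicit the invariant $N_R \le k$, $N_B \le k-1$ that the paper only states in the prose preceding the algorithm. Your version is in fact the more rigorous of the two, since it pins down exactly why each repository placement lands on the cell perimeter adjacent to an existing staircase column.
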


\begin{proof}
The sequence $\sigma'$ is the one obtained by applying Algorithm \ref{alg:extended-staircase-gen-alg} to $\sigma$. The algorithm begins by placing the first 4 or 5 nodes of $\sigma'$, depending on whether $u_3$ is red or black respectively. The result is a shape-prefix with 4 nodes, possibly with an extra black in the repository, with $2$ empty black slots and $2$ empty red slots neighbouring the nodes in $(x_l, y_d)$ and $(x_l, y_d+1)$.
We now begin to follow the loop of Algorithm \ref{alg:extended-staircase-gen-alg}. When we extend the staircase by one node, this creates a new column with two empty slots for the opposite colour, one in the new column and another in the repository of a third column. When we add a node of that colour to the column, we create two new empty slots for the first colour in the same manner. As a result, the number of empty slots in the repositories only rises as the staircase extends. Therefore, the $3$ node restriction of the second condition for $\sigma$ is the minimum necessary for the worst case where we have only $2$ empty slots, and the $\sigma'$ derived from such a $\sigma$ by the construction algorithm generates an extended staircase as required.
\qed
\end{proof}

\emph{ExtendedStaircase} is an algorithm which creates an extended staircase generation sequence from a row elimination sequence of a connected orthogonal convex shape.

\begin{lemma}\label{lem:approx-dll-prefix-convexity}
For an extended staircase generation sequence $\sigma$ generated by ExtendedStaircase, every shape generated by a prefix of $\sigma$ is orthogonal convex.
\end{lemma}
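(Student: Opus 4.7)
The plan is to prove by induction on the length $i$ of the prefix of $\sigma$ that has been processed that the shape $S_i$ produced by the first $i$ placements of \emph{ExtendedStaircase} is orthogonal convex.

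For the base case, I would directly inspect the constant-size configurations produced by the initial block of the algorithm (for $i \leq 5$): a single node, then a vertical bicolour pair at column $x_l - 1$, then an L-triomino after $u'_3 = (x_l, y_d)$, then either a $2 \times 2$ square or a Z-tetromino (when $u_4$ is red, in Cases A and B of the opening) or an L-extension via the single-black repository (when $u_4$ is black), and finally the corresponding pentomino after $u'_5$. In each of these shapes every row and every column is a contiguous segment with monotone top and bottom profiles, so Definition~\ref{def:orthogonal-convex} is verifiable by inspection.

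For the inductive step, assume $S_{i-1}$ is orthogonal convex and consider the four possible placements: (a) $u'_i = (x_r+1, y_u)$ starting a new rightmost staircase column; (b) $u'_i = (x_r, y_u+1)$ placing a red on top of the current rightmost column; (c) $u'_i = (x_l+N_B, y_d+N_B+2)$ lengthening the black repository; or (d) $u'_i = (x_l+N_R, y_d+N_R-1)$ lengthening the red repository. I would verify orthogonal convexity via Proposition~\ref{prop:columns-rows}: since every placement extends an existing column at its top or bottom, or creates a new singleton column, the columns-are-contiguous-segments property is preserved. For the remaining two properties, I would carry the invariant that the staircase core occupies columns $x_l, \ldots, x_r$ with $y_{min}$ and $y_{max}$ strictly increasing by one per column, that the black-repository nodes form a contiguous diagonal fin extending the tops of columns $x_l, \ldots, x_l + N_B - 1$ by one, that the red-repository nodes symmetrically extend the bottoms of columns $x_l, \ldots, x_l + N_R - 1$, and that $N_B$ and $N_R$ never exceed the current staircase length. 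Under this invariant the $y_{max}$ sequence over $x_l - 2, x_l - 1, x_l, \ldots, x_r$ is non-decreasing and the $y_{min}$ sequence is also non-decreasing, so no interior strict valley nor strict peak can occur.

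The hard part, I expect, will be the boundary transitions. First, in case (a), the newly created rightmost column has $y_{min} = y_{max} = y_u$, strictly above the previous column's $y_{min}$, and this remains safe only because no column lies further to the right, preventing the forbidden triples of Proposition~\ref{prop:columns-rows} from placing it as a middle column. Second, in cases (c) and (d), the transition between the extended portion of the staircase and the still-unextended remainder yields exactly one step where $y_{max}$ or $y_{min}$ stays flat, which is still non-decreasing. Connectivity at each step is immediate since every placement is edge-adjacent to a node of $S_{i-1}$: in cases (a) and (b), to the top of the rightmost column, and in cases (c) and (d), to the top or bottom of column $x_l + N_B$ or $x_l + N_R$, these adjacencies being guaranteed precisely by the invariant that $N_B$ and $N_R$ never exceed the staircase length. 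Combining these observations across all four branches completes the induction.
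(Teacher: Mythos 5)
Your proof is correct in substance and considerably more structured than the paper's, but it takes a different route. The paper's argument is a one-shot structural description: it observes that any prefix shape is a union of four gapless diagonal lines (the two diagonals of $Stairs$ plus $BRep$ and $RRep$), that $Stairs$ never acquires a gap because the algorithm alternates between opening a new column and topping it with the opposite colour, and that the two repositories grow node by node from the leftmost staircase column rightward with lengths bounded by the length of $Stairs$ --- from which the absence of horizontal and vertical gaps (i.e., Definition~\ref{def:orthogonal-convex} directly) is read off. You instead run an explicit induction on the prefix length, check the constant-size openings by inspection, and verify the column conditions of Proposition~\ref{prop:columns-rows} under an invariant describing the column extents. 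Your version buys a genuinely checkable inductive invariant and makes the role of the bound $N_B,N_R\le$ staircase length explicit; the paper's version is shorter but leaves the per-prefix claim (as opposed to the claim about the final shape) largely implicit.

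Two caveats. First, Proposition~\ref{prop:columns-rows} is stated in the paper only as a necessary condition; you use its column conditions as if they were sufficient. The converse does hold for connected shapes (contiguous columns give vertical convexity, and a horizontal gap at $(x,y)$ between occupied $(x_1,y)$ and $(x_2,y)$ forces either a $y_{max}$ valley or a $y_{min}$ peak at the intervening column), but you should state and justify it. Second, your claim that the $y_{min}$ sequence over $x_l-2,x_l-1,x_l,\ldots,x_r$ is non-decreasing is false in common cases: when the bicolour pair sits at $(x_l-1,y_d),(x_l-1,y_d+1)$ and $RRep$ is nonempty, $y_{min}$ drops from $y_d$ at $x_l-1$ to $y_d-1$ at $x_l$, and the other pair position $(x_l-1,y_d-1),(x_l-1,y_d)$ gives a similar drop next to the single-black repository. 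These are local dips, not the forbidden interior peaks, so your conclusion survives, but the monotonicity argument as written does not; replace it with a direct check that no interior column has strictly larger $y_{min}$ than some column on each side.
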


\begin{proof}
Observe that an extended staircase consists of $4$ diagonal lines of nodes: the two diagonals of $Stairs$, and the two nodes which connect to and extend them, $BRep$ and $RRep$. The construction of $Stairs$ never has a gap between nodes as the lines of the algorithm which add nodes to it require the colour of the nodes to alternate and the algorithm alternates between creating a new column and adding another node to it. The diagonal lines $BRep$ and $RRep$ grow node by node from the first column of $Stairs$ to the last. Their sizes are therefore upper bounded by the size of $Stairs$, and there can be no vertical or horizontal gap.
\qed
\end{proof}

\begin{lemma}\label{lem:convex-to-approx}
For any connected orthogonal convex shape $S$ of $n$ nodes, given a row elimination sequence $\sigma=(u_1,u_2,\ldots,u_n)$ of $S$ where the set of the first two nodes in $\sigma$ is not single-coloured and $u_3$ is black, there is an extended staircase generation sequence $\sigma'=(u'_1,u'_2,\ldots,u'_n)$ which is colour-order preserving w.r.t $\sigma$ and such that, for all $1 \leq i \leq |\sigma|$, $D_i = \{u'_1, u'_2, \ldots, u'_i\}$ is a connected orthogonal convex shape.
\end{lemma}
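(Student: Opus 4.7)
The plan is to obtain $\sigma'$ by applying Algorithm \ref{alg:extended-staircase-gen-alg} to $\sigma$ and then to assemble the required properties from the preceding lemmas. Colour-order preservation is immediate from the algorithm itself, since each iteration places $u'_i$ at the step when $u_i$ is processed and assigns it a position in the same colour class as $u_i$ (the checkerboard parity of the placement target matches $c(u_i)$ in every branch of the \textbf{if}-statements).

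The substantive step is to invoke Lemma \ref{lem:approx-dll-gen-bound}. Two of its three hypotheses, namely the bicolour first pair and the black $u_3$, are supplied directly by the hypothesis of the present lemma. The third, absence of a single-coloured 3-sub-sequence in $\sigma$, is arranged via Lemma \ref{lem:res-subset-bound}, combined with a short case analysis on the endpoints of the bottom-most row so that replacing $\sigma$ with the sequence guaranteed by Lemma \ref{lem:res-subset-bound} preserves the bicolour first pair and the blackness of $u_3$ (exploiting the freedom in the orientations of each $\sigma_i = \sigma_i^1\sigma_i^2$ allowed by the definition of a row elimination sequence). With the three preconditions in force, Lemma \ref{lem:approx-dll-gen-bound} yields an $\sigma' = (u'_1, \ldots, u'_n)$ that is an extended staircase generation sequence colour-order preserving w.r.t.\ $\sigma$.

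It remains to argue that every prefix $D_i = \{u'_1, \ldots, u'_i\}$ is a connected orthogonal convex shape. Orthogonal convexity of each $D_i$ follows directly from Lemma \ref{lem:approx-dll-prefix-convexity}, which treats precisely the prefixes produced by Algorithm \ref{alg:extended-staircase-gen-alg} via the diagonal structure of $Stairs$, $BRep$, $RRep$, and the single-black repository. For connectivity I would proceed inductively: the initialisation produces a connected prefix of $4$ or $5$ nodes (the bicolour pair, the black at $(x_l,y_d)$, the red above it, and the optional single-black at $(x_l-2,y_d)$ attached to the red at $(x_l-1,y_d)$); for the inductive step, each subsequent placement attaches to an existing node, since a new column is opened at $(x_r+1,y_u)$ adjacent to $(x_r,y_u)$, an in-column red at $(x_r,y_u+1)$ sits directly above its black, and each repository entry $(x_l+N_B,y_d+N_B+2)$ or $(x_l+N_R,y_d+N_R-1)$ is adjacent to the top or bottom of the $(N_B+1)$-st or $(N_R+1)$-st staircase column, which exists thanks to the invariant that the staircase length upper-bounds the red repository size and the black repository size plus one.

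The main obstacle I anticipate is the preliminary reconciliation with Lemma \ref{lem:res-subset-bound}: one has to verify that after removing single-coloured 3-sub-sequences the first pair is still bicoloured and $u_3$ is still black. This is largely bookkeeping over the parities of the terminal rows and the choice of direction for $\sigma_1^1$, and once resolved the statement reduces to a direct composition of Lemmas \ref{lem:approx-dll-gen-bound} and \ref{lem:approx-dll-prefix-convexity} with the connectivity invariant built into Algorithm \ref{alg:extended-staircase-gen-alg}.
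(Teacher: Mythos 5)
Your proposal follows essentially the same route as the paper's proof: invoke Lemma \ref{lem:res-subset-bound} to rule out single-coloured 3-sub-sequences, apply Lemma \ref{lem:approx-dll-gen-bound} to obtain $\sigma'$ via Algorithm \ref{alg:extended-staircase-gen-alg}, and conclude orthogonal convexity of the prefixes from Lemma \ref{lem:approx-dll-prefix-convexity}. You are in fact somewhat more careful than the paper, which silently treats the existential guarantee of Lemma \ref{lem:res-subset-bound} as if it applied to the given $\sigma$ and omits the explicit connectivity induction you supply.
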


\begin{proof}
By Lemma \ref{lem:res-subset-bound}, $\sigma$ will not have a single-coloured 3-sub-sequence. Therefore, by our assumption about $\sigma$ and Lemma \ref{lem:approx-dll-gen-bound} we have a $\sigma'$. We can then place the nodes of $\sigma'$ as in Algorithm \ref{alg:extended-staircase-gen-alg}. By Lemma \ref{lem:approx-dll-prefix-convexity}, all prefixes $\sigma'_i$ of $\sigma'$ construct an orthogonal convex shape (excluding the black repository), and therefore all $D_i$ are connected orthogonal convex shapes.
\qed
\end{proof}

\begin{observation}
For any connected orthogonal convex shape $S$ of $n$ nodes, if the set of the first two nodes in the row elimination sequence $\sigma=(u_1,u_2,u_3,\ldots,u_n)$ is single-coloured, $u_3$ is black and there is an empty cell $c$ of the opposite colour in the cell perimeter of $S$ such that if $c$ is occupied by $v$ then $S \cup \{v\}$ is an orthogonal convex shape, then $S \cup \{v\} \setminus u_1$ has a row elimination sequence $\sigma'$ where the set of the first two nodes in $\sigma'$ is not single-coloured.
\end{observation}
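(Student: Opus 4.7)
The plan is to proceed in two stages. First I would verify that $S' := (S \cup \{v\}) \setminus u_1$ is a connected orthogonal convex shape, so that Lemma~\ref{lem:remove-multiple-nodes} applies to it. Since $S \cup \{v\}$ is orthogonal convex by hypothesis and $u_1$ lies on the external surface of $S$ (being the first node of a valid row elimination sequence of $S$), I would argue that the single added cell $v$ cannot push $u_1$ into the interior of $S \cup \{v\}$: $u_1$ has at most three $S$-neighbours, and $v$ contributes at most one more, so $u_1$ remains on the external surface of $S \cup \{v\}$ and its removal cannot create an interior gap, preserving orthogonal convexity. Connectivity of $S'$ follows by observing that the OC-preserving placement of $v$ on the cell perimeter of $S$ forces $v$ to have at least one neighbour in $S \setminus \{u_1\}$; the edge case where $u_1$ would be $v$'s only $S$-neighbour can be ruled out using Proposition~\ref{prop:columns-rows}.

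Second, I would construct a row elimination sequence $\sigma'$ of $S'$ whose first two nodes are bicoloured. By Lemma~\ref{lem:res-subset-bound}, I can assume $\sigma$ has no single-coloured 3-sub-sequence, so the hypothesis forces $u_1, u_2$ to be red and $u_3$ to be black. A single-coloured start of $\sigma$ can occur only under two restricted structural configurations: either (a) the bottom row $R_1$ of $S$ is $\{u_1\}$, a single red node whose neighbour in $R_2$ is also red, or (b) $R_1$ has odd length with two red endpoints and $\sigma$ uses an end-to-end split as in case (iii) of the row elimination definition. After swapping $u_1$ (red) for $v$ (black), the new bottom row $R_1'$ of $S'$ either has length at least $2$ --- in which case starting the row elimination from an end of $R_1'$ immediately yields an adjacent bicoloured pair --- or is a single node whose colour can be matched against a valid first node of $R_2'$'s elimination of opposite colour, using the flexibility offered by cases (ii) and (iii) of the definition combined with the parity shift induced by the swap.

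The hard part will be the case analysis on the placement of $v$ relative to the rows of $S$ and the resulting row structure of $S'$. The most delicate sub-case is when $R_1'$ reduces to a single node and $R_2'$ happens to have endpoints of the same colour; here I would leverage the OC-preserving placement of $v$ to argue that $R_2'$ still admits a first-node choice of opposite colour to $R_1'$'s node, completing the construction of $\sigma'$.
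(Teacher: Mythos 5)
The paper records this statement as an unproved Observation, so there is no authorial proof to compare against; judged on its own, your attempt has a genuine gap, and it sits exactly where you wave it away. In your first stage you assert that the edge case in which $u_1$ is $v$'s only neighbour in $S$ ``can be ruled out using Proposition~\ref{prop:columns-rows}.'' It cannot. Take $S$ to be the $3\times 2$ block $\{(0,1),(1,1),(2,1),(0,2),(1,2),(2,2)\}$ with $(2,1)$ red, and the row elimination sequence that takes $u_1=(2,1)$, $u_2=(0,1)$, $u_3=(1,1)$ (case (iii) of the row elimination definition with $j=k-2$): the hypothesis is satisfied, and $c=(2,0)$ is an empty opposite-coloured cell of the cell perimeter whose occupation leaves the shape orthogonal convex. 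Yet $u_1$ is the \emph{unique} neighbour of $v=(2,0)$, so $(S\cup\{v\})\setminus\{u_1\}$ is disconnected, Lemma~\ref{lem:remove-multiple-nodes} does not apply to it, and your own stage-one goal fails. The statement can therefore only be true if the existential on $c$ is read as scoping over the conclusion; a correct proof must \emph{choose} $c$ (e.g., an admissible cell adjacent to at least one node of $S\setminus\{u_1\}$) and must additionally show that such a choice always exists, which is the real missing idea. Your proposal instead treats $c$ as given and tries to verify the conclusion for it. A secondary slip in the same stage: ``at most three $S$-neighbours plus at most one more'' allows four neighbours, which is precisely the interior case; the correct reason $u_1$ stays on the external surface is that, being an endpoint of the bottom-most row, it has at most two neighbours in $S$.

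Your second stage also misstates the problematic configurations and defers the only substantive point. A single-node bottom row $\{u_1\}$ ``whose neighbour in $R_2$ is also red'' is impossible, since vertically adjacent cells have opposite colours; the actual obstruction is that the admissible second elements of the sequence are \emph{endpoints} of $R_2$, and both may share $u_1$'s colour. Likewise, Lemma~\ref{lem:res-subset-bound} is existential over elimination sequences and tells you nothing about the given $\sigma$, so it does not force $c(u_1)=c(u_2)=red$. Finally, as you yourself observe, whenever the new bottom row of $(S\cup\{v\})\setminus\{u_1\}$ has length at least $2$ the conclusion is immediate, so the entire content of the observation lives in the sub-case where that row is a single node and one must show that an endpoint of the next row of the opposite colour can legally serve as the second element; this is exactly the step you leave as ``I would leverage the OC-preserving placement of $v$,'' i.e., unproved. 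Until the choice of $c$ is made explicit and that sub-case is argued, the proof does not close.
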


The \emph{anchor node} of the shape $S$ of $p$ rows $R_1, R_2, \ldots, R_p$ is the rightmost node in the row $R_p$, counting rows from bottom to top.
\emph{ExtendedStaircase} is an algorithm which creates an extended staircase generation sequence from a row elimination sequence of a connected horizontal-vertical convex shape.

\begin{lemma}\label{lem:shape-staircase-properties}
Let $S$ be a connected orthogonal convex shape of $n$ nodes divided into $p$ rows $R_1, R_2, \ldots, R_p$, and $\sigma=(u_1,u_2,\ldots,u_n)$ a row elimination sequence from $R_1$ to $R_p$ of $S$. If the bottom node of the first two nodes placed by ExtendedStaircase is fixed to $(x_c, y_c+1)$, where $(x_c, y_c)$ are the co-ordinates of the anchor node of $S$, the shape $T_i = ExtendedStaircase(\sigma_i)$, where $\sigma_i=(u_1,u_2,\ldots,u_i)$, $1 \leq i \leq n$, fulfills the following properties:

\begin{itemize}
    \item $S \cup T_i$ is a connected shape.
    \item $S \cap T = \emptyset$.
    \item excluding the single-black repository, $R_p \cup T_i$ is an orthogonal convex shape.
\end{itemize}
\end{lemma}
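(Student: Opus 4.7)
My plan is to verify the three properties in sequence by leveraging the fact that Algorithm \ref{alg:extended-staircase-gen-alg} writes cells only at $y \geq y_c$ once the bottom node of the initial pair is pinned to $(x_c, y_c+1)$. An inspection of the algorithm shows that the only cell $T_i$ may contribute at $y = y_c$ is the first red-repository slot $(x_c+1, y_c)$, and only in the case where the initial pair occupies $(x_c, y_c+1), (x_c, y_c+2)$; all remaining cells of $T_i$ lie strictly above $R_p$, forming an extended staircase that grows up and to the right from the anchor.

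For disjointness, since $R_p$ is the topmost row of $S$ no node of $S$ has $y > y_c$, so the cells of $T_i$ at $y \geq y_c + 1$ automatically avoid $S$. The one remaining cell $(x_c+1, y_c)$ lies in the row of $R_p$ but strictly to the right of the anchor, which is by definition the rightmost node of $R_p$, so it is also not in $S$. For connectivity, the first cell written by the algorithm is $(x_c, y_c+1)$, edge-adjacent to the anchor $(x_c, y_c) \in S$; $T_i$ is internally connected because, by Lemma \ref{lem:approx-dll-prefix-convexity}, it is a prefix of a generation sequence whose partial shapes are connected orthogonal convex, so joining it to $S$ through this edge yields a connected union.

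For orthogonal convexity of $R_p \cup T_i$ with the single-black repository excluded, I would argue column by column and row by row. $T_i$ is orthogonal convex by Lemma \ref{lem:approx-dll-prefix-convexity} and $R_p$ is a horizontal line, hence individually convex. The two shapes share cells only in column $x_c$, where the anchor together with the bottom two nodes of the pair form a consecutive vertical triple, and in the row $y = y_c$, where the potential extra cell $(x_c+1, y_c)$ simply extends $R_p$ one step rightward into a longer consecutive line. Every other column or row receives cells from only one of the two subshapes, so its no-gap property is inherited directly.

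The main obstacle will be to confirm that this holds uniformly over every prefix length $i$ and every orientation of the initial pair, and that the combined shape also respects the $y_{max}$- and $y_{min}$-monotonicity conditions of Proposition \ref{prop:columns-rows}. This reduces to checking that $y_{max}$ is non-decreasing as one sweeps from the left end of $R_p$ to the rightmost column of $T_i$ (since $R_p$ contributes the constant value $y_c$ on its range and the extended staircase contributes values that strictly increase from column $x_c$ rightward) and, symmetrically, that $y_{min}$ is non-decreasing from column $x_c$ onward because the red repository, when present, descends at most to $y_c$ only in the very first column to the right of the anchor.
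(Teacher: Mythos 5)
Your proposal is correct and follows essentially the same route as the paper's proof: both rely on Lemma \ref{lem:approx-dll-prefix-convexity} for the internal structure of $T_i$, on the anchor node being the rightmost node of the topmost row $R_p$ to rule out any overlap with $S$, and on the observation that $T_i$ meets $R_p$ only in the anchor's column and (possibly, via the first red-repository cell) in the anchor's row to obtain orthogonal convexity of $R_p \cup T_i$. Your coordinate bookkeeping is in fact somewhat more explicit than the paper's, and the closing appeal to the monotonicity conditions of Proposition \ref{prop:columns-rows} is unnecessary, since the per-row and per-column no-gap check already establishes Definition \ref{def:orthogonal-convex} directly.
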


\begin{proof}
Let $u_1, u_2, \ldots, u_i$ be the nodes in the sequence $\sigma_i$. If the first node is black, Algorithm \ref{alg:extended-staircase-gen-alg} places a node in $(x_l-1, y_d-1)$, otherwise it places it in $(x_l-1, y_d)$. By Lemma \ref{lem:approx-dll-prefix-convexity}, all $\sigma_i$ generate an orthogonal convex shape, so $T_i$ cannot be a disconnected shape. Therefore, the shape $S \cup T_i$ is connected. In addition, the co-ordinates $(x_l-1, y_d-1)$ and $(x_l-1, y_d)$ represent the two potential bottom-left corners of the shape $T$. Therefore, there can be no overlap (i.e. placement of nodes in occupied cells) as the existence of a node of $S$ in the space $T$ is constructed in would contradict the definition of an anchor node. In addition, the cell $(x_l-2, y_d)$ (the single-black repository) is always empty as a node in that cell would have the $y$ co-ordinate $y_d$, which is above the anchor node at $y_d-2$ or $y_d-1$, violating the definition of the anchor node. Finally, since the nodes $u_1$ and $u_2$ construct a column, and every node $u_3, \ldots, u_n$ (excluding the single-black repository) is necessarily to the right of this column, there cannot be a violation of orthogonal convexity with the row $R_p$.
\qed
\end{proof}

\begin{lemma}\label{lem:extend-to-dll}
For any extended staircase $W \cup T$ of $n$ nodes, where $W$ is the $Stairs$, $T \subseteq \{BRep \cup RRep\}$ and $k = |T|$, given a shape elimination sequence $\sigma=(u_1,u_2,\ldots,u_k)$ of $T$, there is a diagonal line-with-leaves generation sequence $\sigma'=(u'_1,u'_2,\ldots,u'_k)$ which is colour-order preserving w.r.t $\sigma$ and such that, for all $1 \leq i \leq |\sigma|$, $D_i = W \cup \{u'_1, u'_2, \ldots, u'_i\}$ is a connected orthogonal convex shape.
\end{lemma}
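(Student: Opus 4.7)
The plan is to construct $\sigma'$ explicitly by scanning $\sigma$ left-to-right and, at step $i$, placing a node of the appropriate colour at the leftmost still-empty slot of the corresponding repository attached to $W$. The crucial observation is that all cells of $BRep$ are black and all cells of $RRep$ are red (each repository lies on a single parity of the checkerboard colouring of the grid), and because $W\cup T$ is an extended staircase by hypothesis, the sets $T\cap BRep$ and $T\cap RRep$ each occupy the leftmost prefix of their respective repository. Consequently, if $c(u_i)=black$ I would choose $u'_i$ to be the leftmost $BRep$-cell of the extended staircase over $W$ not already used by $\{u'_1,\ldots,u'_{i-1}\}$, and symmetrically for $c(u_i)=red$ using $RRep$. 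Colour-order preservation is immediate from the rule, and since the number of blacks (resp.\ reds) in $\sigma$ equals $|T\cap BRep|$ (resp.\ $|T\cap RRep|$) the rule is always applicable.

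I would then argue by induction on $i$ that each $D_i=W\cup\{u'_1,\ldots,u'_i\}$ is a connected orthogonal convex shape. The base case $D_0=W$ is immediate. For the inductive step, $D_i$ is precisely $W$ together with the $b_i$ leftmost $BRep$-cells and the $r_i$ leftmost $RRep$-cells, for some $b_i+r_i=i$. This is itself an extended staircase with empty single-black repository, so orthogonal convexity follows from Lemma \ref{lem:approx-dll-prefix-convexity}; alternatively, one verifies directly that the column sequences $y_{\min}(x)$ and $y_{\max}(x)$ remain monotone non-decreasing so that the forbidden triples of Proposition \ref{prop:columns-rows} never occur, and that the perimeter keeps the form prescribed by Proposition \ref{prop:regular-expression}. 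Connectivity is preserved at each step because every newly placed node is edge-adjacent to the top or bottom of the $Stairs$-column it extends.

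It then remains to recognise $\sigma'$ as a diagonal line-with-leaves generation sequence. The construction never writes to the single-black repository cell $(x_l-2,y_d)$, so $D_k$ is an extended staircase with empty single-black repository, which is precisely a diagonal line-with-leaves in our sense. To present $\sigma'$ in the column-by-column sub-sequence format demanded by the definition of an extended-staircase generation sequence, I would observe that any two additions to distinct columns commute in our setting, so the entries of $\sigma'$ may be grouped into consecutive same-column blocks without breaking orthogonal convexity of intermediates and without disturbing the colour sequence (within any block of size two, the $BRep$ and $RRep$ entries are automatically of opposite colours).

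The hard part is verifying that the greedy leftmost-first rule never breaks orthogonal convexity under an arbitrary interleaving of $BRep$ and $RRep$ insertions dictated by the colour pattern of $\sigma$. The technical content reduces to the observation that $BRep$ and $RRep$ grow as disjoint diagonal lines attached to opposite sides of the $Stairs$: $BRep$ additions only raise $y_{\max}$ of the columns they touch, while $RRep$ additions only lower $y_{\min}$, so the two cannot jointly create a strict valley in either column sequence, and every prefix remains an orthogonal convex extended staircase as required.
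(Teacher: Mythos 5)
Your construction builds the wrong target shape, so the proposal does not establish the lemma. You place each $u'_i$ back into the leftmost empty slot of the repository matching its colour; since $T$ already occupies the leftmost prefixes of $BRep$ and $RRep$ (as you note yourself), your final shape $D_k$ is exactly $W\cup T$ again. But the conclusion requires $\sigma'$ to be a \emph{diagonal line-with-leaves generation sequence}, i.e., one whose constructed shape is a diagonal line-with-leaves: columns of length at most $3$, equivalently a $Stairs$ with at most one of the two repositories occupied. Whenever both $T\cap BRep$ and $T\cap RRep$ are non-empty, $W\cup T$ has columns of length $4$ and is not a diagonal line-with-leaves, so your $\sigma'$ fails the definition; your remark that the \emph{single-black} repository stays empty addresses the wrong repository. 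This lemma is precisely the step of the pipeline (Algorithm \ref{alg:dll-build-alg}, Lemma \ref{lem:build-dll}) whose purpose is to empty the doubly-loaded repositories, so a sequence that reproduces the input shape cannot be what is required.

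The intended argument, and the one the paper gives, \emph{relocates} the repository nodes instead of reinserting them: the elimination sequence $\sigma$ is chosen to alternate between a black of $BRep$ and a red of $RRep$ (until one repository is exhausted), and $\sigma'$ places each such pair as a new column at the right end of $Stairs$, exactly as in the for-loop of Algorithm \ref{alg:extended-staircase-gen-alg} (the black opens the column at $(x_r+1,y_u)$, the red goes on top at $(x_r,y_u+1)$). Each pair shortens both repositories by one while lengthening $Stairs$ by one column, so when the shorter repository is exhausted the shape is a $Stairs$ with a single remaining repository, i.e., a diagonal line-with-leaves; orthogonal convexity and connectivity of every prefix $D_i$ follow from Lemma \ref{lem:approx-dll-prefix-convexity} because each intermediate shape is again an extended staircase. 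Your inductive convexity argument for prefixes is sound as far as it goes, but it is applied to the wrong sequence of shapes.
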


\begin{proof}
We use a shape elimination sequence $\sigma$ of $T$ which alternates between taking nodes from the black repository $BRep$ and the red repository $RRep$. It does this until only one repository remains. We can then use $\sigma'$ to place the nodes of $D$ as in the for loop of Algorithm \ref{alg:extended-staircase-gen-alg}, effectively extending $W$. If we maximise the size of $\sigma$ then the resulting $D_k$, a $Stairs$ with only one repository, is equivalent to a diagonal line-with-leaves. By Lemma \ref{lem:approx-dll-prefix-convexity}, all prefixes $D_i$ generated by $\sigma'$ are connected orthogonal convex shapes (excluding the black repository).
\qed
\end{proof}

\section{The Transformation}\label{sec5}

In this section, we present the transformation of \emph{orthogonal convex} shapes, via an algorithm (Algorithm \ref{alg:high-level-alg}) for constructing a diagonal line-with-leaves from any orthogonal convex shape $S$. For the first step of the algorithm, we generate a 6-robot from the seed and the shape, which we then use to transport nodes. By using a row elimination sequence of $S$ and an extended staircase generation sequence, we convert the initial shape $S$ into an extended staircase. We then use appropriate elimination and generation sequences focused on the repositories of the extended staircase, to convert the latter into a diagonal line-with-leaves. Given any two colour-consistent orthogonal convex shapes $A$ and $B$ and their diagonal line-with-leaves $D$, our algorithm can be used to transform both $A$ into $D$ and $B$ into $D$ and, thus, $A$ into $B$, by reversing the latter transformation.

Our transformations rely on the use of a \emph{k-robot}, a shape with $k$ nodes which is responsible for transporting nodes. The $k$-robot \emph{extracts} a node $u$ if it is positioned such that $u$ rotates around a node of the robot and the result is a $k+1$-robot where $u$ is the load of the robot. The $k+1$-robot \emph{places} its load in the cell $c$ if it is positioned such that the load rotates into $c$ and the result is a $k$-robot.

\begin{algorithm}
\caption{HVConvexToDLL($S, M$)}\label{alg:high-level-alg}
\begin{algorithmic}
\Require shape $S \cup M$, where $S$ is a connected orthogonal convex shape of $n$ nodes and $M$ is a 3-node seed on the cell perimeter of $S$, row elimination sequence $\sigma=(u_1,u_2,\ldots,u_n)$ of $S$, extended staircase generation sequence of $W \cup T = \sigma^\prime=(u'_1,u'_2,\ldots,u'_n)$ which is colour-order preserving w.r.t. $\sigma$, shape elimination sequence $\sigma=(u_1,u_2,\ldots,u_{|T|})$ of $T$, shape generation sequence of $X = \sigma^\prime=(u'_1,u'_2,\ldots,u'_{|T|})$ which is colour-order preserving w.r.t. $\sigma$
\Ensure shape $G = W \cup X \cup M$, where $G$ is a diagonal line-with-leaves and $M$ is a connected 3-node shape on the cell perimeter of $S$.

\State $R \gets$ GenerateRobot($S, M$)
\State $\sigma \gets$ rowEliminationSequence($S$)
\State $\sigma' \gets$ ExtendedStaircase($\sigma$)
\State $W \cup T \gets$ HVConvexToExtStaircase($S, R, \sigma, \sigma'$)
\State $\sigma \gets$ repsEliminationSequence($W \cup T$)
\State $\sigma' \gets$ stairExtensionSequence($W \cup T$)
\State $G \gets$ ExtStaircaseToDLL($W \cup T, R, \sigma, \sigma'$)
\State TerminateRobot($G, R$)
\end{algorithmic}
\end{algorithm}

\pagebreak

\begin{algorithm}
\caption{HVConvexToExtStaircase($S, R, \sigma, \sigma'$)}\label{alg:extended-staircase-build-alg}
\begin{algorithmic}
\Require shape $S \cup R$, where $S$ is a connected orthogonal convex shape of $n$ nodes and $R$ is a 6-node robot on the cell perimeter of $S$, row elimination sequence $\sigma=(u_1,u_2,\ldots,u_n)$ of $S$, extended staircase generation sequence $\sigma^\prime=(u'_1,u'_2,\ldots,u'_n)$ which is colour-order preserving w.r.t. $\sigma$
\Ensure shape $T \cup R$, where $T$ is the extended staircase generated by $\sigma'$

\For{all $1 \leq i \leq n$}
    \State $source \gets \sigma_i$
    \State $dest \gets \sigma^\prime_i$
    \While{R cannot extract source}
        \If{R can climb}
            \State $Climb(R)$
        \Else
            \State $Slide(R)$
        \EndIf
    \EndWhile
    \State $Extract(R, source)$
    \While{R cannot place its load in dest}
        \If{R can climb}
            \State $Climb(R)$
        \Else
            \State $Slide(R)$
        \EndIf
    \EndWhile
    \State $Place(R, dest)$
\EndFor
\end{algorithmic}
\end{algorithm}

\pagebreak

\begin{algorithm}
\caption{ExtStaircaseToDLL($W, R, \sigma, \sigma'$)}\label{alg:dll-build-alg}
\begin{algorithmic}
\Require extended staircase $W=Stairs\cup \{BRep \cup RRep\}$ and a 6-robot $R$ on its cell perimeter, shape elimination sequence $\sigma=(u_1,u_2,\ldots,u_{|T|})$ of $T\subseteq \{BRep \cup RRep\}$, shape generation sequence $\sigma^\prime=(u'_1,u'_2,\ldots,u'_{|T|})$ which is colour-order preserving w.r.t. $\sigma$
\Ensure shape $Stairs^\prime \cup R'$, where $Stairs^\prime\setminus Stairs$ is an extension of $Stairs$ generated by $\sigma'$ and $R'$ is a 6-robot which is colour-consistent with $R$.

\For{all $1 \leq i \leq |T|$}
    \State $source \gets u_i$
    \State $dest \gets u^\prime_i$
    \While{R not at $source$}
        \If{R can climb}
            \State $ClimbTowards(R,source)$
        \Else
            \State $SlideTowards(R,source)$
        \EndIf
    \EndWhile
    \State $Extract(R, source)$
    \While{R not at $dest$}
        \If{R can climb}
            \State $ClimbTowards(R,dest)$
        \Else
            \State $SlideTowards(R,dest)$
        \EndIf
    \EndWhile
    \State $Place(R, dest)$
\EndFor
\end{algorithmic}
\end{algorithm}

\pagebreak

\subsection{Robot Traversal Capabilities}

\subsubsection{6-Robot Movement}

We first show that for all $S$ in the family of orthogonal convex shapes, a connected 6-robot is capable of traversing the perimeter of $S$. We prove this by first providing a series of scenarios which we call \emph{corners}, where we show that the 6-robot is capable of making progress past the obstacle that the corner represents. We then use Proposition \ref{prop:regular-expression} to show that the perimeter of any $S$ is necessarily made up of a sequence of such corners, and therefore the 6-robot is capable of traversing it.

\FloatBarrier

\begin{figure}
\centering
\begin{subfigure}{1\textwidth}
\centering
\includegraphics[width=0.75\linewidth]{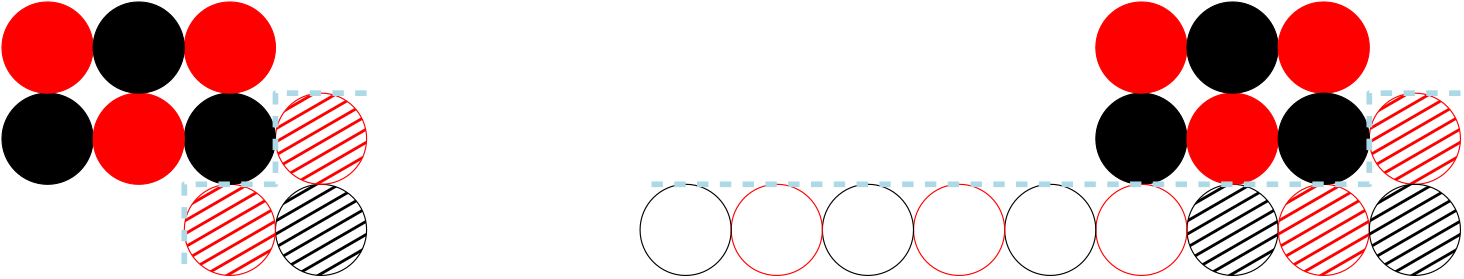}
\caption{The height 1 cases, with widths 1 and 2+.}
\vspace{0.4cm}
\end{subfigure}
\begin{subfigure}{.5\textwidth}
\vspace{1cm}
\centering
\includegraphics[width=0.75\linewidth]{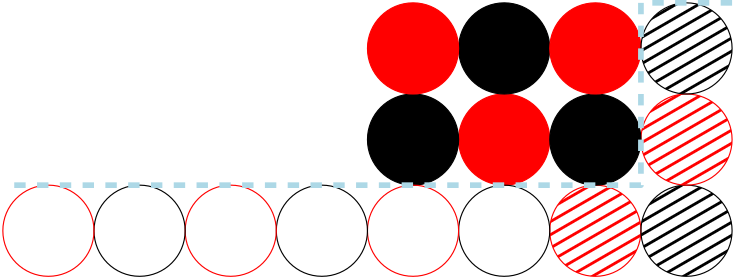}
\caption{The height 2 case.}
\end{subfigure}%
\begin{subfigure}{.5\textwidth}
\centering
\includegraphics[width=0.75\linewidth]{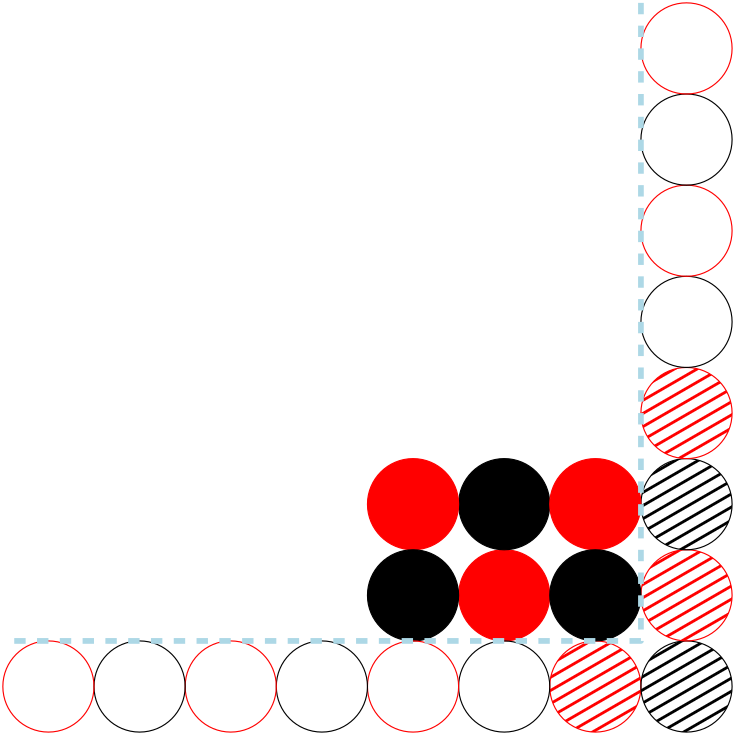}
\caption{The height 3+ case.}
\end{subfigure}
\begin{subfigure}{.5\textwidth}
\vspace{1cm}
\centering
\includegraphics[width=0.3\linewidth]{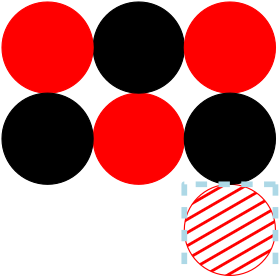}
\caption{An extreme version of the partial quadrant, when a quadrant consists of a single line.}
\end{subfigure}
\caption{The five cases considered in the proof, four corner cases and one edge case where a section of the perimeter does not correspond to a corner case due to its structure. Striped circles represent the nodes on the exterior of the shape. Hollow circles represent potential space for additional nodes for corner scenarios which are not in this set (due e.g. to having longer horizontal/vertical lines).}
\label{fig:scenarios}
\end{figure}

\FloatBarrier

We define progress as the movement of the 6-robot upwards and to the right of its starting position, i.e. any change in the position of the shape such that the shape is in the same $a \times b$ formation but the co-ordinates of all of the nodes have increased. This is equivalent to being able to traverse the relevant section of a perimeter. Our goal is to show that attaining the maximum progress (i.e. the movement which maximises the increase in the co-ordinates while preserving the formation) for each corner is possible. Since we can construct a series of corners where every corner follows from the point of maximum progress of the previous corner, it follows that for such a series we can make progress indefinitely. By rotating the robot and the quadrant as necessary, we can make the same argument for progress in any direction, which is equivalent to being able to traverse a perimeter indefinitely, provided we also show that the perimeter is necessarily made up of such corners.

We begin by considering the $up$-$right$ quadrant, that is any cells which neighbour the section of the perimeter defined by the regular expression $d_1(d_1\;|\;d_2)^{*}d_2$ $(d_2\;|\;d_3)^{*}d_3$, where $d_1, d_2$ and $d_3$ are $up$, $right$ and $down$ respectively, as our base case.

Let $\mathcal{C}$ be a set of orthogonal convex shapes, where each shape is a corner scenario for the $up-right$ quadrant, depicted in Figure \ref{fig:climb-scenarios}. Given a corner-shape scenario $C\in \mathcal{C}$ consisting of a horizontal line $(x_l,y_d), (x_l+1,y_d),\ldots,(x_r,y_d)$ and a vertical line $(x_r,y_d), (x_r,y_d+1),\ldots,(x_r,y_u)$, as depicted in Figure \ref{fig:climb-scenarios}, we define its \emph{width} $w(C)=|x_r-x_l|$, i.e., equal to the length of its horizontal line, and its \emph{height} $h(C)=|y_u-y_d|$, i.e., equal to the length of its vertical line, excluding in both cases the corner node $(x_r,y_d)$.

\FloatBarrier

\begin{figure}
\centering
\includegraphics[width=0.75\linewidth]{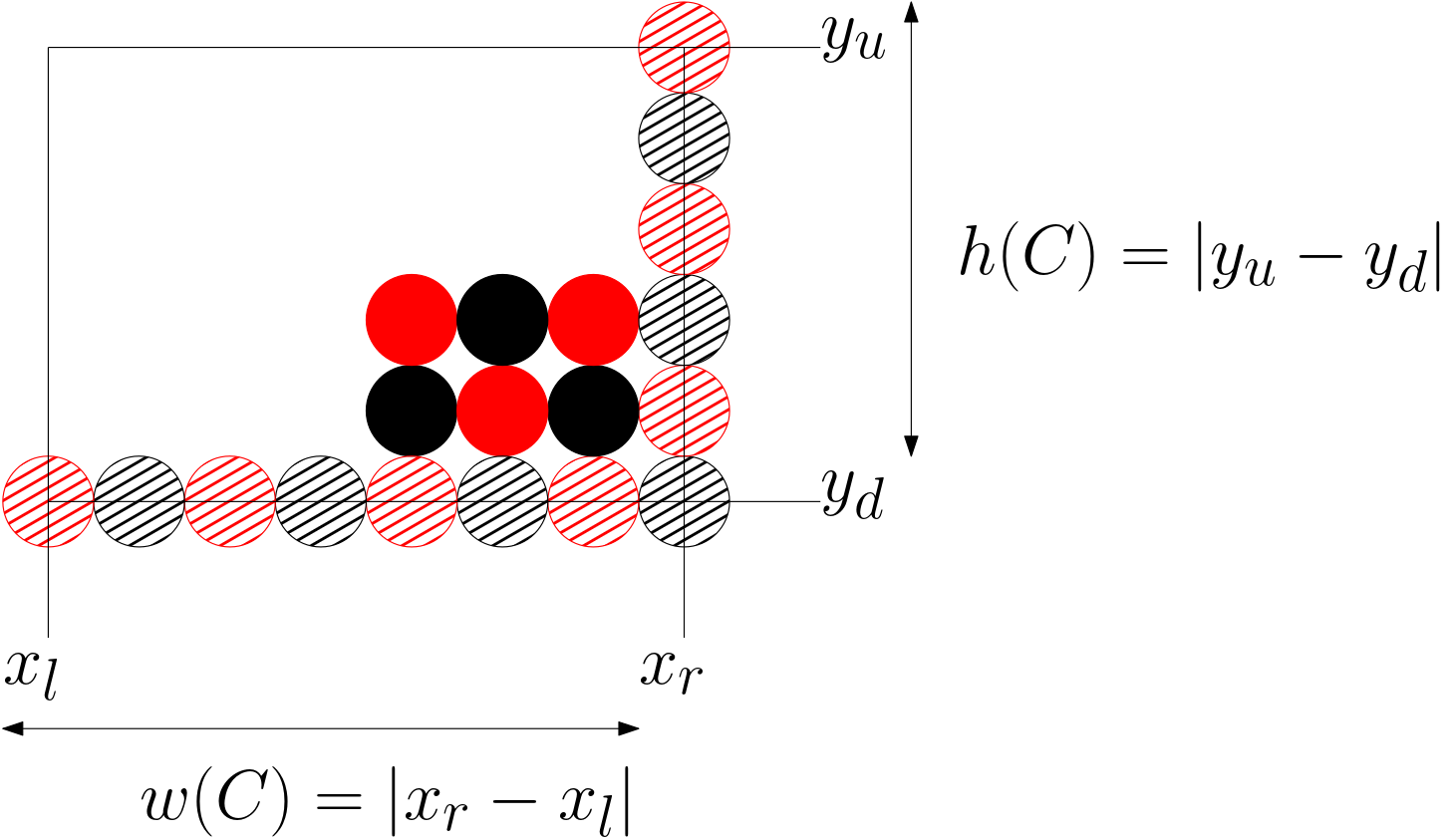}
\caption{A visual representation of the variables we use in our proof.}
\label{fig:variables}
\end{figure}

\begin{figure}
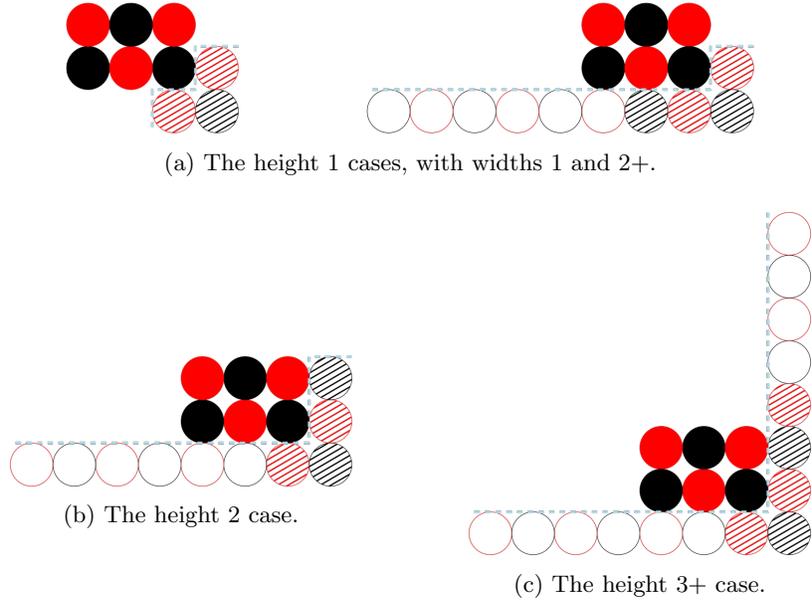

\centering
\begin{subfigure}{1\textwidth}
\centering
\includegraphics[width=0.75\linewidth]{ClimbScenarios1New.png}
\caption{The height 1 cases, with widths 1 and 2+.}
\vspace{0.4cm}
\end{subfigure}
\begin{subfigure}{.5\textwidth}
\vspace{1cm}
\centering
\includegraphics[width=0.75\linewidth]{ClimbScenarios2New.png}
\caption{The height 2 case.}
\end{subfigure}%
\begin{subfigure}{.5\textwidth}
\centering
\includegraphics[width=0.75\linewidth]{ClimbScenarios3New.png}
\caption{The height 3+ case.}
\end{subfigure}
\caption{The four basic corner scenarios of $\mathcal{C}$.}
\label{fig:climb-scenarios}
\end{figure}

\FloatBarrier

\begin{lemma}\label{lem:corner-cases}
For any orthogonal convex shape $S$, the extended external surface defined by the regular expression $d_1(d_1\;|\;d_2)^{*}d_2$ $(d_2\;|\;d_3)^{*}d_3$ of the shape can be divided into a series of shapes $S_0, S_1,\ldots$, where all $S_i \in \mathcal{C}$.
\end{lemma}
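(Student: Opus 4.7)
The plan is to apply Proposition \ref{prop:regular-expression} to parse the perimeter of $S$ as a fixed regular expression, then isolate the sub-path $d_1(d_1\;|\;d_2)^{*}d_2(d_2\;|\;d_3)^{*}d_3$ that describes the up-right quadrant, and finally show that this sub-path decomposes naturally into a concatenation of L-shaped corners, each of which lies in $\mathcal{C}$ by matching one of the four scenarios of Figure \ref{fig:climb-scenarios}.

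First I would rewrite the quadrant's sub-path uniquely as a sequence of maximal monochromatic runs, of the form $d_1^{a_0} d_2^{b_1} d_1^{a_1} \cdots d_1^{a_{k-1}} d_2^{b_k} d_3^{c_1} d_2^{b_{k+1}} d_3^{c_2} \cdots d_3^{c_\ell}$, with every exponent strictly positive. Intuitively, the ascending half of the staircase is witnessed by the $d_1/d_2$ alternation and the descending half by the $d_2/d_3$ alternation. The existence and uniqueness of this run-decomposition follow directly from the structure of the regular expression in Proposition \ref{prop:regular-expression}, together with the constraint that the orthogonal convex perimeter never contains $d_{i-2}$ between the first and last appearance of $d_i$.

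Next I would associate a corner $S_i \in \mathcal{C}$ with each adjacent pair of a horizontal run and a vertical run in this decomposition. Concretely, for each pair $(d_2^{b_j}, d_1^{a_j})$ on the ascending side and each pair $(d_2^{b_{k+j}}, d_3^{c_j})$ on the descending side, the corresponding $S_i$ is the L-shape whose horizontal arm has width $b_j$ and whose vertical arm has height $a_j$ or $c_j$ respectively, together with the external nodes sharing a side or corner with that segment of the perimeter. The pair $(w,h)$ then picks out exactly one of the four cases of Figure \ref{fig:climb-scenarios}: $(w=1,h=1)$, $(w\geq 2,h=1)$, $h=2$, or $h\geq 3$. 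Orthogonal convexity of each $S_i$ is immediate because each is an axis-aligned L-shape.

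Finally I would verify that consecutive $S_i,S_{i+1}$ overlap in exactly the one perimeter node at the transition between their runs, so that gluing the whole sequence covers the extended external surface of the quadrant as specified in Definition \ref{def:extended-surface}. The main obstacle I anticipate is the careful handling of the two boundary runs $d_1^{a_0}$ and $d_3^{c_\ell}$, which have no paired horizontal run on the outward side, together with degenerate situations where a quadrant reduces to a single line as in Figure \ref{fig:scenarios}(d): these must either be absorbed into the first or last $S_i$ as a vertical arm attached to a trivial horizontal arm, or treated directly as an atomic base-case member of $\mathcal{C}$. I expect both situations to be dispatched uniformly using orthogonal convexity of $S$ and the assumption that $S$ is not the trivial single-node-surrounded shape excluded earlier.
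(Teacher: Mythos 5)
Your proposal is correct and follows essentially the same route as the paper: both use Proposition \ref{prop:regular-expression} to parse the quadrant's perimeter into alternating horizontal and vertical runs, pair each horizontal run with its succeeding vertical run to obtain an L-shaped corner matching one of the width/height cases of Figure \ref{fig:climb-scenarios}, and handle the degenerate vertical-follows-vertical situation by letting the shared corner node serve double duty (your ``trivial horizontal arm''). Your run-decomposition with explicit exponents and the overlap check at transition nodes is simply a more formal rendering of the paper's argument.
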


\begin{proof}
By Proposition 1, there is the section of the perimeter of orthogonal convex shapes which is defined by the regular expression $d_1(d_1\;|\;d_2)^{*}d_2$ $(d_2\;|\;d_3)^{*}d_3$, where $d_1, d_2$ and $d_3$ are $up$, $right$ and $down$ respectively. This section of the perimeter forms a ``quadrant" where all movement is in the up and right directions, terminated by the first $d_3$, as can be seen in the examples in \ref{fig:prop-one}. By the regular expression, the nodes on the perimeter must necessarily form alternating horizontal and vertical lines. We can therefore divide this section of the perimeter into a series of subsections, where in each subsection we have a horizontal line which connects to a vertical line via the right-most node on the horizontal line. The cases in Figure \ref{fig:climb-scenarios} cover all potential widths and heights where this vertical line positioning constraint holds. This even holds for the edge cases where a vertical ending at $(x_r,y_u)$ is immediately followed by another vertical starting at $(x_r + 1, y_u)$, provided we allow $(x_r,y_u)$ to act both as $(x_r,y_u)$ and $(x_l,y_d)$ for each case respectively. Therefore, they cover all potential cases in the quadrant, and since the quadrant is made up of these cases, it covers the extended external surface of the whole quadrant.
\qed
\end{proof}

Given that the quadrant is made up of cases from $\mathcal{C}$, if the 6-robot is able to move from one vertical to another for all $S_i \in \mathcal{C}$, it is able to do so for any up-right quadrant of the perimeter until it runs into the $d_3$ line. We now show that this movement is possible, first for this quadrant and later for all four quadrants.

\begin{lemma}\label{lem:progress-corner-cases}
For all shapes $C \in \mathcal{C}$, if a $2 \times 3$ shape (the 6-robot) is placed in the cells $(x_l-2,y_d+1),(x_l-1,y_d+1),(x_l,y_d+1),(x_l-2,y_d+2),(x_l-1,y_d+2),(x_l,y_d+2)$, it is capable of translating itself to $(x_r-2,y_u+1),(x_r-1,y_u+1),(x_r,y_u+1),(x_r-2,y_u+2),(x_r-1,y_u+2),(x_r,y_u+2)$.
\end{lemma}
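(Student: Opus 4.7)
The plan is a case analysis over the four corner scenarios of $\mathcal{C}$, organised around two reusable primitives that I would define and verify first. The first primitive is a horizontal \emph{slide}: a short explicit rotation sequence that shifts the $2 \times 3$ robot one cell to the right along a horizontal line of $C$'s extended external surface. The second is a \emph{corner climb}: a fixed sequence that folds the robot around the convex corner at $(x_r,y_d)$, transitioning it from the position above the horizontal line to a position beside the vertical line. A vertical slide is then obtained by symmetry via a $90\degree$ rotation of the coordinate system. Given these, the translation required by the lemma is realised, for each $C \in \mathcal{C}$, by performing some number of horizontal slides, then one corner climb, then some number of vertical slides, closed off by a short terminal adjustment that places the robot in the final $2 \times 3$ footprint above $(x_r,y_u)$.

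For each primitive I would list the individual rotations and check, for every rotation, the two preconditions from Section~\ref{sec3}: the two named empty cells, and the preservation of connectivity of the whole configuration. Emptiness of the cells used by the slides is immediate, since the two rows immediately above the horizontal line and the two columns immediately left of the vertical line lie in the exterior of the quadrant and, by Proposition~\ref{prop:regular-expression} together with the orthogonal convexity of the ambient shape $S$, contain no node of $S$. Connectivity is preserved during each slide because at every intermediate step the five non-rotating nodes of the robot form a connected block that the rotating node remains adjacent to by virtue of rotating around one of them.

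The hard part is the corner climb, and within it the tightest instance — the height-$1$ width-$1$ sub-case of Figure~\ref{fig:climb-scenarios}(a), where $C$ contributes only two adjacent nodes and the free region around $(x_r,y_d)$ is as small as possible. Here I would exhibit the rotation sequence in full, tracking at every step which of the cells in a small neighbourhood of the corner are occupied by the robot and which by nodes of $C$, and verifying both rotation preconditions step by step. Once this tight case is settled, the climbs for the other three sub-cases follow by prepending extra horizontal slides and appending extra vertical slides; the behaviour at the corner itself is identical, because orthogonal convexity of $S$ forces the cells just outside a convex corner of $C$ to be empty regardless of $w(C)$ and $h(C)$, and the slide primitive has already been verified to be independent of the overall length of the line it traverses.

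Finally, I would check the terminal alignment: counting the number of horizontal slides, the shift produced by the corner climb, and the number of vertical slides confirms that the composed sequence places the robot in exactly the six cells named in the statement, with the correct $2 \times 3$ orientation. The parity (colour) of each robot node is preserved automatically by every rotation, so the six starting nodes necessarily end up in cells of matching parity, consistent with the fixed checkerboard colouring of the grid.
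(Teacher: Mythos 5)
Your overall architecture --- explicit slide primitives plus a corner manoeuvre, verified rotation by rotation and then composed over the cases of $\mathcal{C}$ --- matches the paper's. The genuine gap is your central reduction: the claim that once the height-$1$, width-$1$ corner is settled, the climbs for the other sub-cases follow by prepending horizontal slides and appending vertical slides because ``the behaviour at the corner itself is identical''. This is false, and it is precisely why $\mathcal{C}$ has four members rather than one. The local geometry at the corner changes with $h(C)$ in a way that changes which cells are available to the rotating nodes: for example, the height-$1$ step-up must move robot nodes through and into cells such as $(x_r,y_d+2)$ and $(x_r,y_d+3)$, which for $h(C)\geq 2$ are occupied by the vertical line itself, so that manoeuvre cannot even begin against a taller wall. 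Conversely, the tall-wall climb requires the robot to reorient into a configuration hugging the left side of the vertical and to slide vertically (Figures \ref{fig:climb3} and \ref{fig:climb4+}), which needs several cells of wall to attach to and therefore does not apply when $h(C)\in\{1,2\}$. The paper accordingly gives distinct corner manoeuvres for heights $1$, $2$, $3$ and $3$+ (Figures \ref{fig:climb1_1}, \ref{fig:climb1_2}, \ref{fig:climb2}, \ref{fig:climb3}, \ref{fig:climb4+}), and the height-$1$ case itself splits by width because a width-$1$ horizontal leaves the robot with too little support for the generic step-up. Note also that the pumpable part of the arbitrary-height climb is a vertical slide embedded \emph{inside} the climbing manoeuvre, not appended after a fixed corner climb, so even the tall cases do not decompose as ``corner climb then vertical slides''.

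A secondary issue: your connectivity argument for the slides (the five non-rotating nodes form a connected block that the rotating node stays adjacent to) only establishes internal connectivity of the robot; what must be preserved is connectivity of the robot together with $C$ and the rest of the shape. In the lemma's starting position only the single node at $(x_l,y_d+1)$ is adjacent to $C$, and the ordinary pairwise slide would move that unique attachment point; this is exactly why the paper introduces the separate, slower special slide of Figure \ref{fig:slide-on} for positions with only one or two points of attachment. Your proof needs to distinguish these two regimes rather than treating all horizontal motion as a single primitive.
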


\begin{proof}
For our proof strategy, we present a series of motions which for all shapes $C \in \mathcal{C}$ lead the 6-robot from the leftmost node of the horizontal to the topmost node of the vertical. We group some of these motions into high-level motions (i.e. moving the whole 6-robot by moving individual nodes). We begin by noting that we can perform repetitive motions to traverse a horizontal or vertical line to the end. Our first motion is \emph{sliding}, depicted in Figure \ref{fig:slide-across}, where pairs of nodes rotate around each other to slide across a line. By reorienting the shape, the 6-robot and its movement vertically, it follows that the 6-robot can slide up vertical lines as well. The second action is a special version of the slide depicted in Figure \ref{fig:slide-on}. This slide is slower but allows the object to preserve connectivity in the situation where only one or two nodes are connected to the line. We are therefore already able to claim that moving across and onto lines is possible. What remains is the intersection of horizontal and vertical lines.

\FloatBarrier

\begin{figure}
\begin{minipage}{.5\textwidth}
\hspace{-1cm}
\centering
\includegraphics[width=0.75\linewidth]{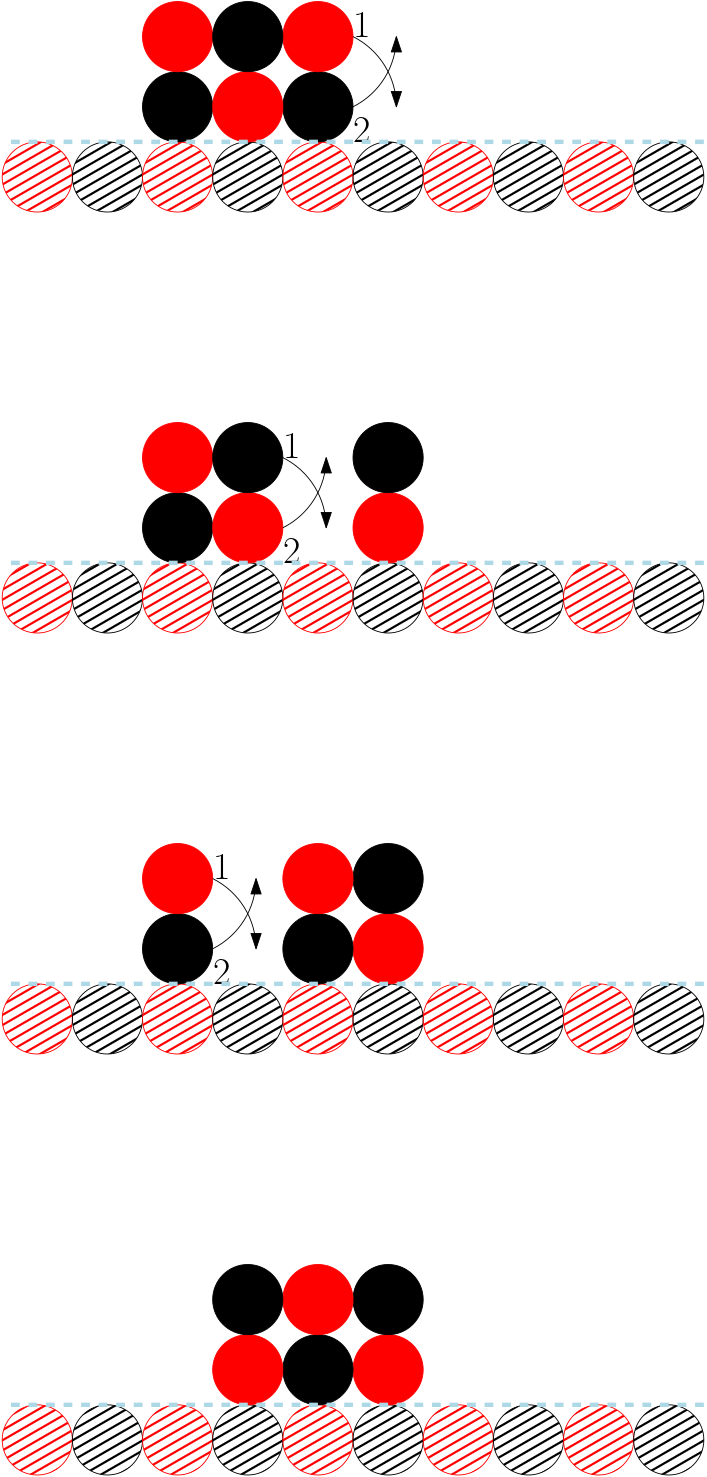}
\captionsetup{width=.9\linewidth}
\caption{Sliding across a horizontal line.}
\label{fig:slide-across}
\end{minipage}%
\begin{minipage}{.5\textwidth}
\hspace{0.5cm}
\centering
\includegraphics[width=0.85\linewidth]{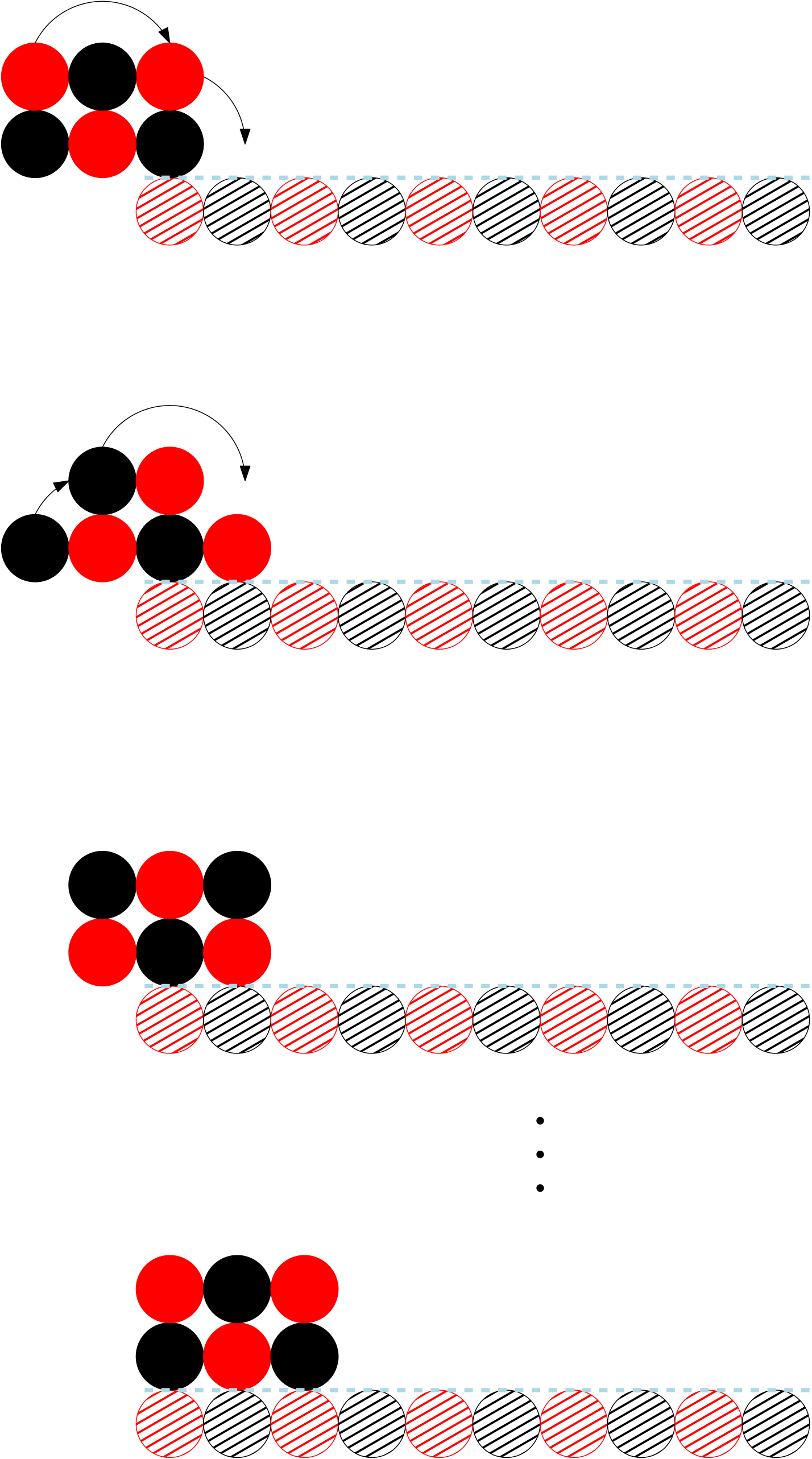}
\captionsetup{width=.9\linewidth}
\caption{Sliding onto a horizontal line. The steps are repeated after the third configuration to reach the fourth configuration.}
\label{fig:slide-on}
\end{minipage}
\end{figure}

\FloatBarrier

Our first is the case where the 6-robot lies on a horizontal of arbitrary width, and attempts to climb a vertical of height 3. It does by following the series of rotations in Figure \ref{fig:climb3}. The result is that the robot lies on top of the vertical line, and is therefore able to use the slide and/or special slide movements (if necessary) to move across the horizontal line at the top (not depicted) to the next vertical. This climbing procedure can be performed no matter how long the first horizontal (i.e. the one the robot lies on initially) is. In addition, there is a section of the movement which puts the seed in the position to slide vertically (see Figure \ref{fig:climb4+}), allowing a modified version the same procedure to climb verticals of arbitrary height.

Finally, there are the solutions for the cases where the height is 2 (Figure \ref{fig:climb2}) and where the height is 1 (Figure \ref{fig:climb1_1} and \ref{fig:climb1_2}). Note that unlike the former movement, the latter movements vary depending on the width (2+ and 1 respectively).
\qed
\end{proof}

\begin{theorem}\label{the:6-robot-traverse}
For any orthogonal convex shape $S$, a 6-robot is capable of traversing the perimeter of $S$.
\end{theorem}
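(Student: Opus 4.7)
The plan is to assemble Proposition~\ref{prop:regular-expression}, Lemma~\ref{lem:corner-cases}, and Lemma~\ref{lem:progress-corner-cases} into a full perimeter traversal, by first handling one quadrant corner-by-corner and then using rotational symmetry to cover the remaining three quadrants.

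First, I would invoke Proposition~\ref{prop:regular-expression} to split the perimeter of $S$ into four monotone sub-paths, one per maximal factor $d_i(d_i\mid d_{i+1})^{*}d_{i+1}$ of the regular expression (indices modulo $4$). Each such sub-path is a \emph{quadrant} in which all motion is in only two of the four cardinal directions. For the canonical up-right quadrant, Lemma~\ref{lem:corner-cases} already decomposes it into a finite sequence of corner shapes $C_0, C_1, \ldots, C_m \in \mathcal{C}$, and Lemma~\ref{lem:progress-corner-cases} gives, for each $C_i$, a sequence of rotations that translates the 6-robot from the prescribed position above the base of $C_i$ to the analogous position above the top of $C_i$. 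Consecutive corners $C_i$ and $C_{i+1}$ share a grid vertex under the dual-role convention in the proof of Lemma~\ref{lem:corner-cases} (the $(x_r,y_u)$ of $C_i$ equals the $(x_l,y_d)$ of $C_{i+1}$), so the end configuration of the 6-robot after traversing $C_i$ is exactly the start configuration required for $C_{i+1}$. Chaining these per-corner translations carries the 6-robot along the entire up-right quadrant.

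Next, I would lift this argument to the remaining three quadrants by rotating the whole system by $90\degree$, $180\degree$, and $270\degree$. Under each such global rotation $S$ remains orthogonal convex, the regular expression is cyclically relabelled, the corner family $\mathcal{C}$ maps to its rotated analogue, and all geometric preconditions of Lemma~\ref{lem:progress-corner-cases} are preserved verbatim, so the same corner-chaining argument traverses each rotated quadrant. Concatenating the four quadrant-traversals then closes a single loop around the perimeter of $S$.

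The main obstacle I expect is verifying continuity at the four seams where the perimeter switches from $d_i d_{i+1}$-monotone to $d_{i+1} d_{i+2}$-monotone motion. Two points must be checked: that the 6-robot exits one quadrant in the start position of the next, and that its $2\times 3$ vs.\ $3\times 2$ aspect after rotation matches what the first corner of the next quadrant expects. The first point is again handled by the dual-role convention of Lemma~\ref{lem:corner-cases} at the extremal seam-vertex, where the unique corner shared by two consecutive quadrants plays both roles simultaneously. The second point is absorbed by the slide and climb sub-routines invoked in the proof of Lemma~\ref{lem:progress-corner-cases}: they already reshape the robot between horizontal and vertical aspects while climbing a corner, so the robot naturally exits each quadrant in the correct aspect for the next, and the four-quadrant concatenation yields a closed traversal of the perimeter of $S$.
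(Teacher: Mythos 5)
Your overall architecture---split the perimeter into four monotone quadrants via Proposition~\ref{prop:regular-expression}, traverse the canonical up-right quadrant corner-by-corner using Lemmas~\ref{lem:corner-cases} and~\ref{lem:progress-corner-cases}, then rotate the whole construction to handle the other three quadrants---is exactly the paper's. Your chaining argument inside a single quadrant is also fine: Lemma~\ref{lem:progress-corner-cases} is stated precisely so that the exit configuration at $(x_r,y_u)$ of one corner is the entry configuration at $(x_l,y_d)$ of the next.

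The gap is at the quadrant seams, which you dispose of too quickly. Everything you cite is internal to one quadrant: the family $\mathcal{C}$ and Lemma~\ref{lem:progress-corner-cases} only cover corners whose horizontal and vertical both lie in the same monotone factor of the regular expression, and the ``dual-role convention'' of Lemma~\ref{lem:corner-cases} is likewise used only to glue consecutive corners within the up-right factor. For the seam itself the paper distinguishes two cases. When the next quadrant contains several lines, it agrees with you that no new machinery is needed: the robot simply starts executing the rotated movements of the next quadrant. But there is a degenerate case that the corner decomposition cannot express: a quadrant consisting of a single unit-length line (Figure~\ref{fig:scenarios}(d)), e.g.\ a single terminal node or single-node column at the extreme of the shape, around which the robot must reverse its direction of travel while maintaining contact with essentially one node of $S$. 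None of the slides or climbs of Lemma~\ref{lem:progress-corner-cases} applies there, because they all presuppose a corner with both a horizontal and a vertical to work against; the paper has to supply a dedicated movement for this situation (Figure~\ref{fig:edge1}), followed by special slides to reposition for the next quadrant. Your claim that the aspect change is ``absorbed by the slide and climb sub-routines'' silently assumes this case away, so as written the four-quadrant concatenation does not close; you need to exhibit (or at least invoke) an explicit manoeuvre for rotating the 6-robot around a single node at a degenerate quadrant.
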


\begin{proof}
By Lemma \ref{lem:progress-corner-cases} we have shown for the up-right quadrant firstly that it is possible to slide across a horizontal of arbitrary width no matter the robot's initial position, that it is possible to climb a height 3 vertical, that part of this movement can be repeated indefinitely to climb verticals of arbitrary height, that special movements exist for smaller verticals and that all of this is possible no matter how long the horizontal line the object lies on is.

By rotating the robot and the quadrant as necessary, we are able to replicate our movements for all other quadrants: $d_4(d_4\;|\;d_1)^{*}$ (the left-up quadrant) $d_2(d_2\;|\;d_3)^{*}$ (the right-down quadrant) and $d_3(d_3\;|\;d_4)^{*}$ (the left-down quadrant). All that remains is the transition between the quadrants.

There are two cases. In the first case, the next quadrant consists of multiple lines. In this case, when the line signifying the end of the current quadrant is met it is sufficient to begin movements appropriate to travelling in the next quadrant. However, there is an edge case where a quadrant consists of a single line. In this case, a unique movement is necessary (see Figure \ref{fig:edge1}) to transfer the 6-robot onto the line. These movements are then followed by special slides to put the object into position for the next quadrant. Naturally, these transformations are reversible and can be mirrored as well. We are therefore able to deal with any quadrant transition, even rotating the 6-robot around a single node.

Therefore, because we can move through any variant of all quadrants and transition between them, a 6-robot can traverse the perimeter of any orthogonal convex shape.
\qed
\end{proof}

\FloatBarrier

\begin{figure}
\centering
\includegraphics[width=0.4\textwidth]{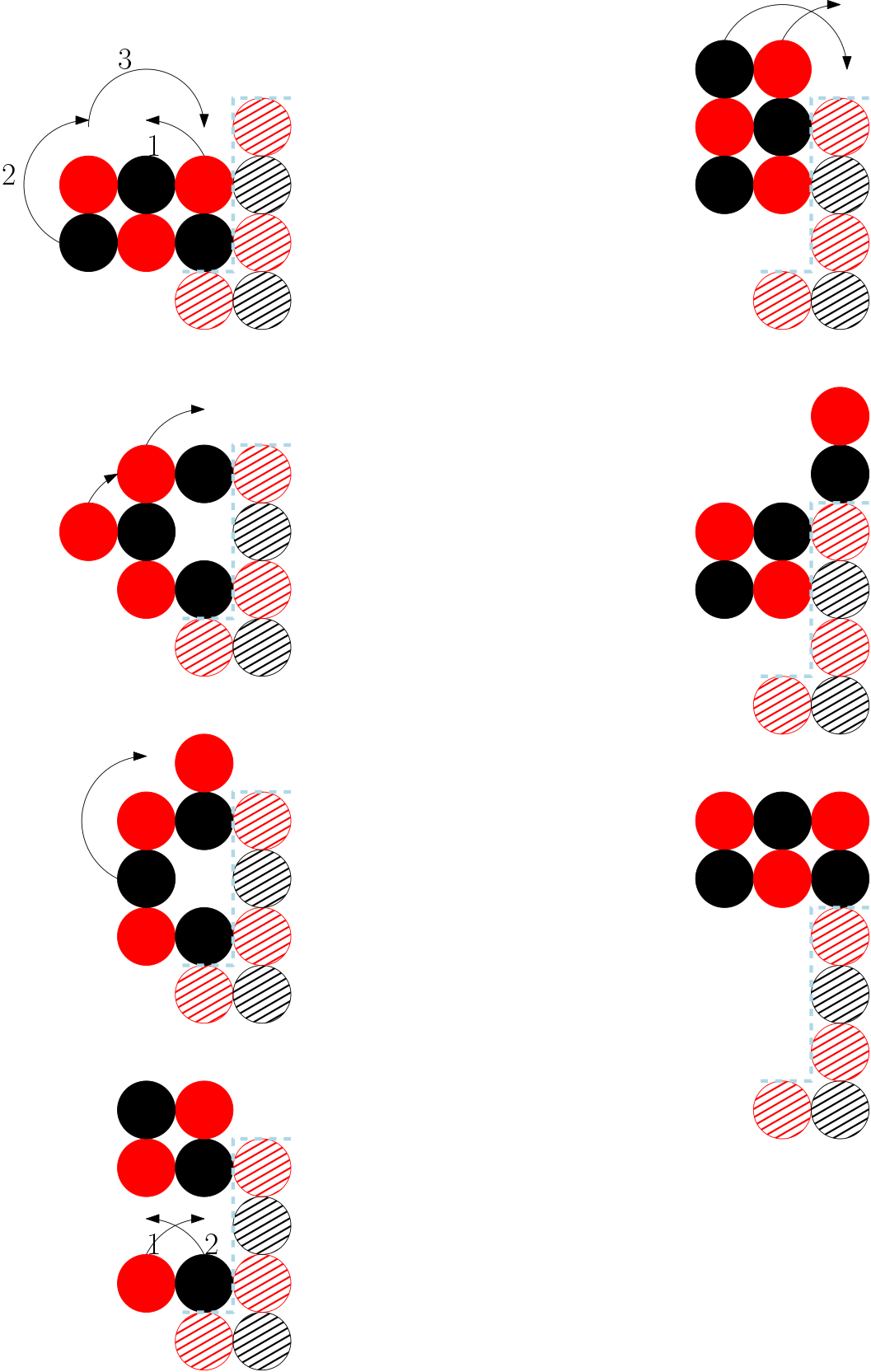}
\caption{Climbing a height 3 vertical with a width 1 horizontal. All figures are read as pairs of columns, top-down.}
\label{fig:climb3}
\end{figure}

\begin{figure}
\centering
\includegraphics[width=0.4\textwidth]{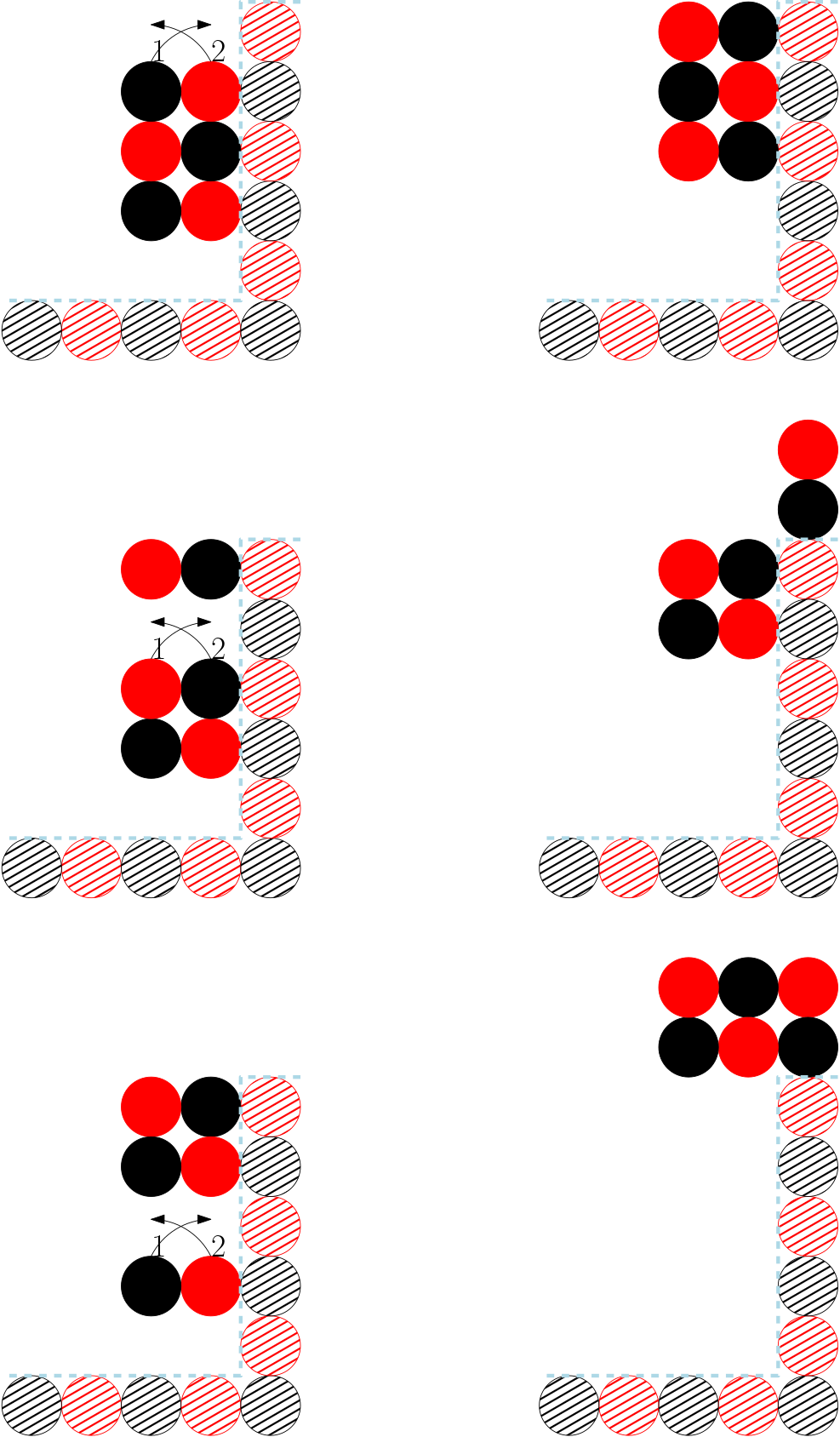}
\caption{Climbing a vertical of arbitrary height with an arbitrary width horizontal. The process starts as in Figure \ref{fig:climb3}, and the upwards slide in the first column of snapshots can be repeated for as long as necessary to climb the wall. This corresponds to case (c) of Figure \ref{fig:climb-scenarios}.}
\label{fig:climb4+}
\end{figure}

\begin{figure}
\centering
\includegraphics[width=0.4\textwidth]{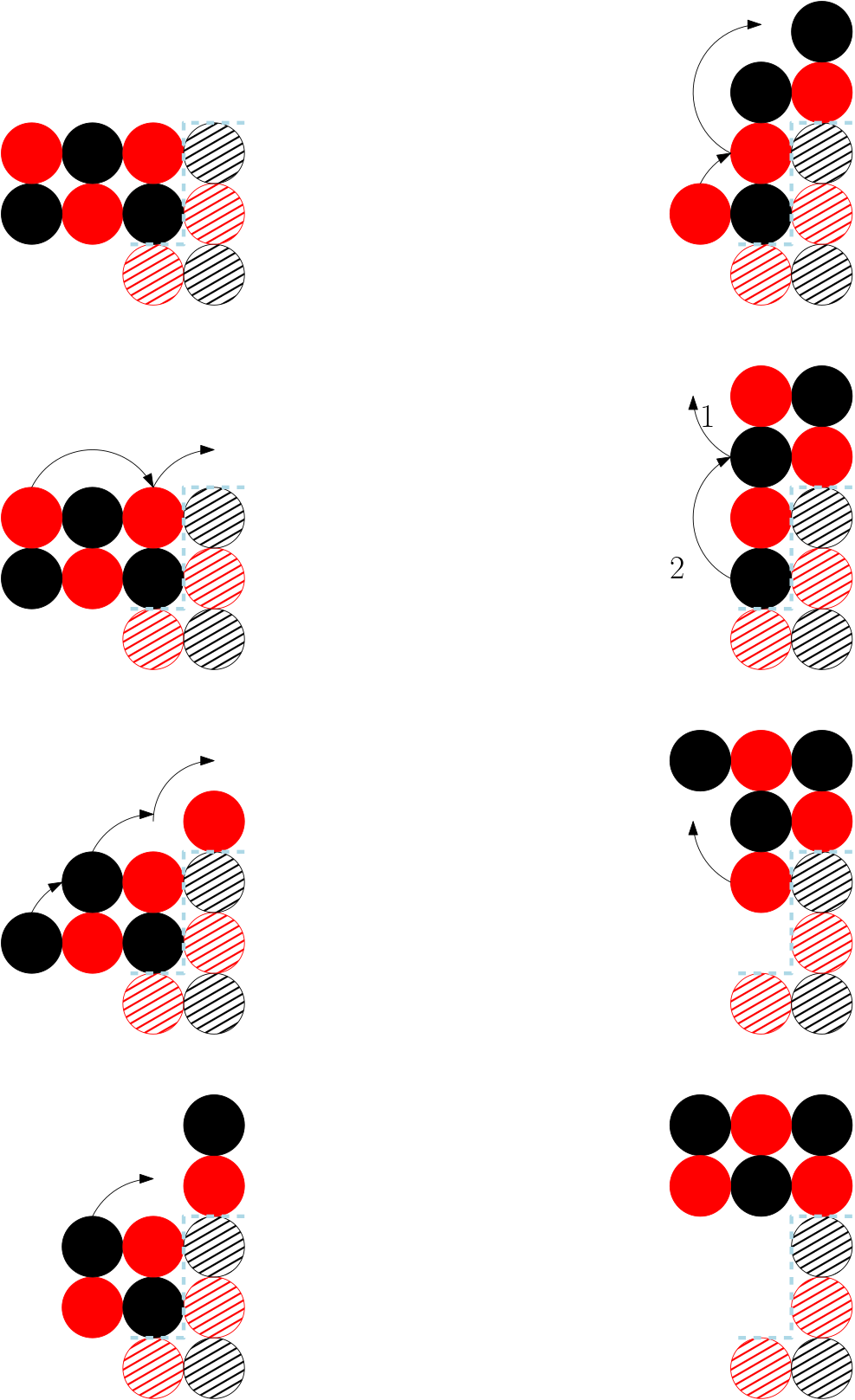}
\caption{Climbing a height 2 vertical with an arbitrary width horizontal. This corresponds to case (b) of Figure \ref{fig:climb-scenarios}.}
\label{fig:climb2}
\end{figure}

\begin{figure}
\centering
\includegraphics[width=0.4\textwidth]{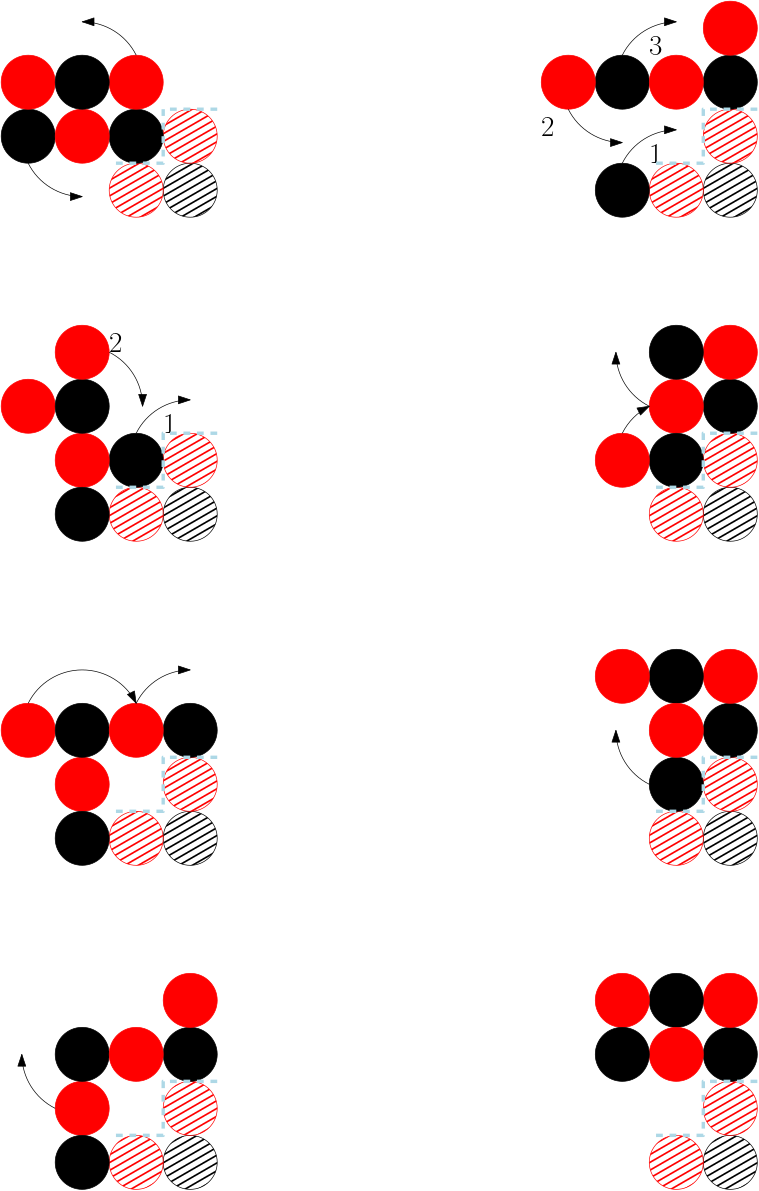}
\caption{Climbing a height 1 vertical with a width 1 horizontal. This corresponds to the first (a) case of Figure \ref{fig:climb-scenarios}.}
\label{fig:climb1_1}
\end{figure}

\begin{figure}
\centering
\includegraphics[width=0.4\textwidth]{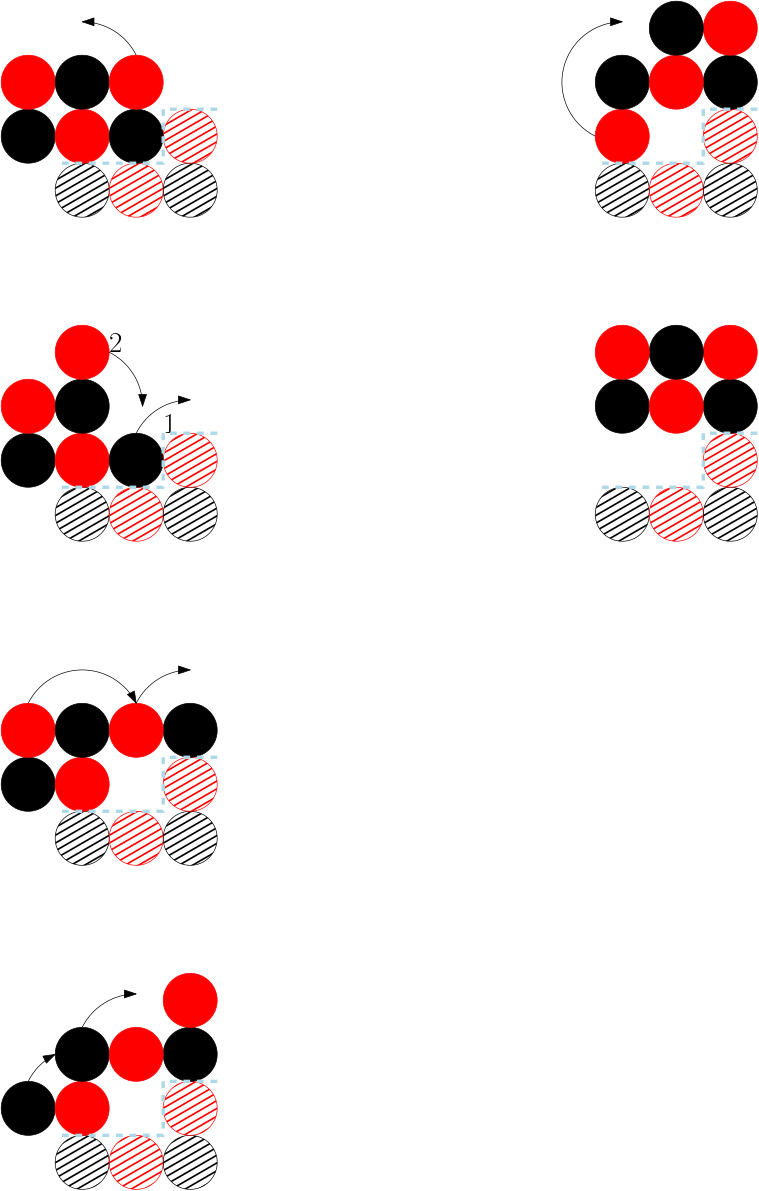}
\caption{Climbing a height 1 vertical with a width 2+ horizontal. This corresponds to the second (a) case of Figure \ref{fig:climb-scenarios}.}
\label{fig:climb1_2}
\end{figure}

\begin{figure}
\centering
\includegraphics[width=0.4\textwidth]{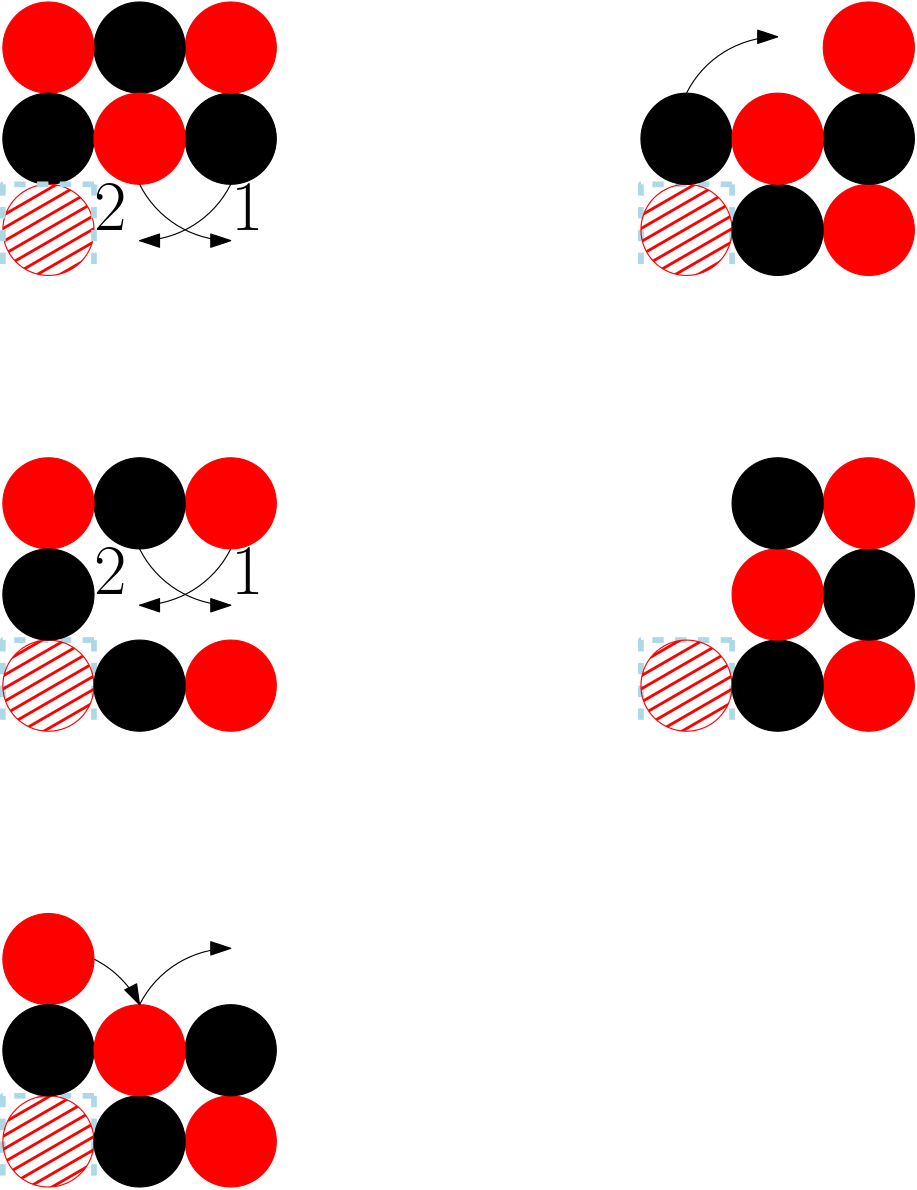}
\caption{Movement into a new quadrant consisting of a line of length 1.}
\label{fig:edge1}
\end{figure}
\FloatBarrier

\subsubsection{7-Robot Movement}
We consider once again the $up$-$right$ quadrant, and generalise to other quadrants later.
We say a cell $c=(x,y)$ is \emph{behind} the robot if $x$ is smaller than the $x$-coordinate of every node in the robot.

The \emph{load} of a 7-robot $S$ is any node $u$ such that $S\setminus \{u\}$ is a $2 \times 3$ shape.
The \emph{position} of the robot is an offset of the $y$ axis (see $y_d$ from Figure \ref{fig:variables}) for the purpose of the initial positioning of the 7-robot.
For our transformations, we maintain the invariant that the 7-robot, after any of its high-level movements, will return to the structure of a $2 \times 3$ shape with a load. For this invariant, we assume that the load is always $behind$ the $2 \times 3$ shape (while remaining connected). We will show in the proof why the situation where the load is positioned differently does not need to be considered. We therefore use $(x, y|y')$ to refer to the co-ordinates of the two cells $(x, y)$ and $(x, y')$ $behind$ the robot which can contain the load, keeping it attached to the robot while the latter is a $2 \times 3$ shape. The \emph{colouring} of a 7-robot is $good$ if the load is in the higher of the two possible positions, and $bad$ if it is in the lower position. Bad colouring usually means the resulting transformations are more difficult.

\begin{lemma}\label{lem:progress-corner-cases-7node}
For all shapes $C \in \mathcal{C}$, if a $2 \times 3$ shape with a load (the 7-robot) is placed in the cells $(x_l-3,y_d+1|y_d+2),(x_l-2,y_d+1),(x_l-1,y_d+1),(x_l,y_d+1),(x_l-2,y_d+2),(x_l-1,y_d+2),(x_l,y_d+2)$, it is capable of translating itself to $(x_r-3,y_u+1|y_u+2),(x_r-2,y_u+1),(x_r-1,y_u+1),(x_r,y_u+1),(x_r-2,y_u+2),(x_r-1,y_u+2),(x_r,y_u+2)$.
\end{lemma}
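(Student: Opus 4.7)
The plan is to lift the 6-robot argument of Lemma \ref{lem:progress-corner-cases} to the 7-robot by viewing the extra node as a passive \emph{load} that is carried along \emph{behind} the $2\times 3$ body, and to verify, case by case over the four corner shapes in $\mathcal{C}$, that the load can always be kept attached and behind throughout a corner traversal so that the invariant is restored at the end. Concretely, I would reuse the same four rotation scripts depicted in Figures \ref{fig:climb3}, \ref{fig:climb4+}, \ref{fig:climb2}, \ref{fig:climb1_1} and \ref{fig:climb1_2}, and for each script check that none of the rotations used by the body of the robot uses the cell currently occupied by the load, and that each rotation leaves the load adjacent to some node of the body. When a body rotation would pass through the load's cell, I would instead first slide the load one step along the back of the robot (exactly the kind of two-node pair rotation used in Figure \ref{fig:slide-across}) and only then perform the body move, restoring the load to its $(x,y\,|\,y')$ position afterwards.

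The case analysis would proceed exactly in parallel to Lemma \ref{lem:progress-corner-cases}: first the tall cases $h(C)\geq 3$, where the climbing step of Figure \ref{fig:climb4+} is repeated along a long wall and gives plenty of free cells behind the robot for the load to ``ride along''; then the $h(C)=2$ case, where Figure \ref{fig:climb2} leaves one column of spare cells behind that can host the load throughout; and finally the two $h(C)=1$ cases, which are the tightest. For each case I would exhibit the explicit rotation script, pointing at the relevant climb figure for the body and indicating, step by step, where the load sits; the width-$2+$ horizontal cases are handled by combining the load-shuffle with the ordinary and special slides of Figures \ref{fig:slide-across} and \ref{fig:slide-on}, which both extend to a 7-robot since the back column has enough room for one extra node.

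The good/bad colouring distinction is where the real work lies. In the good colouring, the load occupies the upper of the two behind-cells, which is typically \emph{away} from the surface being traversed, so the body rotations of the 6-robot proof go through unchanged and only a final repositioning (one pair-rotation along the back) is required to re-establish the invariant. In the bad colouring, the load sits in the lower behind-cell, on the side where many of the climb rotations need to pivot; here I would first perform a single ``parity-flip'' manoeuvre that swaps the load from the lower to the upper behind-cell by rotating it around the adjacent body node and then rotating a body node to close up the gap, after which the good-colouring script applies. Parity of the load must be respected by this flip, so I would pick the flip direction (clockwise or counterclockwise) depending on the colour of the load, using the free back-column cell guaranteed by the fact that the robot body is a $2\times 3$ block.

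The main obstacle will be the $h(C)=1$ bad-colouring cases (corresponding to the (a) scenarios of Figure \ref{fig:climb-scenarios}), since there the vertical wall is so short that there is almost no spare room behind the robot to stash the load while the body performs the climb of Figures \ref{fig:climb1_1}--\ref{fig:climb1_2}. For these I would argue that the load can be temporarily attached one cell further back-and-up (still behind the robot by the definition of \emph{behind}) during the critical rotations, and then pulled back into one of the canonical $(x,y\,|\,y')$ positions once the body has completed its translation. Once this tight case is handled, all other cases reduce to a routine overlay of the load's trajectory on top of the 6-robot scripts, and the conclusion of Lemma \ref{lem:progress-corner-cases-7node} follows by the same translation-by-$(x_r-x_l,\,y_u-y_d)$ accounting as in the 6-robot proof.
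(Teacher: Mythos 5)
There is a genuine gap, and it is centred on your ``parity-flip'' manoeuvre. The two behind-cells $(x,y)$ and $(x,y')$ are vertically adjacent, hence opposite colours in the chessboard colouring of the grid, and the model stipulates that a rotating node can only ever enter cells of its own colour (a single $90\degree$ rotation moves a node to a diagonally adjacent, i.e.\ same-coloured, cell). Consequently the load can never be moved from the lower behind-cell to the upper one by any sequence of rotations, regardless of the direction you choose for the flip: whether the colouring of the 7-robot is \emph{good} or \emph{bad} is an invariant determined by the colour of the load, not a transient state you can repair. Your plan reduces every bad-colouring case to the good-colouring case via this flip, so the bad-colouring half of the lemma --- which is precisely where the paper says the real difficulty lies --- is left unproved. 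The paper instead exhibits entirely separate rotation scripts for the bad-colouring climbs at heights $1$, $2$, $3$ and $3{+}$ (Figures \ref{fig:climb7nodes1_2}, \ref{fig:climb7nodes1_3}, \ref{fig:climb7nodes2_1}, \ref{fig:climb7nodesUni1}), and these cannot be obtained by overlaying a load trajectory on the 6-robot scripts.

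A secondary weakness is the claim that the 6-robot scripts of Figures \ref{fig:climb3}--\ref{fig:climb1_2} can be reused with the load merely ``riding along.'' Each rotation in those scripts requires specific goal and intermediate cells to be empty; an extra node parked behind the body can occupy exactly such a cell, and your proposal only gestures at shuffling the load out of the way without verifying that a legal shuffle exists at each step (the load itself is also colour-constrained in where it can be stashed). The paper sidesteps this by designing new primitives for the 7-robot from scratch --- its sliding motion (Figures \ref{fig:slide-7nodes-high}, \ref{fig:slide-7nodes-low}) alternates between two configurations and is structurally different from the pairwise sliding of the 6-robot --- and then proves the corner traversal by composing special sliding, sliding, and the height- and colouring-specific climbs. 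Without explicit 7-robot scripts, or at least a cell-by-cell verification that the 6-robot scripts survive the presence of the load in both parities, the argument does not go through.
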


\begin{proof}
We present a series of motions which for all shapes $C \in \mathcal{C}$ lead the 7-robot from the leftmost node of the horizontal to the topmost node of the vertical. As in Lemma \ref{lem:progress-corner-cases}, we group some of these motions into high-level motions (i.e. moving the whole 7-robot by moving individual nodes).

Our first motion is \emph{sliding}, depicted in Figures \ref{fig:slide-7nodes-high} and \ref{fig:slide-7nodes-low}. The 7-robot can alternate between these two transformations to slide over the horizontal line of $C$. Alternating between the two is possible because the final configuration of each has the form required by the initial configuration of the other. The second motion is also a version of sliding, called \emph{special sliding}, depicted in Figures \ref{fig:slide-7nodes-on-low}-\ref{fig:slide-7nodes-on-high-2}. The purpose of special sliding is to bring the 7-robot ``onto'' the horizontal line, when it starts from an extreme position from which the sliding motion does not apply.
Figures \ref{fig:slide-7nodes-on-low} and \ref{fig:slide-7nodes-on-high-1} cover the cases where only the bottom-right node of the 7-robot is attached to the horizontal line, while Figure \ref{fig:slide-7nodes-on-high-2} the case where there are two points of attachment but the robot colouring is bad.
By special sliding, the 7-robot can move onto the horizontal line and sliding can then be used to move it across the horizontal, until its bottom-right node is at $(x_r-1,y_d+1)$, i.e., in one of the initial configurations of Figures \ref{fig:climb7nodes1_1} and \ref{fig:climb7nodes1_3} (disregarding the height of the vertical). This covers the horizontal part of $C$. It remains to be shown that the 7-seed can then climb up and then onto the vertical part of $C$. We do this with a motion which we refer to as \emph{climbing}, which covers a few sets of cases.

For the first set of cases, we consider the situations where the height of the vertical is 1. If the colouring is good, then we can rotate the load above the vertical. This is depicted in Figure \ref{fig:climb7nodes1_1}. Note that the load necessarily takes the higher of the two possible cells behind the robot at the start of the translation. If the colouring is bad, then the movements depend on the width of the horizontal, which can be 1 (Figure \ref{fig:climb7nodes1_2}), and 2+ (Figure \ref{fig:climb7nodes1_3}). Note that these movements always deposit the 7-robot in the position for a special slide, and the load always remains behind the robot.

For the next set of cases, we consider the situations where the height of the vertical is 2. Our first is the case where the colouring is bad. In this case, we follow the rotations of Figure \ref{fig:climb7nodes2_1}. When the colouring is good, we have two additional cases. We can follow the rotations of Figure \ref{fig:climb7nodes2_2} and Figure \ref{fig:climb7nodes2_3} to climb up and onto the vertical, respectively.

Finally, we consider the cases where the height of the vertical is at least 3. When the height is exactly 3 and the colouring is bad, we can follow Figure \ref{fig:climb7nodesUni1} to reach the top of the vertical and then Figure \ref{fig:climb7nodes2_3} to climb onto it. In the good colouring case, we can repeat the rotations of Figure \ref{fig:climb7nodes2_2} and Figure \ref{fig:climb7nodesUni2} until we reach the top of the vertical, and then perform the rotations of Figure \ref{fig:climb7nodes1_1} or Figure \ref{fig:climb7nodes2_3}, depending on whether the load is in the higher or lower of the two possible positions. We do the same for when the height is over 3 and the colouring is bad, but begin with the Figure \ref{fig:climb7nodesUni1} rotation.

We have thus shown how the 7-robot can climb up and onto verticals of any possible length. Putting everything together, starting from one of the initial positions specified by the lemma statement relative to the horizontal line of $C$, the 7-robot can use special sliding to move onto the horizontal, followed by sliding to move across it, and finally climbing to move up and onto the vertical, for all possible widths and heights of $C$. Moreover, the final position of the 7-robot relative to the vertical is as required by the statement.
\qed
\end{proof}

\begin{theorem}\label{the:7-robot-traverse}
For any orthogonal convex shape $S$, a 7-robot is capable of traversing the perimeter of $S$.
\end{theorem}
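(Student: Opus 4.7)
The plan is to follow exactly the same strategy as in Theorem \ref{the:6-robot-traverse}, lifting each ingredient to the 7-robot setting. First, I would invoke Lemma \ref{lem:corner-cases} to decompose the up-right quadrant of the perimeter of $S$ (the portion described by $d_1(d_1\;|\;d_2)^{*}d_2(d_2\;|\;d_3)^{*}d_3$) into a sequence of corner-shape scenarios $C_1, C_2, \ldots \in \mathcal{C}$. The decomposition is a purely geometric statement about the perimeter and is independent of how many nodes the robot carries, so no modification is needed.

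Next, I would apply Lemma \ref{lem:progress-corner-cases-7node} to each $C_i$ in turn. That lemma certifies that the 7-robot, starting in one of the two admissible initial configurations relative to the horizontal line of $C_i$ (i.e., $2\times 3$ shape with a load $behind$ it, in either the good or the bad colouring), can translate itself to the corresponding admissible configuration at the top of the vertical line of $C_i$. Crucially, the final configuration in Lemma \ref{lem:progress-corner-cases-7node} matches the initial configuration required for the next corner, so the output of one application chains directly into the input of the next. Iterating along the corner sequence therefore traverses the entire up-right quadrant. By rotating the whole system by multiples of $90\degree$ and re-labelling the directions, the same argument handles the other three quadrants $d_2(d_2\;|\;d_3)^{*}d_3(d_3\;|\;d_4)^{*}d_4$, $d_3(d_3\;|\;d_4)^{*}d_4(d_4\;|\;d_1)^{*}d_1$, and $d_4(d_4\;|\;d_1)^{*}d_1(d_1\;|\;d_2)^{*}d_2$.

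It remains to handle the transitions between consecutive quadrants, and here the main obstacle appears. In the 6-robot case a quadrant consisting of a single line required an extra maneuver (Figure \ref{fig:edge1}). For the 7-robot, the analogous single-line edge case has to be verified separately, and moreover the load's position (good vs bad colouring) before and after the transition must be compatible with the initial configuration demanded for the next quadrant. I would argue this by a short case analysis: for a ``generic'' transition (next quadrant has length at least two on both of its lines), the final configuration provided by Lemma \ref{lem:progress-corner-cases-7node} already fits one of the admissible initial configurations, so nothing extra is needed; for the degenerate case of a quadrant reduced to a single line, a dedicated sequence of rotations, mirroring the edge move of Figure \ref{fig:edge1} and augmented with the standard special-slide motions used in the proof of Lemma \ref{lem:progress-corner-cases-7node}, reseats the load $behind$ the $2\times 3$ block in the orientation required for the next quadrant. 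Because the special-slide motions come in both good-colouring and bad-colouring variants (Figures \ref{fig:slide-7nodes-on-low}--\ref{fig:slide-7nodes-on-high-2}), every pairing of incoming and outgoing colourings can be realised.

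Putting these three ingredients together, the 7-robot can traverse every quadrant and every transition, and hence the entire perimeter of $S$. This completes the proof.
\qed
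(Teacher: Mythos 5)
Your proposal is correct and follows essentially the same route as the paper: Lemma \ref{lem:progress-corner-cases-7node} handles each corner of the up-right quadrant, rotation of the system handles the other quadrants, and a dedicated colour-dependent maneuver (the paper supplies these in Figures \ref{fig:7nodes-edge-red} and \ref{fig:7nodes-edge-black}) handles the single-line quadrant transition. You correctly anticipated that the 7-robot edge case splits on whether the load matches the colour of the line's node, which is exactly the case distinction the paper makes.
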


\begin{proof}
By Lemma \ref{lem:progress-corner-cases-7node} we have shown for the up-right quadrant firstly that it is possible to slide across a horizontal of arbitrary width no matter the robot's initial position, that it is possible to climb a height 2 vertical, that part of this movement can be repeated indefinitely to climb verticals of arbitrary height, that special movements exist for the height 1 vertical and that all of this is possible no matter how long the horizontal line the object lies on is, nor whether the 7-robot is red or black.

By rotating the robot and the quadrant as necessary, we are able to replicate our movements for all other quadrants: $d_4(d_4\;|\;d_1)^{*}$ (the left-up quadrant) $d_2(d_2\;|\;d_3)^{*}$ (the right-down quadrant) and $d_3(d_3\;|\;d_4)^{*}$ (the left-down quadrant). All that remains is the transition between the quadrants.

There are two cases. In the first case, the next quadrant consists of multiple lines. In this case, when the line signifying the end of the current quadrant is met it is sufficient to begin movements appropriate to travelling in the next quadrant. However, there is an edge case where a quadrant consists of a single line. In this case, a unique movement is necessary (see Figure \ref{fig:7nodes-edge-red} and Figure \ref{fig:7nodes-edge-black}) to transfer the 7-node object onto the line, with the exact movement depending on whether the first node of the line is the same or a different colour from the load. These movements are then followed by special slides to put the object into position for the next quadrant. Naturally, these transformations are reversible and can be mirrored as well. We are therefore able to deal with any quadrant transition, even rotating the 7-node object around a single node.

Therefore, because we can move through any variant of all quadrants and transition between them, a 7-robot can traverse the perimeter of any orthogonal convex shape.
\qed
\end{proof}

\FloatBarrier

\begin{figure}
\centering
\includegraphics[width=0.45\linewidth]{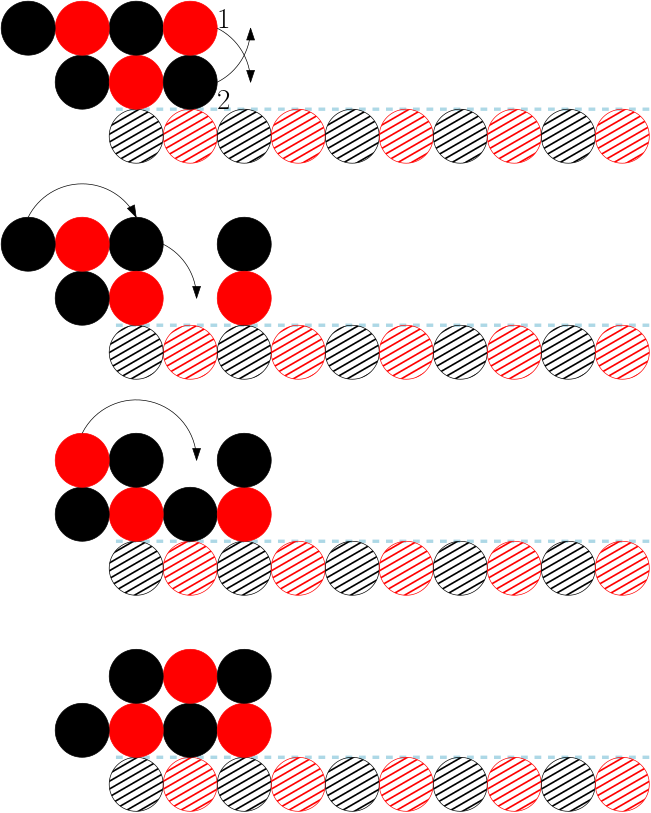}
\caption{Sliding across a line with a 7-node robot - case 1}
\label{fig:slide-7nodes-high}
\end{figure}

\begin{figure}
\includegraphics[width=0.4\textwidth]{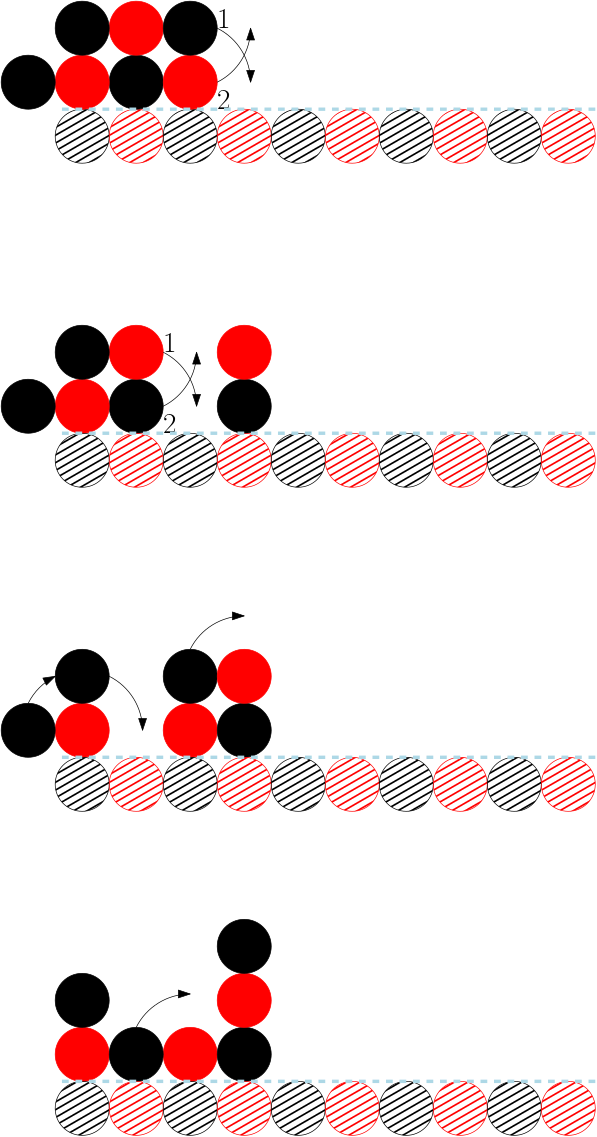}
\hspace{1.5cm}
\includegraphics[width=0.4\textwidth]{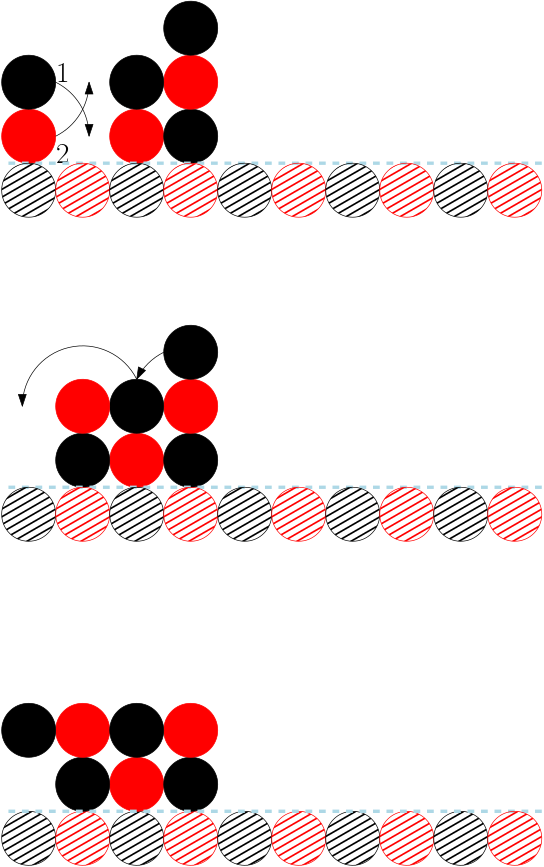}
\caption{Sliding across a line with a 7-node robot - case 2}
\label{fig:slide-7nodes-low}
\end{figure}

\begin{figure}
\includegraphics[width=0.4\linewidth]{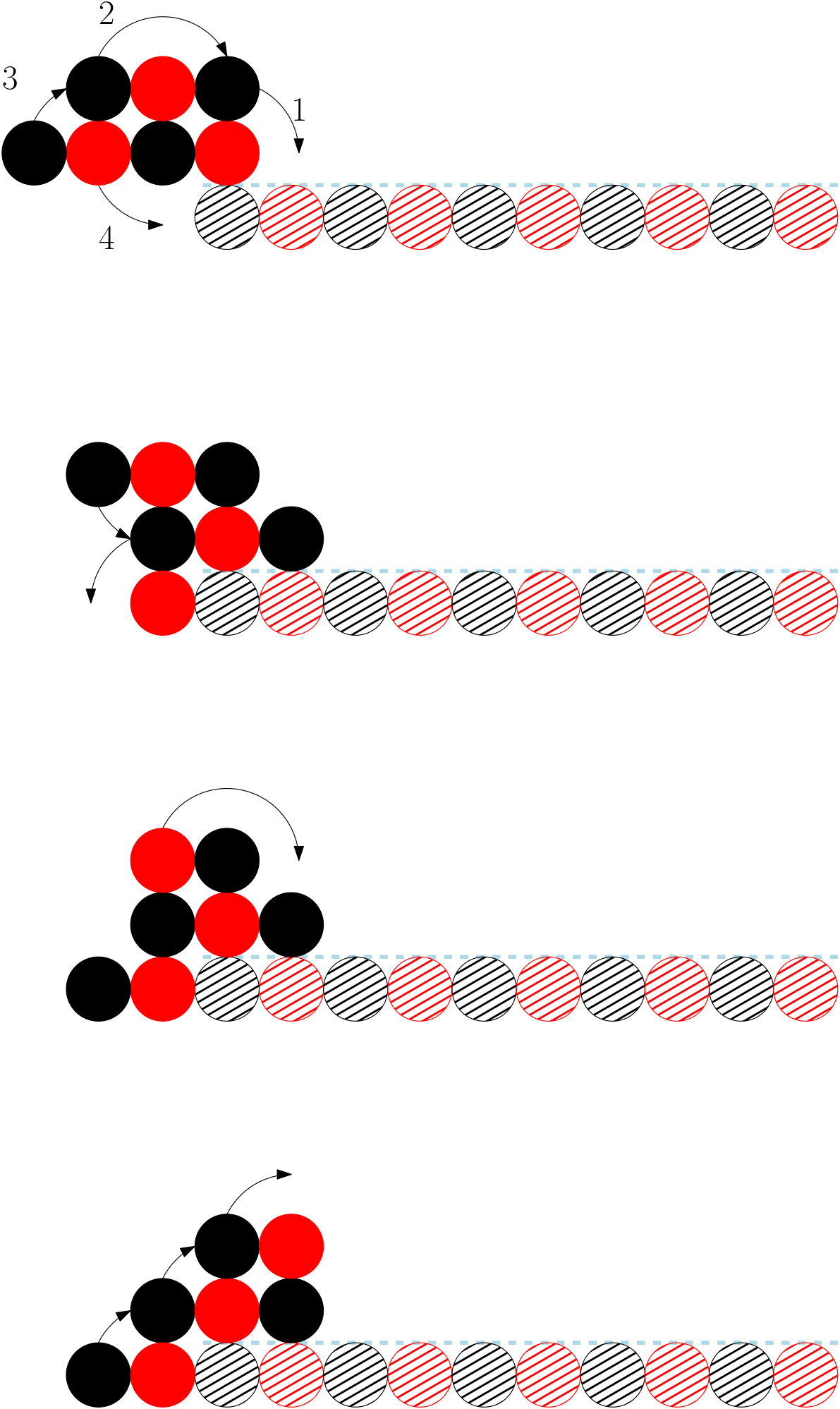}
\hspace{1.5cm}
\includegraphics[width=0.4\linewidth]{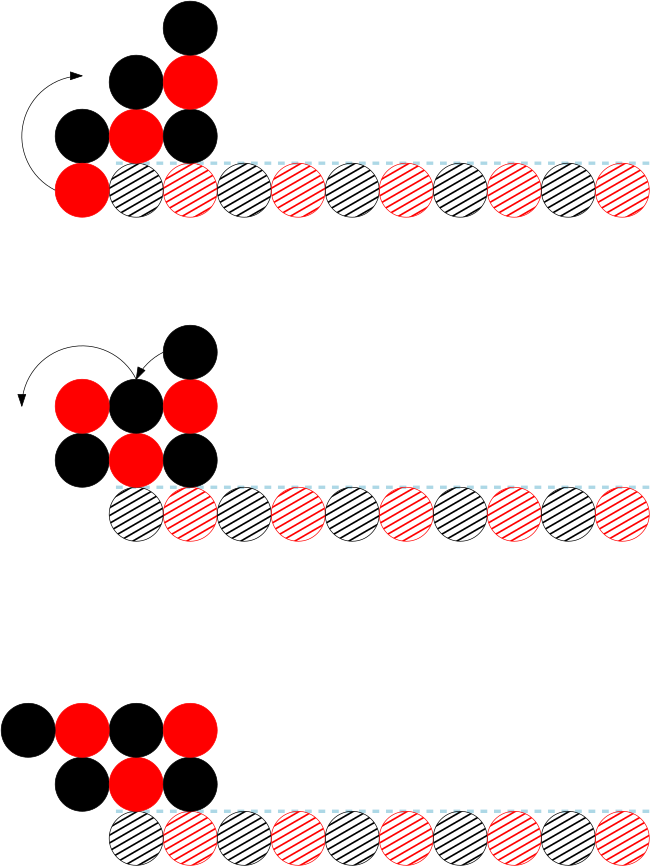}
\caption{Sliding on a line with a 7-node robot - case 1}
\label{fig:slide-7nodes-on-low}
\end{figure}

\begin{figure}
\includegraphics[width=0.4\textwidth]{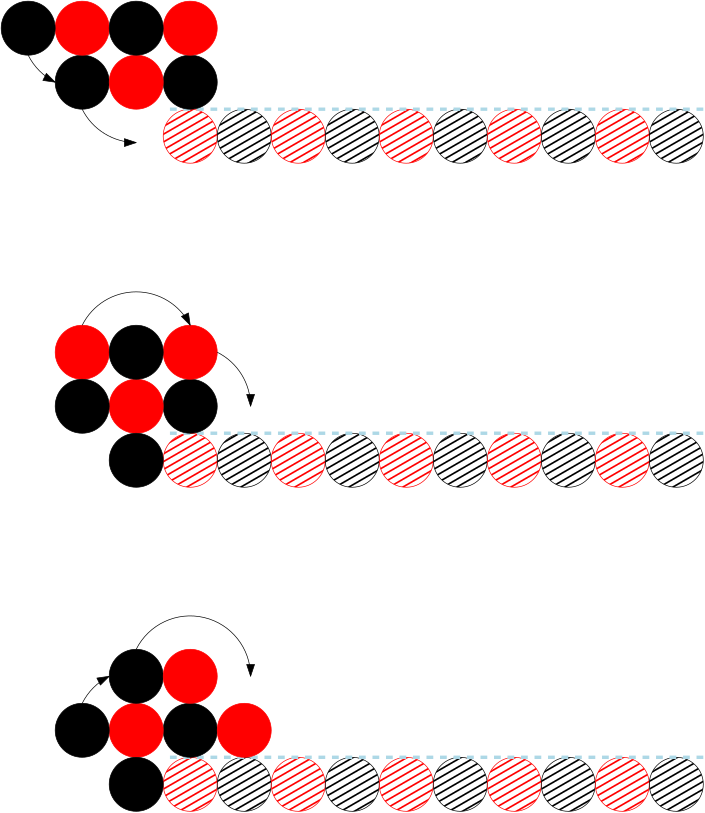}
\hspace{1.5cm}
\includegraphics[width=0.4\textwidth]{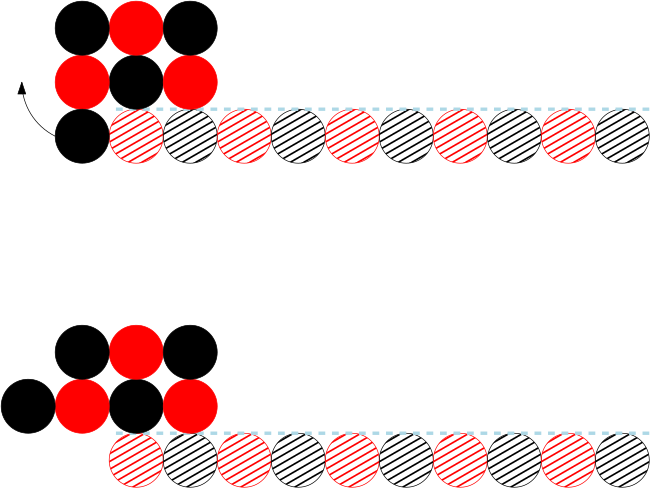}
\vspace{1cm}
\caption{Sliding on a line with a 7-node robot - case 2}
\label{fig:slide-7nodes-on-high-1}
\end{figure}

\begin{figure}
\includegraphics[width=0.4\textwidth]{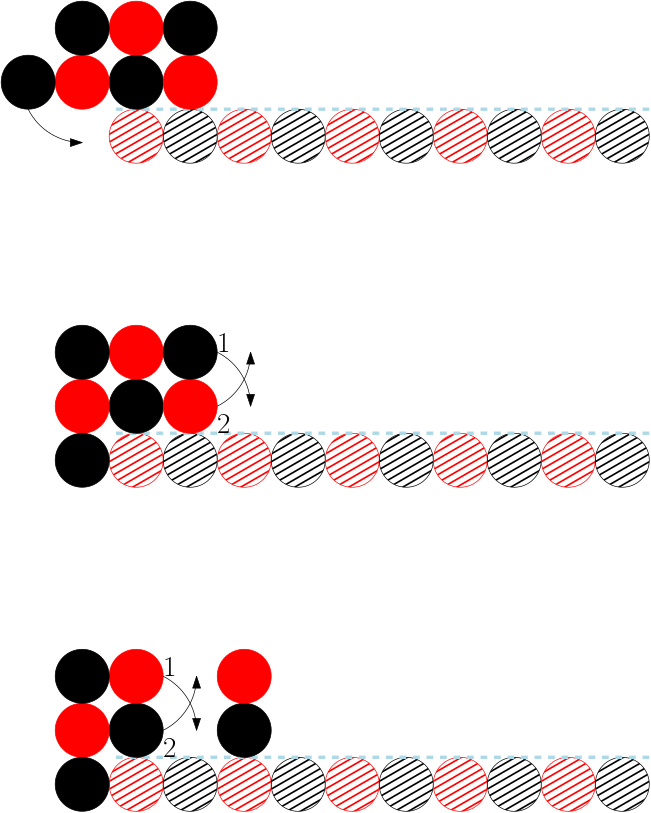}
\hspace{1.5cm}
\includegraphics[width=0.4\textwidth]{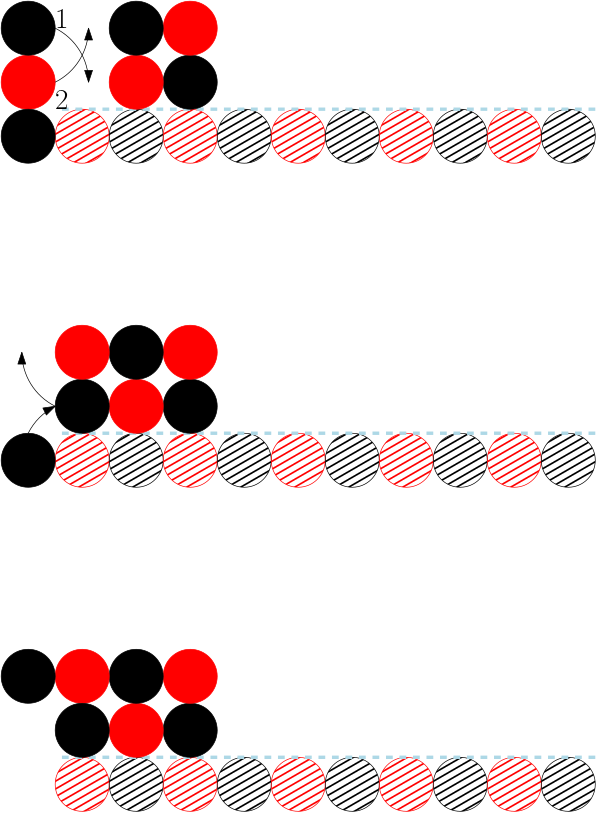}
\vspace{1cm}
\caption{Sliding on a line with a 7-node robot - case 3. The transformation of Figure \ref{fig:slide-7nodes-high} can be applied for further movement.}
\label{fig:slide-7nodes-on-high-2}
\end{figure}

\begin{figure}
\centering
\includegraphics[width=0.45\linewidth]{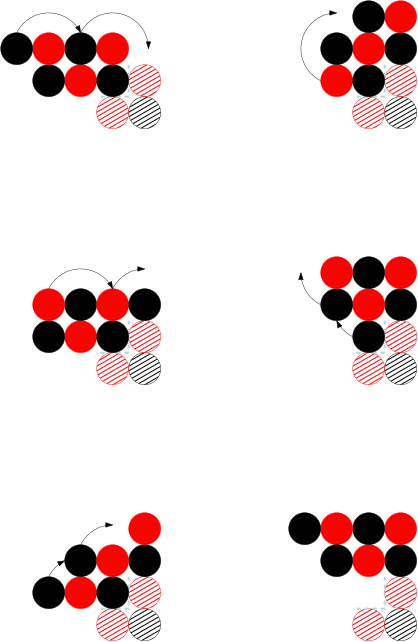}
\caption{Climbing on top of a vertical when the load is in the upper cell.}
\label{fig:climb7nodes1_1}
\end{figure}

\begin{figure}
\centering
\includegraphics[width=0.45\linewidth]{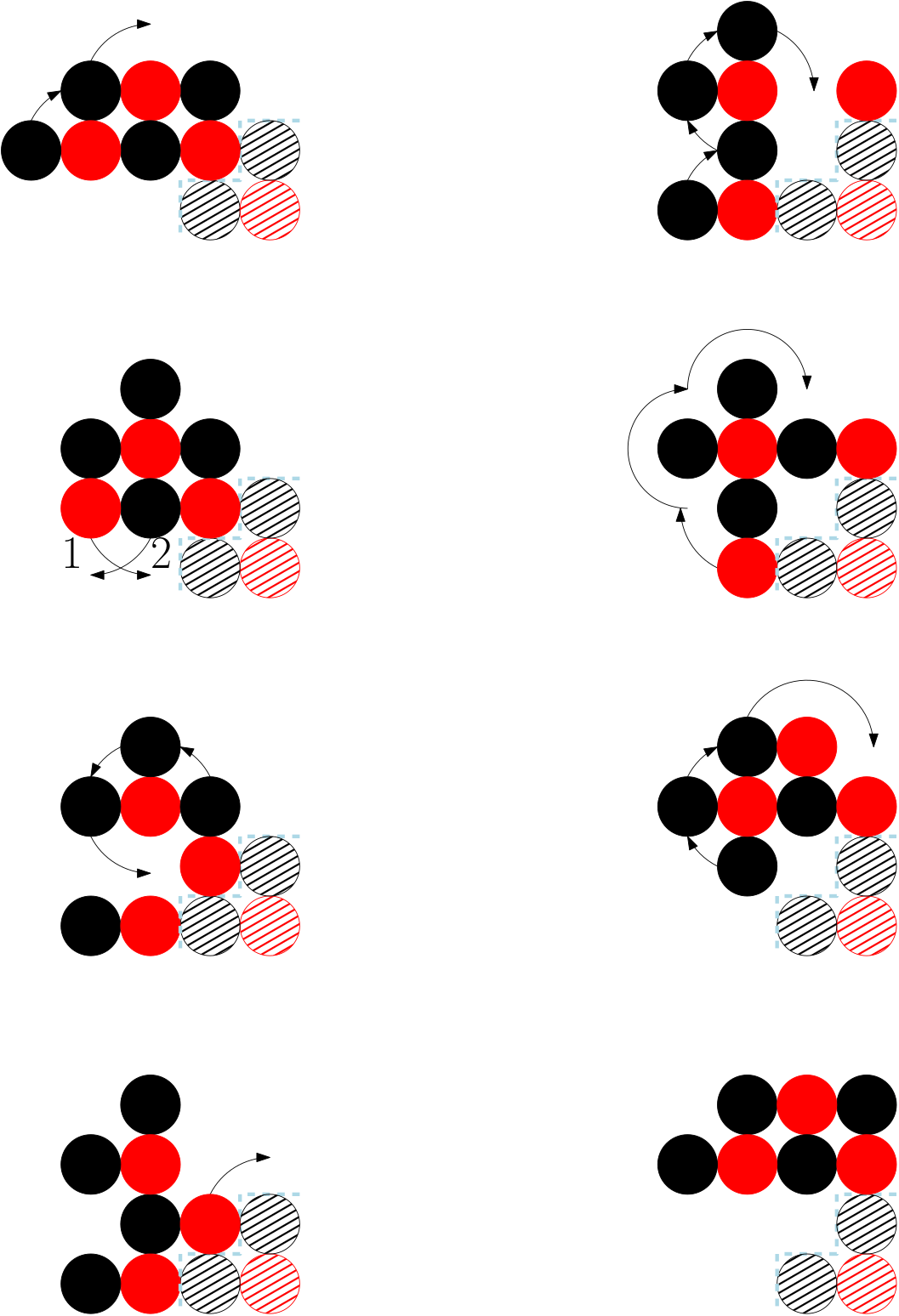}
\caption{Climbing a height 1 vertical with a width 1 horizontal and bad colouring}
\label{fig:climb7nodes1_2}
\end{figure}

\begin{figure}
\centering
\includegraphics[width=0.45\linewidth]{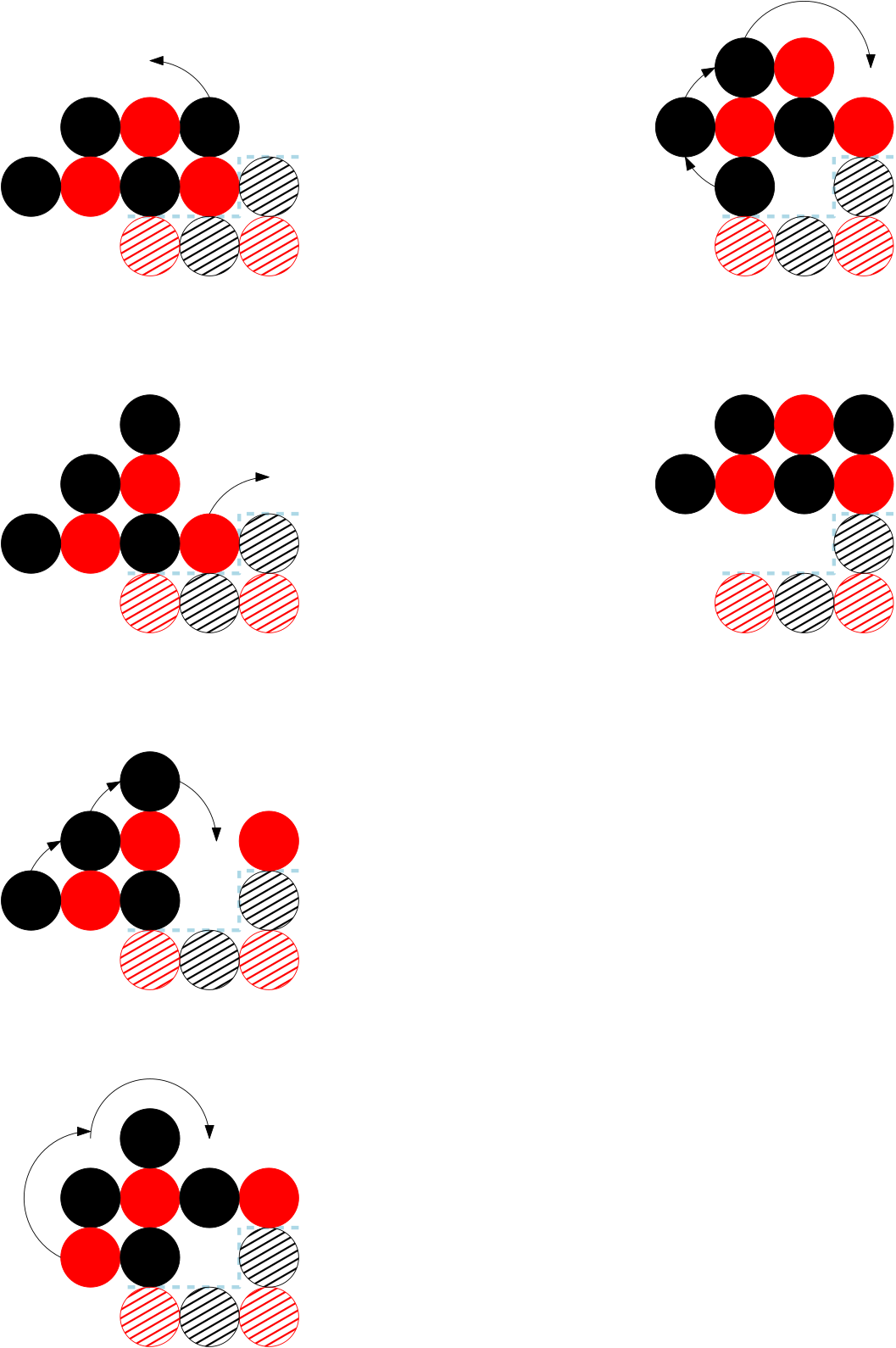}
\caption{Climbing a height 1 vertical with a width 2+ horizontal and bad colouring}
\label{fig:climb7nodes1_3}
\end{figure}

\begin{figure}
\centering
\includegraphics[width=0.45\linewidth]{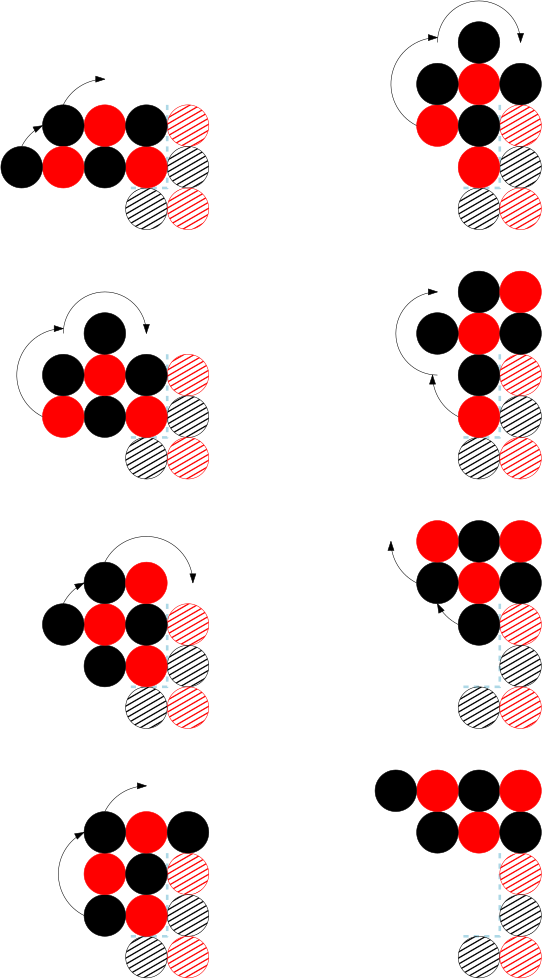}
\caption{Climbing on top of a vertical of height $2$ from position 0 with bad colouring.}
\label{fig:climb7nodes2_1}
\end{figure}

\begin{figure}
\centering
\includegraphics[width=0.45\linewidth]{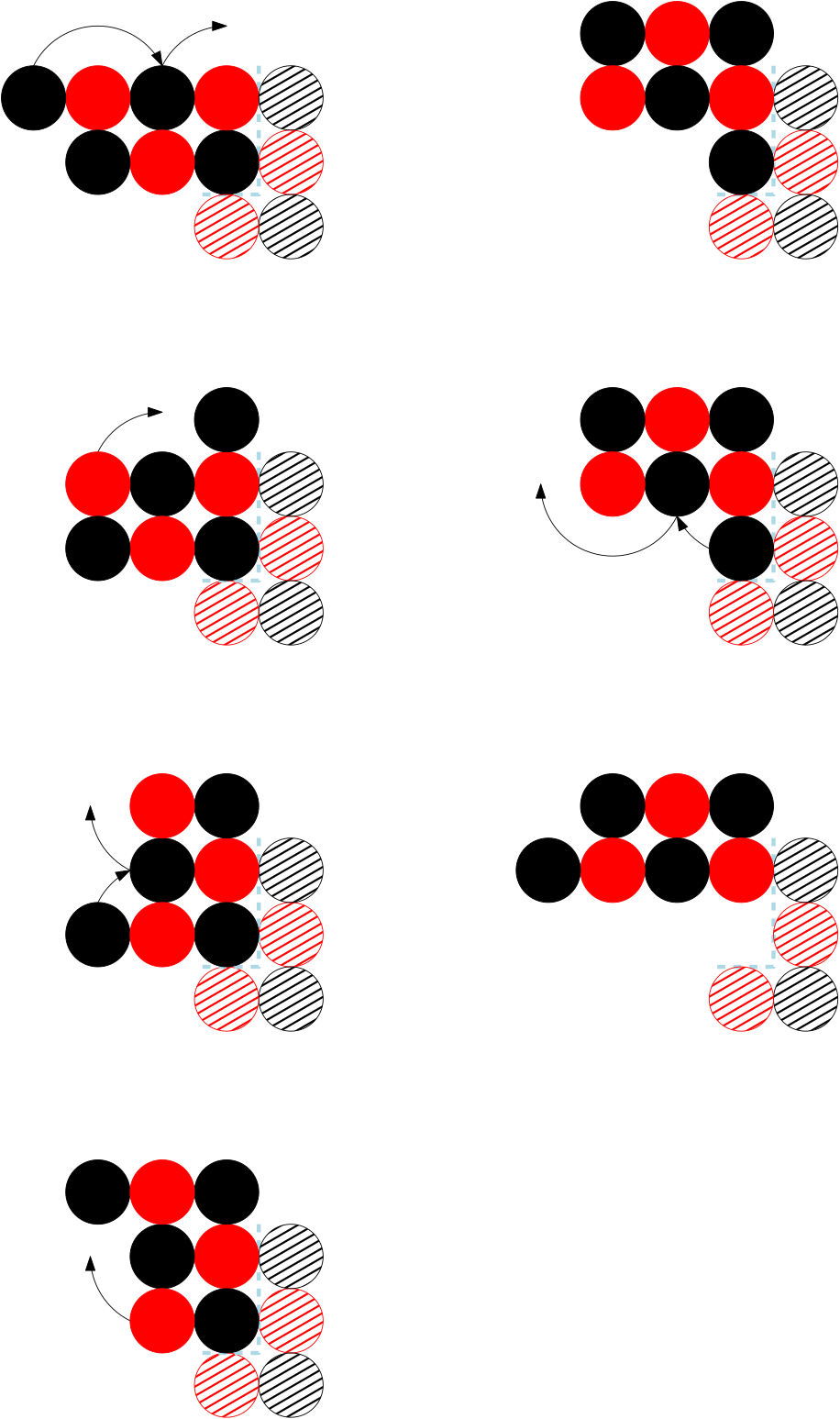}
\caption{Climbing a vertical of height 2+}
\label{fig:climb7nodes2_2}
\end{figure}

\begin{figure}
\centering
\includegraphics[width=0.45\linewidth]{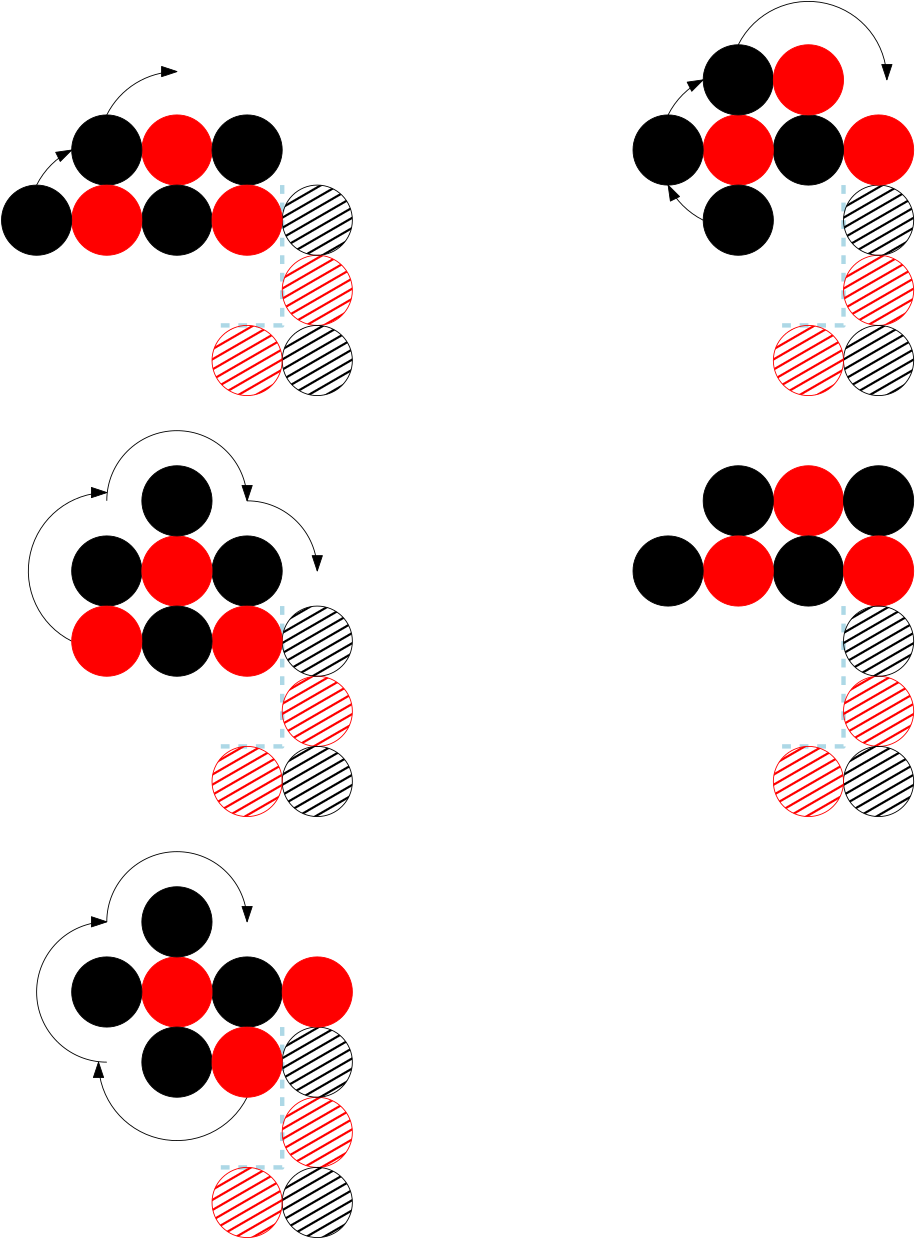}
\caption{Climbing on top of the vertical from position 1 when the load is in the lower cell.}
\label{fig:climb7nodes2_3}
\end{figure}

\begin{figure}
\centering
\includegraphics[width=0.45\linewidth]{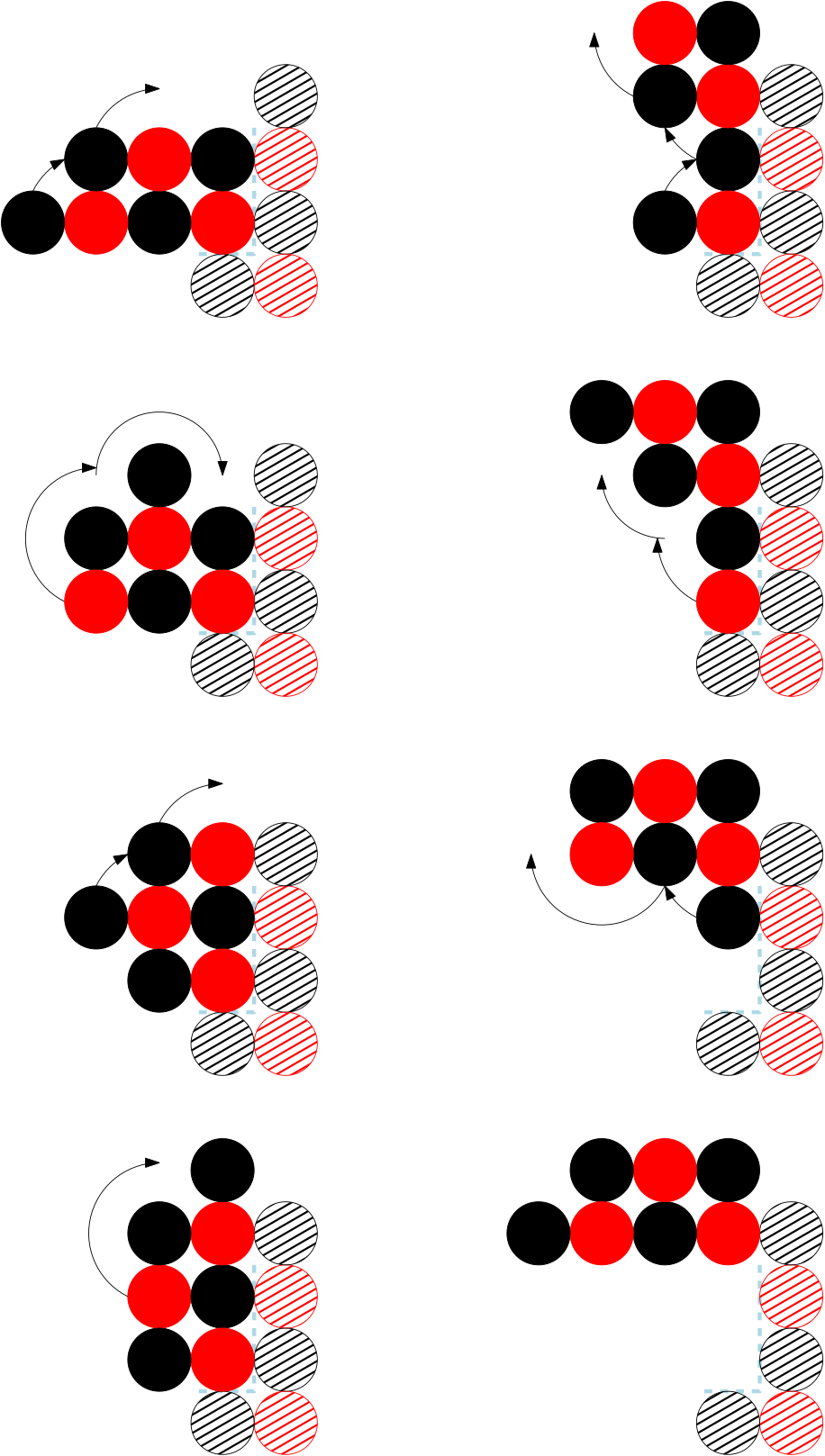}
\caption{Climbing a vertical of height 3 from position 0 with bad colouring.}
\label{fig:climb7nodesUni1}
\end{figure}

\begin{figure}
\centering
\includegraphics[width=0.45\linewidth]{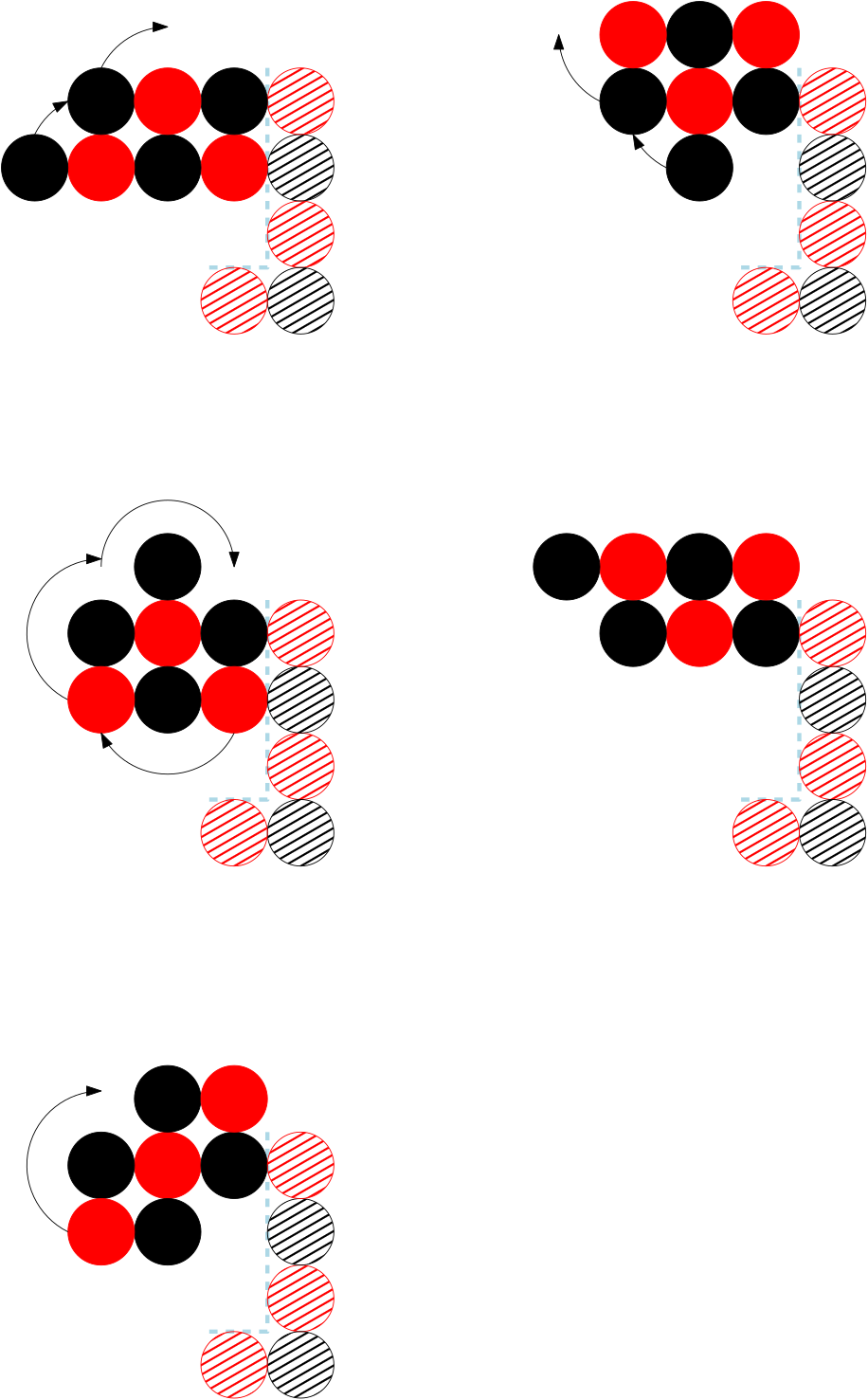}
\caption{Climbing a vertical of height 3+ from position 2+.}
\label{fig:climb7nodesUni2}
\end{figure}

\begin{figure}
\centering
\includegraphics[width=0.45\textwidth]{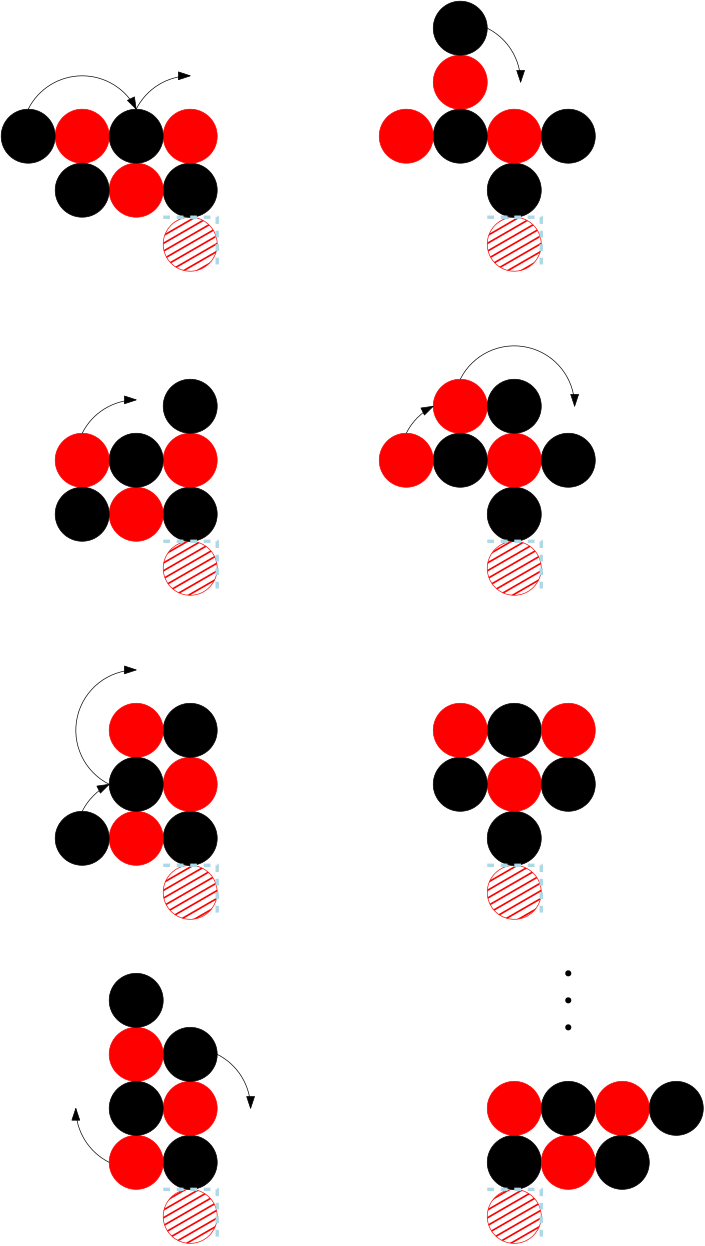}
\centering
\hspace{0.25cm}
\includegraphics[width=0.4\linewidth]{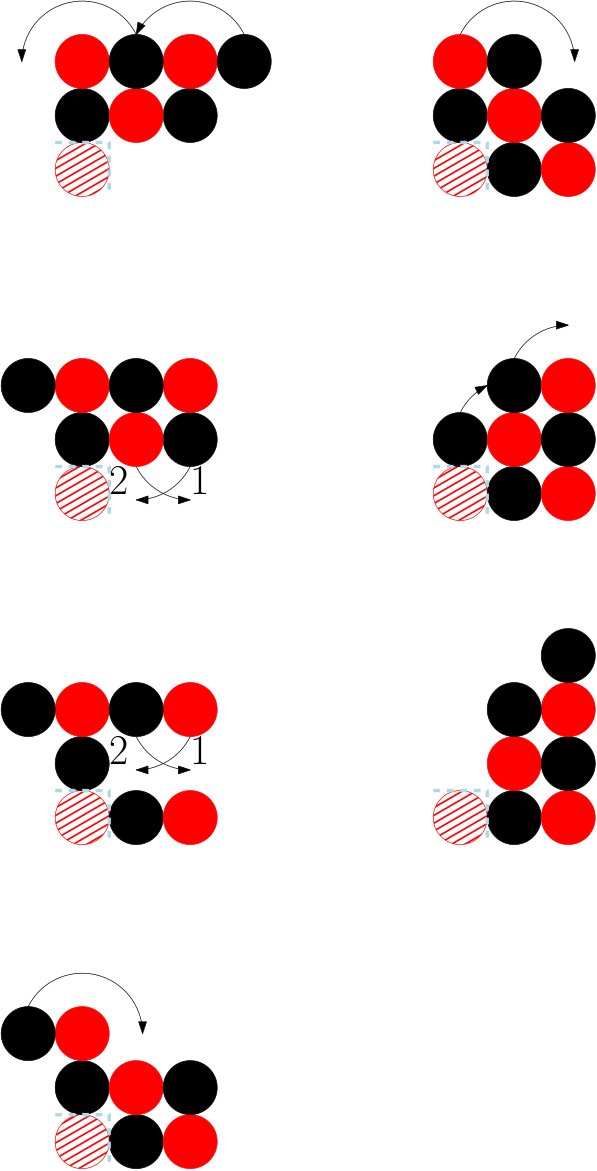}
\caption{Movement into a new quadrant consisting of a line of length 1 when the perimeter node is not the same colour of the load. To reach the configuration after the dots, the operation is repeated in an inverted manner.}
\label{fig:7nodes-edge-red}
\end{figure}

\begin{figure}
\centering
\includegraphics[width=0.45\textwidth]{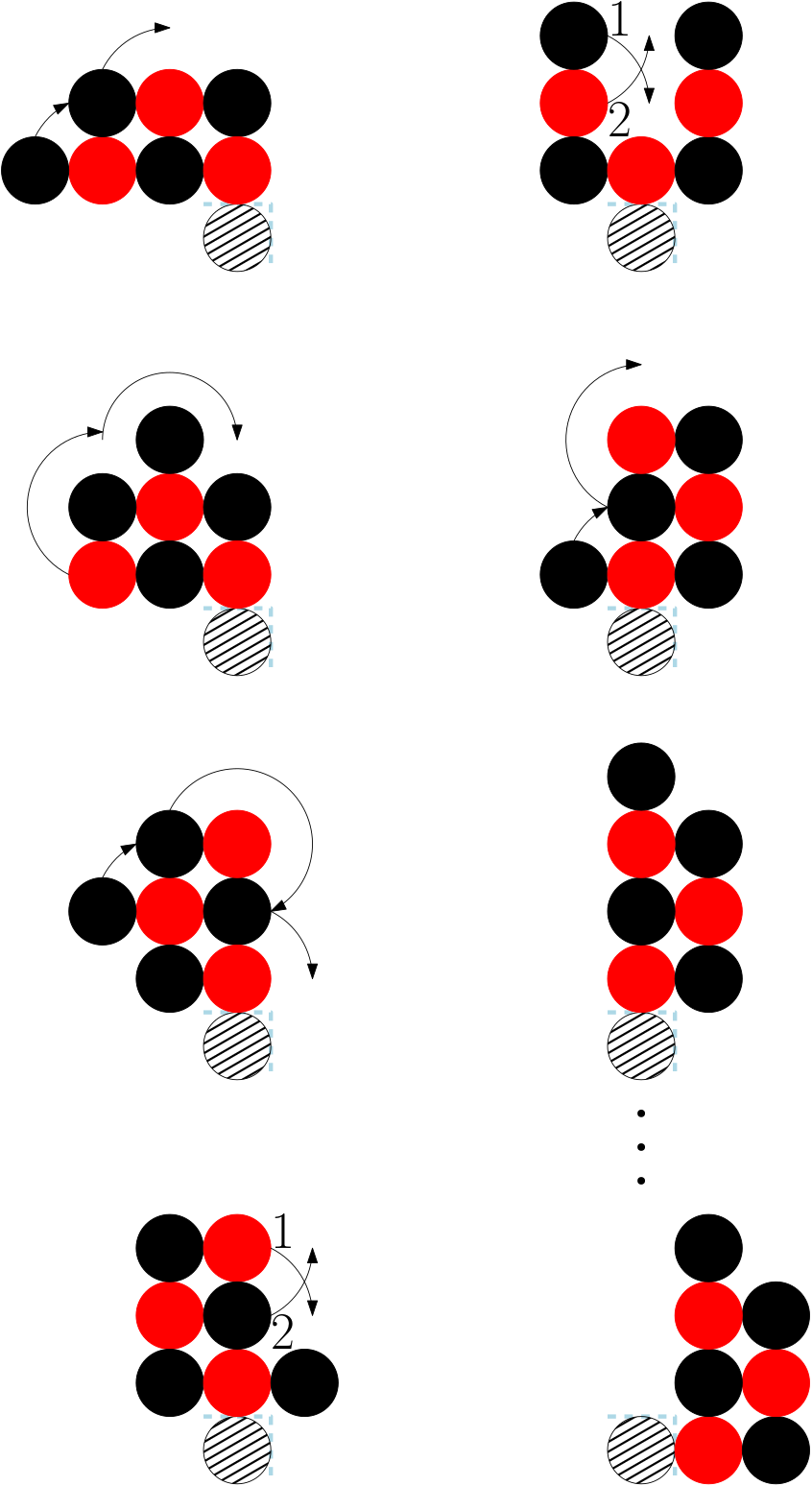}
\caption{Movement into a new quadrant consisting of a line of length 1, with the perimeter node the same colour as the load. To reach the configuration after the dots, we follow a rotated version of the transformation in Figure \ref{fig:climb7nodes2_3}.}
\label{fig:7nodes-edge-black}
\end{figure}
\FloatBarrier

\subsubsection{Repository Traversal}

Whenever the single-black repository is occupied, the robot may need to traverse a non-convex region when moving between $S$ and the extended staircase. The following lemma shows that this is not an issue.

\begin{lemma}\label{lem:black-repository}
If the single-black repository of the extended staircase is occupied, then both the 6-robot and the 7-robot are able to traverse past it.
\end{lemma}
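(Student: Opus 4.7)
The plan is to reduce the traversal past the single-black repository to a small, bounded number of local configurations and to verify that each one is either already covered by the corner cases of $\mathcal{C}$ (via Lemma \ref{lem:progress-corner-cases} and Lemma \ref{lem:progress-corner-cases-7node}) or can be handled by an explicit short sequence of rotations. First I would fix coordinates: the single-black repository sits at $(x_l-2, y_d)$, adjacent to the bicolour pair cell $(x_l-1, y_d)$, and by construction the cells $(x_l-2, y_d+1)$, $(x_l-2, y_d-1)$, and $(x_l-3, y_d)$ are empty. So the occupied repository node forms an isolated single-node bump attached by exactly one edge to the rest of the extended staircase.

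Next I would enumerate the possible directions in which the robot may approach the bump along the perimeter. By the anchor-node convention of Lemma \ref{lem:shape-staircase-properties}, the robot is moving between the external surface of $S$ and the extended staircase, so it either arrives from above (descending along the column at $x_l-1$) or from the left/below (coming around the free region at $x<x_l-2$). In each of these cases the local perimeter behaviour around $(x_l-2, y_d)$ is described by a short string in $\{d_1,d_2,d_3,d_4\}$ that, up to reflection and rotation, consists of a length-1 horizontal followed by a length-1 vertical, then a $d_3$ back down; this is precisely the height-$1$/width-$1$ corner of Figure \ref{fig:climb-scenarios}(a), with exactly the configuration handled by Figure \ref{fig:climb1_1} for the 6-robot and by Figures \ref{fig:climb7nodes1_1}--\ref{fig:climb7nodes1_3} for the 7-robot.

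The remaining subtlety is the transition out of the bump on the far side. Here the perimeter immediately re-enters the staircase proper, so the robot must effectively execute two length-1 corners back-to-back, a situation that is the edge case handled by Lemma \ref{the:6-robot-traverse} and Theorem \ref{the:7-robot-traverse} via Figure \ref{fig:edge1} (for the 6-robot) and Figures \ref{fig:7nodes-edge-red}--\ref{fig:7nodes-edge-black} (for the 7-robot). I would verify that the free cells required by those movements, namely the cells behind and above the robot while it pivots around the bump, are indeed empty in the extended staircase: this follows from the invariant of Algorithm \ref{alg:extended-staircase-gen-alg} that no node of $W\cup T$ lies in the strip $x\leq x_l-2$ other than the single-black repository itself, combined with the height bound on the red repository.

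The main obstacle I anticipate is the 7-robot case, where one must additionally track the position and colour of the load so that it remains \emph{behind} the $2\times 3$ body throughout the pivot. To handle it, I would break the traversal into two stages: first, use a special slide (Figures \ref{fig:slide-7nodes-on-low}--\ref{fig:slide-7nodes-on-high-2}) to bring the robot into the good-colouring configuration on the short horizontal above the bump; then apply the height-1 climb, which by inspection terminates with the load again in one of the two admissible cells behind the robot. Since every sub-movement invoked is a movement already proved valid in the earlier lemmas, concatenating them yields a traversal past the occupied single-black repository for both robot sizes.
\qed
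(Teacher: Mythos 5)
Your proposal misses the central difficulty that the paper's proof is actually about. You model the occupied repository as ``an isolated single-node bump attached by exactly one edge'' and then claim the traversal reduces to the height-$1$/width-$1$ corner of $\mathcal{C}$ plus the quadrant-transition edge cases. But the corner cases of Lemmas \ref{lem:progress-corner-cases} and \ref{lem:progress-corner-cases-7node} are proved only for perimeter sections of \emph{orthogonal convex} shapes, and the whole point of Lemma \ref{lem:black-repository} is that the occupied single-black repository creates a non-convex region: there is a vertical \emph{gap} of empty cells between the repository node at $(x_l-2,y_d)$ and the nearest node of the shape below it in that column, and this gap can have any size. The standard slides and climbs rely on the robot maintaining edge-connectivity with nodes of the shape at specific intermediate positions; when those cells are empty because of the gap, the movements you invoke are simply not applicable, and concatenating ``already proved'' sub-movements does not go through.

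The paper's proof is organised as a case analysis on the size of this gap. For a $1$-cell gap it checks that the existing special slide (Figure \ref{fig:slide-on}) and the $7$-robot slides (Figures \ref{fig:slide-7nodes-high} and \ref{fig:slide-7nodes-low}) happen not to depend on the missing cell for connectivity; for gaps of size $2$, $3$, and $4{+}$ it introduces entirely new sliding and climbing motions (Figures \ref{fig:repo-6node-slide}--\ref{fig:repo-7node-climb-3}) for both robot sizes and both load positions. Your argument contains no analogue of these new motions and no argument for why a bounded set of existing ones would suffice across all gap sizes, so as written it does not establish the lemma. To repair it you would need to either exhibit explicit rotation sequences bridging an arbitrary-length gap (as the paper does) or prove that the gap size is bounded by a constant for which the existing motions verifiably apply --- neither of which your sketch attempts.
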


\begin{proof}
We present a series of motions which for all variants of the shapes $C \in \mathcal{C}$ created by the addition of a node to the left of the vertical lead both the 6-robot and the 7-robot to the point where further movement to the topmost node of the vertical is equivalent to movement across an orthogonal convex shape. We refer to the \emph{gap} as the set of empty cells between the single-black repository and the node below it. As before, we group some of these motions into high-level motions (i.e. moving the whole robot by moving individual nodes). When there is no node below the single-black repository, then the shape is orthogonal convex and therefore, by Theorem \ref{the:6-robot-traverse} and Theorem \ref{the:7-robot-traverse}, the robot can traverse past the single-black.

Our first motion is \emph{sliding}. If there is a 1-cell gap, then the robot can still traverse past the cell by using the special slide (Figure \ref{fig:slide-on}) for the 6-robot and the slides (Figure \ref{fig:slide-7nodes-high} and Figure \ref{fig:slide-7nodes-low}) for the 7-robot, because these movements do not depend on the existence of a node in $(x_l - 2, y_d - 1)$ to provide connectivity. If there is a gap of 2 or more cells, then the robot must use special movements (Figure \ref{fig:repo-6node-slide}, Figure \ref{fig:repo-7node-slide-high} and Figure \ref{fig:repo-7node-slide-low}) to traverse past it. The next motion is \emph{climbing}. There are special movements (Figure \ref{fig:repo-6node-climb}) which allow the 6-robot to climb a gap of size 2, 3 and 4+. There are more movements for the 7-robot, when the gap is of size 2 (Figure \ref{fig:repo-7node-climb-1}), 3 (Figure \ref{fig:repo-7node-climb-2}) and 4+ (Figure \ref{fig:repo-7node-climb-3}).

We have thus shown how both types of robot can climb up and onto verticals with the additional node of any possible length. Putting everything together, the robot can use special sliding to move across a 1-cell gap, followed by special motions for larger gaps, and finally another set of special motions to climb a vertical, for all possible widths and heights. Moreover, the final position of the robot allows for the resumption of regular movement i.e. those which allow the robot to traverse an orthogonal convex shape.
\qed
\end{proof}

\FloatBarrier

\begin{figure}
\centering
\begin{subfigure}{.5\textwidth}
\centering
\includegraphics[width=0.75\linewidth]{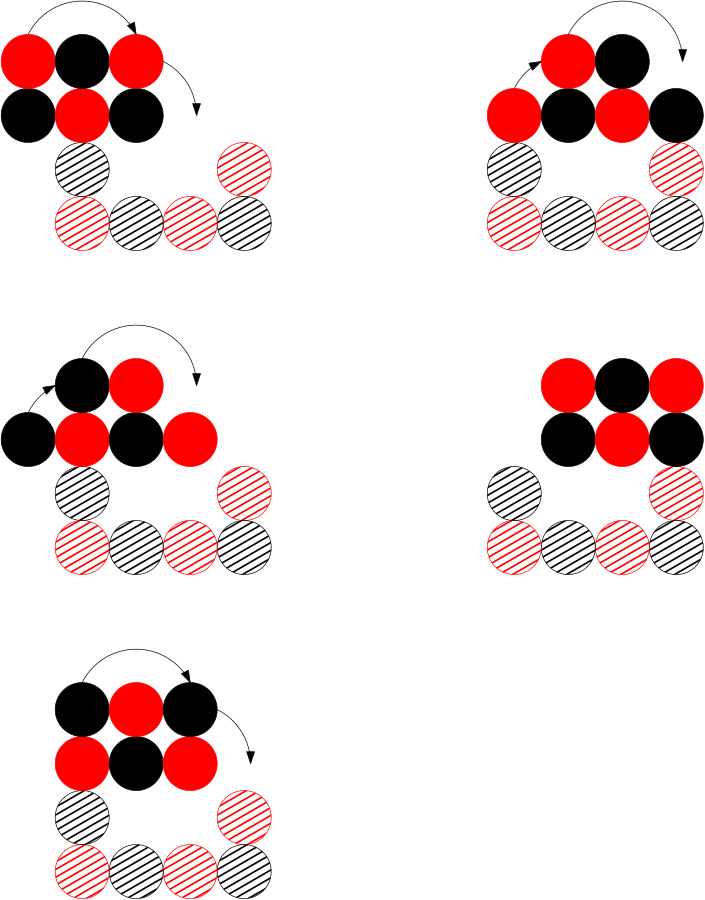}
\end{subfigure}%
\begin{subfigure}{.5\textwidth}
\centering
\includegraphics[width=0.75\linewidth]{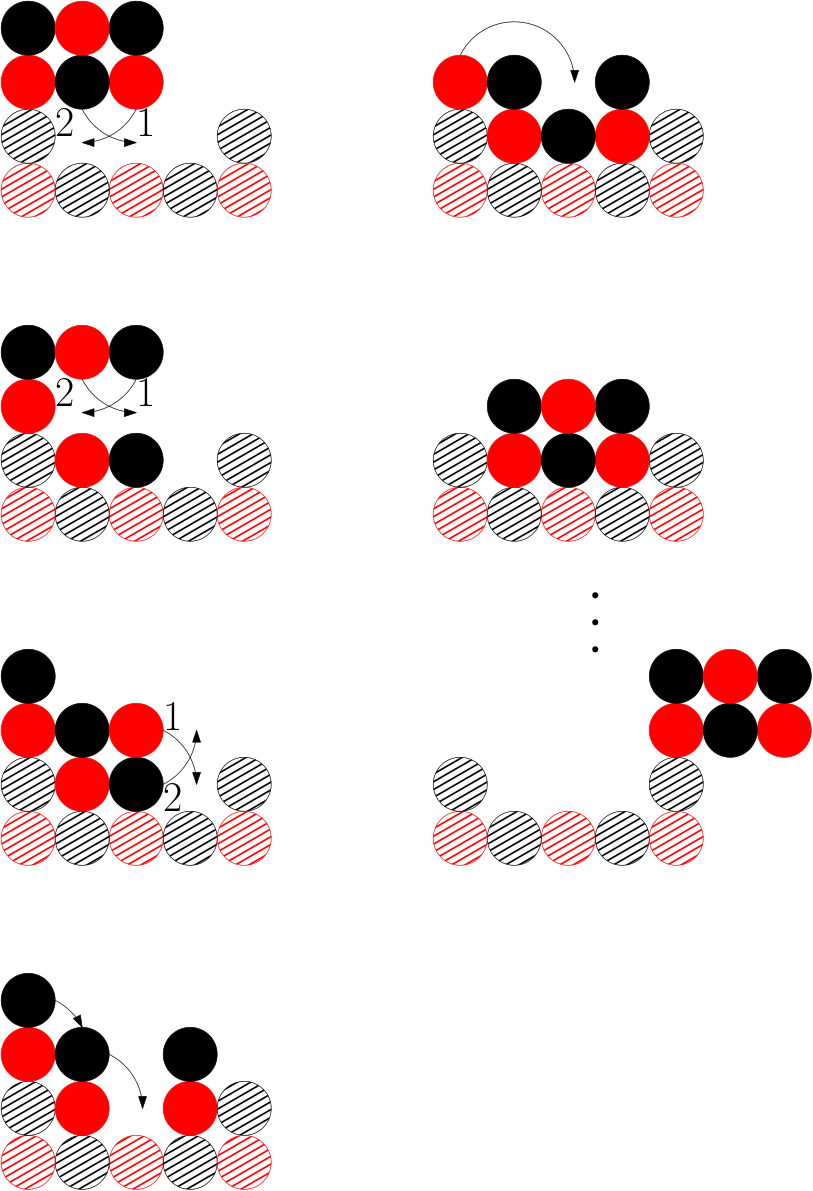}
\end{subfigure}
\caption{Sliding a 6-robot on a line with a black repository with a gap of size 2 and 3+. All movement after the final positions is equivalent to orthogonal convex movement.}
\label{fig:repo-6node-slide}
\end{figure}

\begin{figure}
\centering
\includegraphics[width=0.45\linewidth]{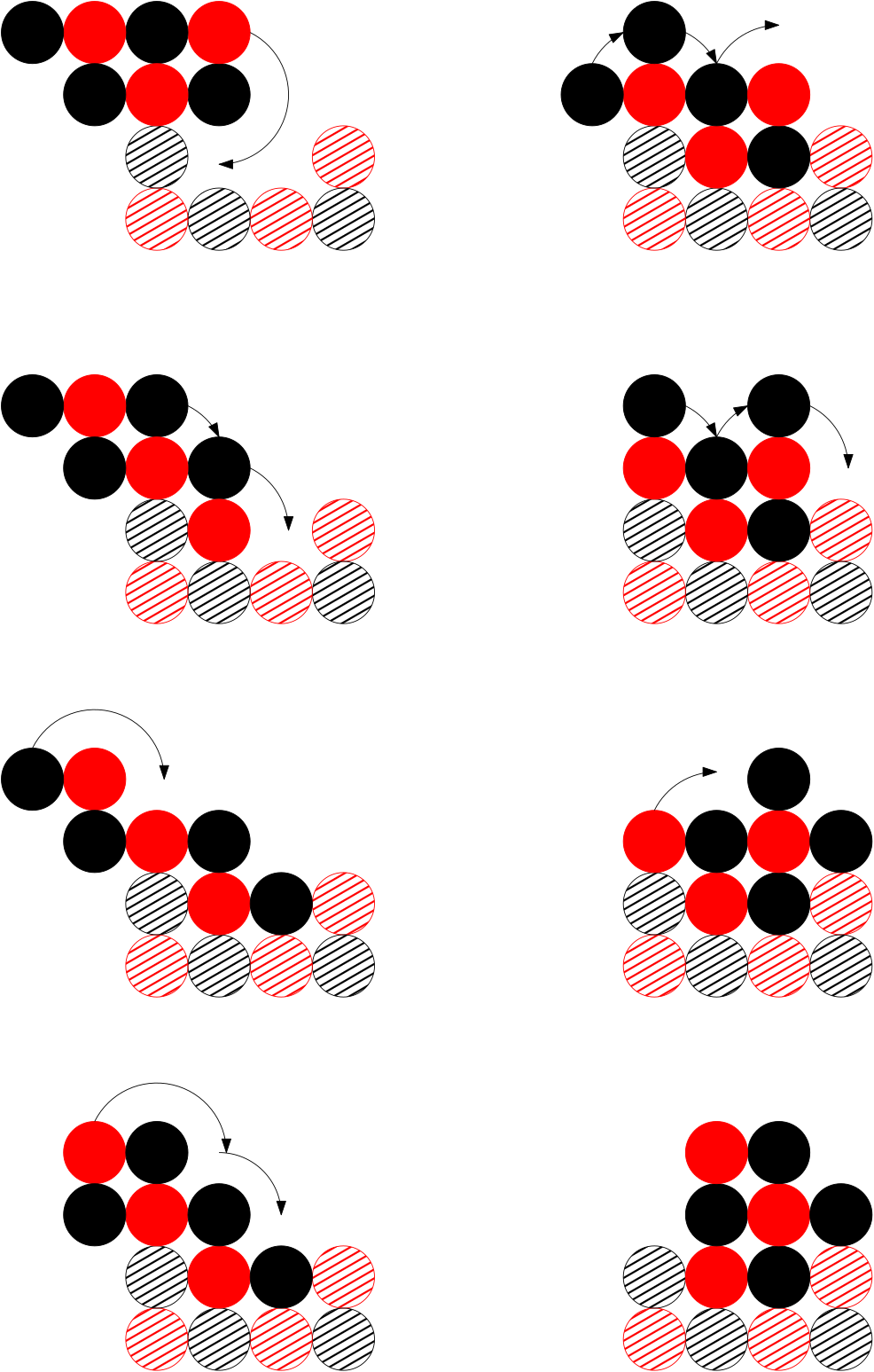}
\centering
\includegraphics[width=0.45\linewidth]{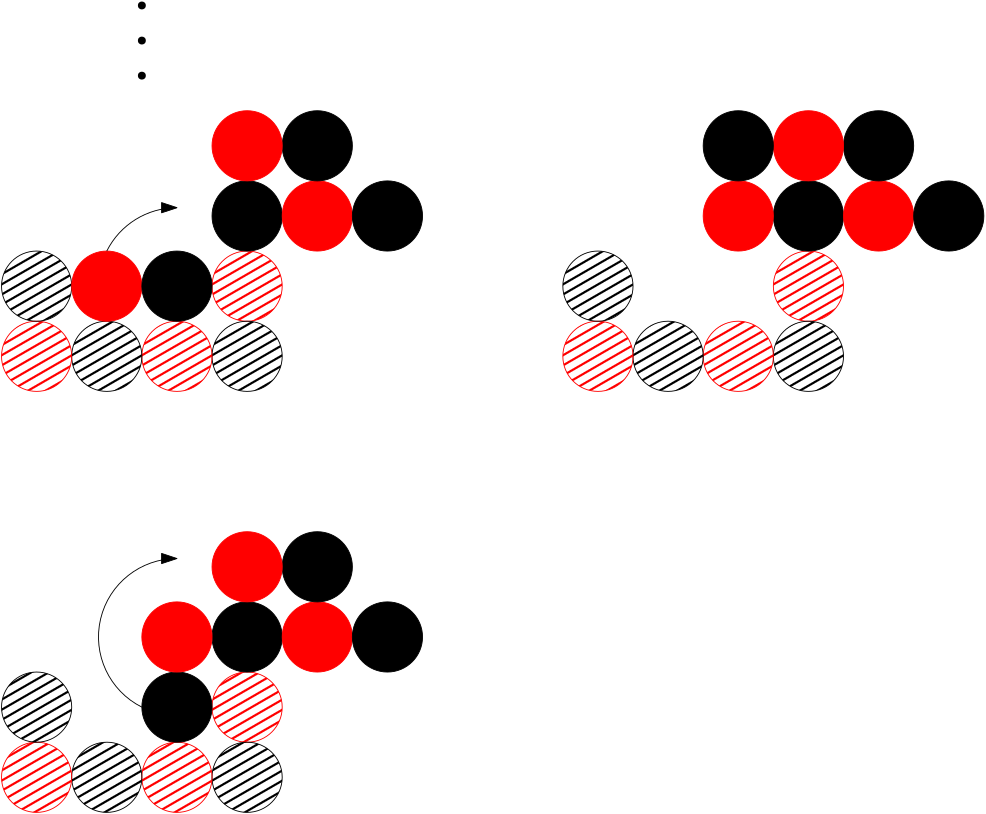}
\centering
\includegraphics[width=0.45\linewidth]{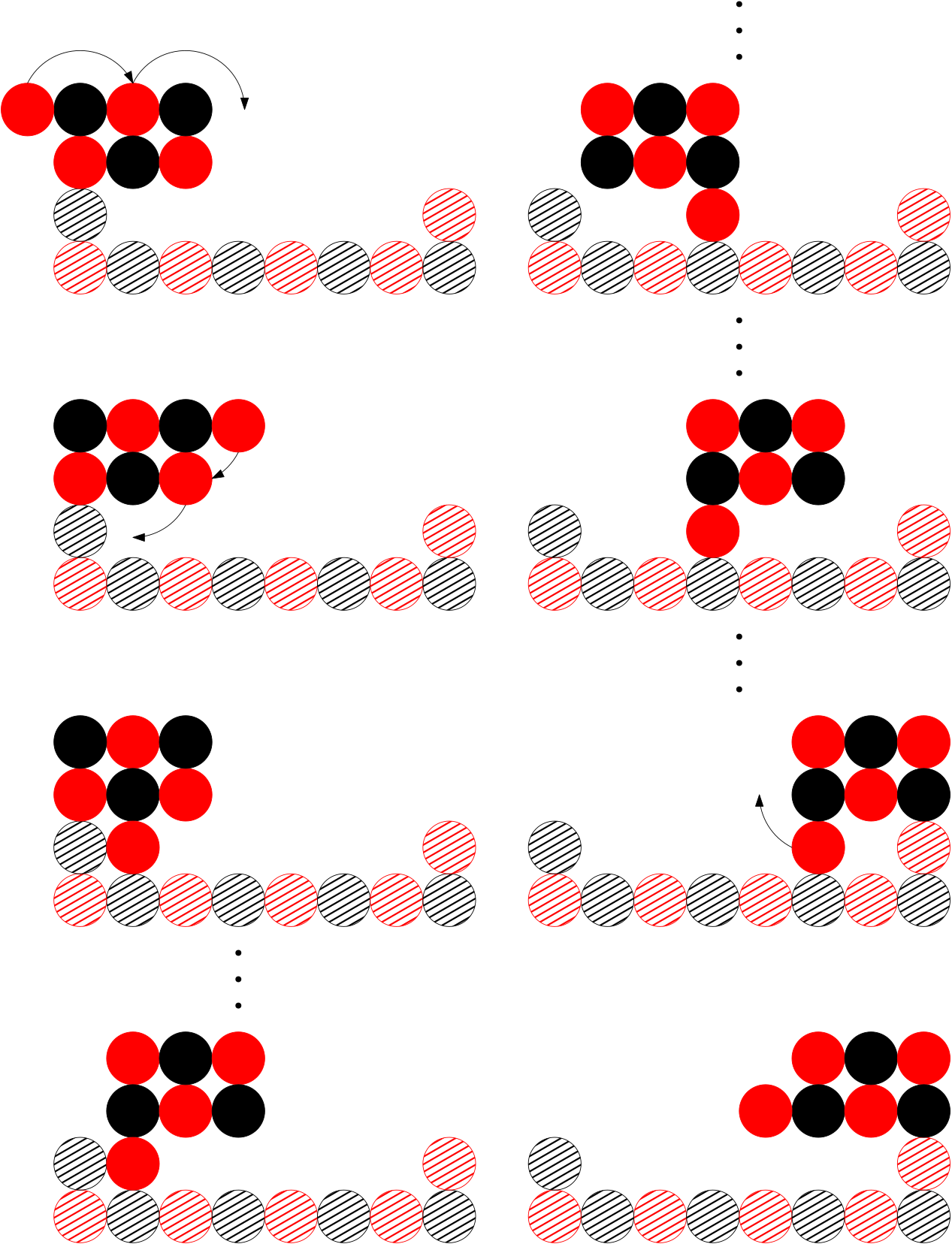}
\caption{Sliding a 7-robot on a line with a black repository with the load in the high position, with a gap of size 2 and 3+. Note the movement after the dots can be repeated for gaps larger than 3.}
\label{fig:repo-7node-slide-high}
\end{figure}

\begin{figure}
\centering
\includegraphics[width=0.45\linewidth]{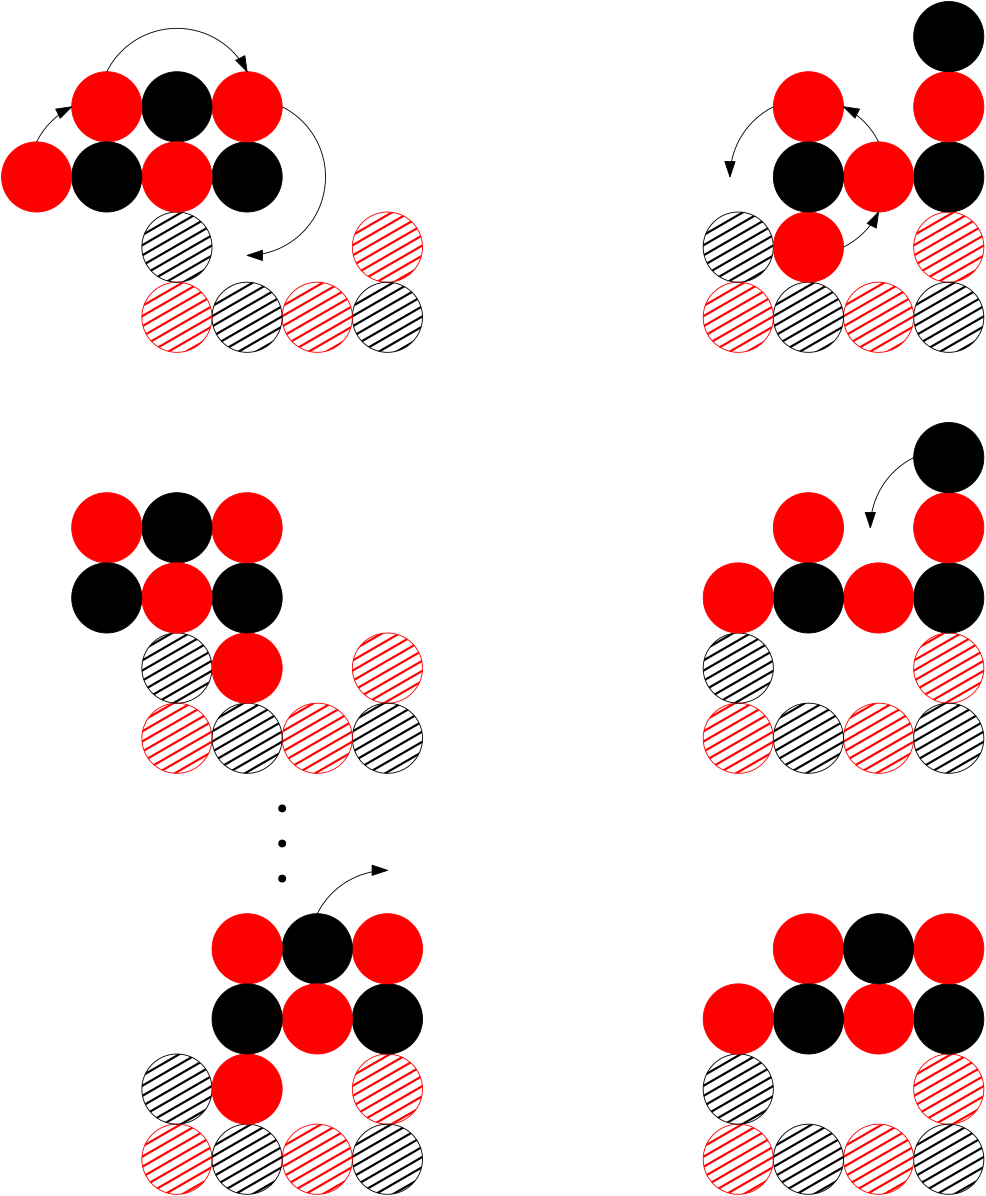}
\centering
\includegraphics[width=0.45\linewidth]{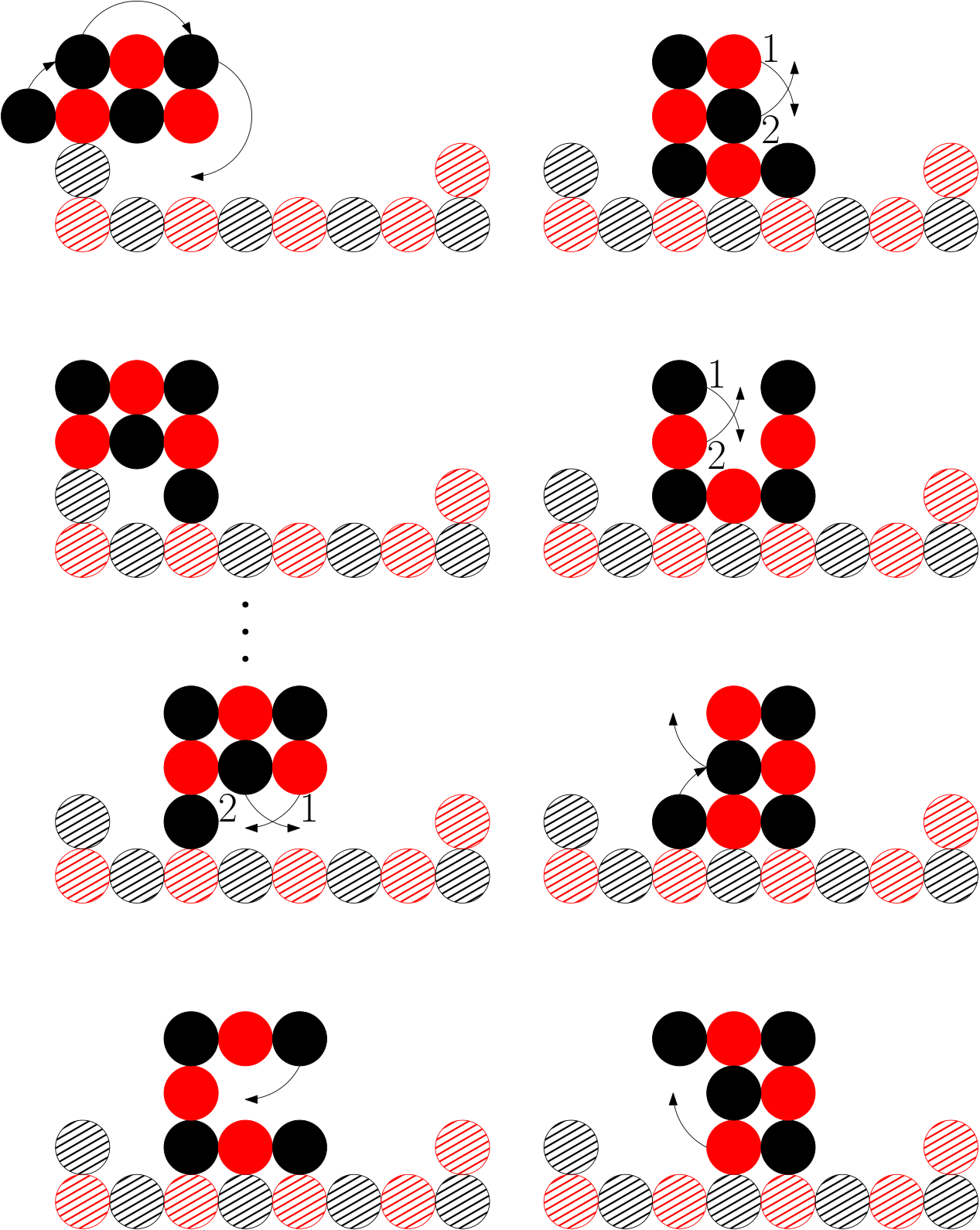}
\centering
\includegraphics[width=0.45\linewidth]{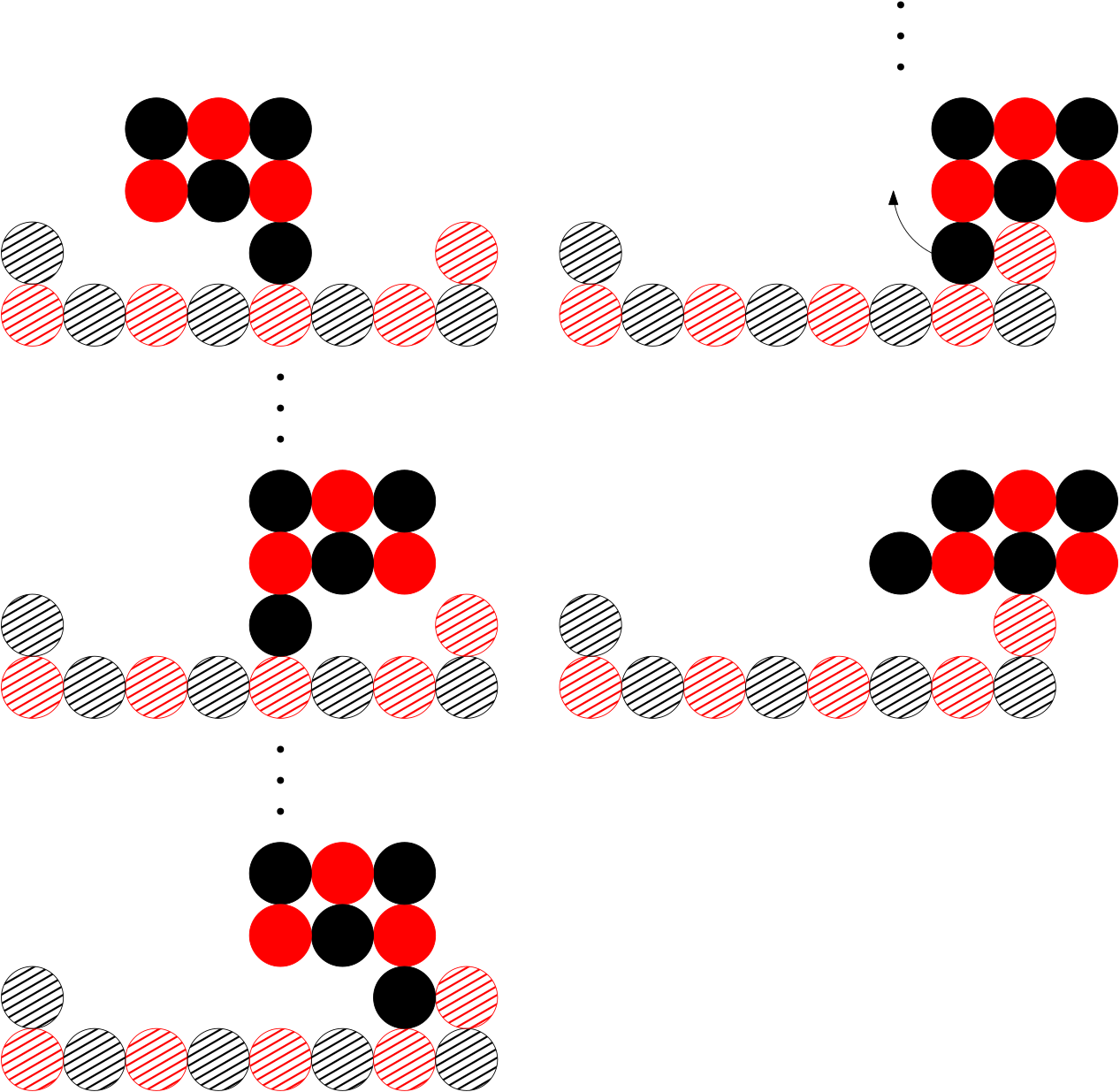}
\caption{Sliding a 7-robot on a line with a black repository with the load in the low position, with a gap of size 2 and 3+. Note the movement after the dots can be repeated for gaps larger than 3.}
\label{fig:repo-7node-slide-low}
\end{figure}

\begin{figure}
\centering
\includegraphics[width=0.45\linewidth]{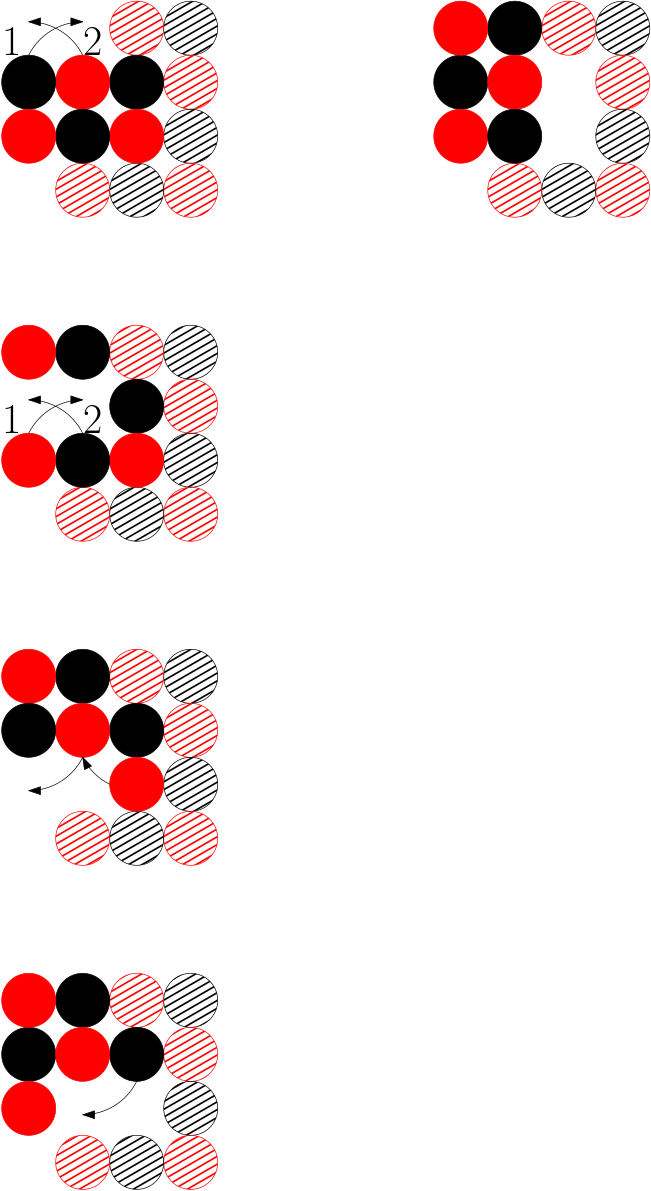}
\centering
\includegraphics[width=0.45\linewidth]{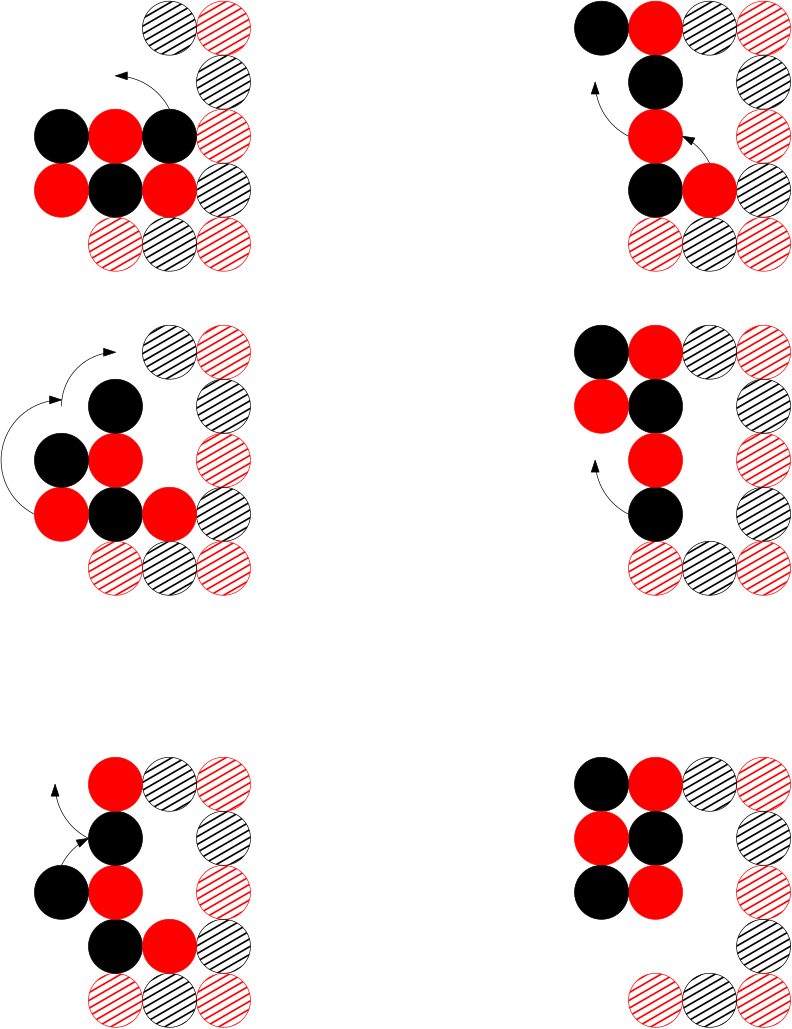}
\centering
\includegraphics[width=0.45\linewidth]{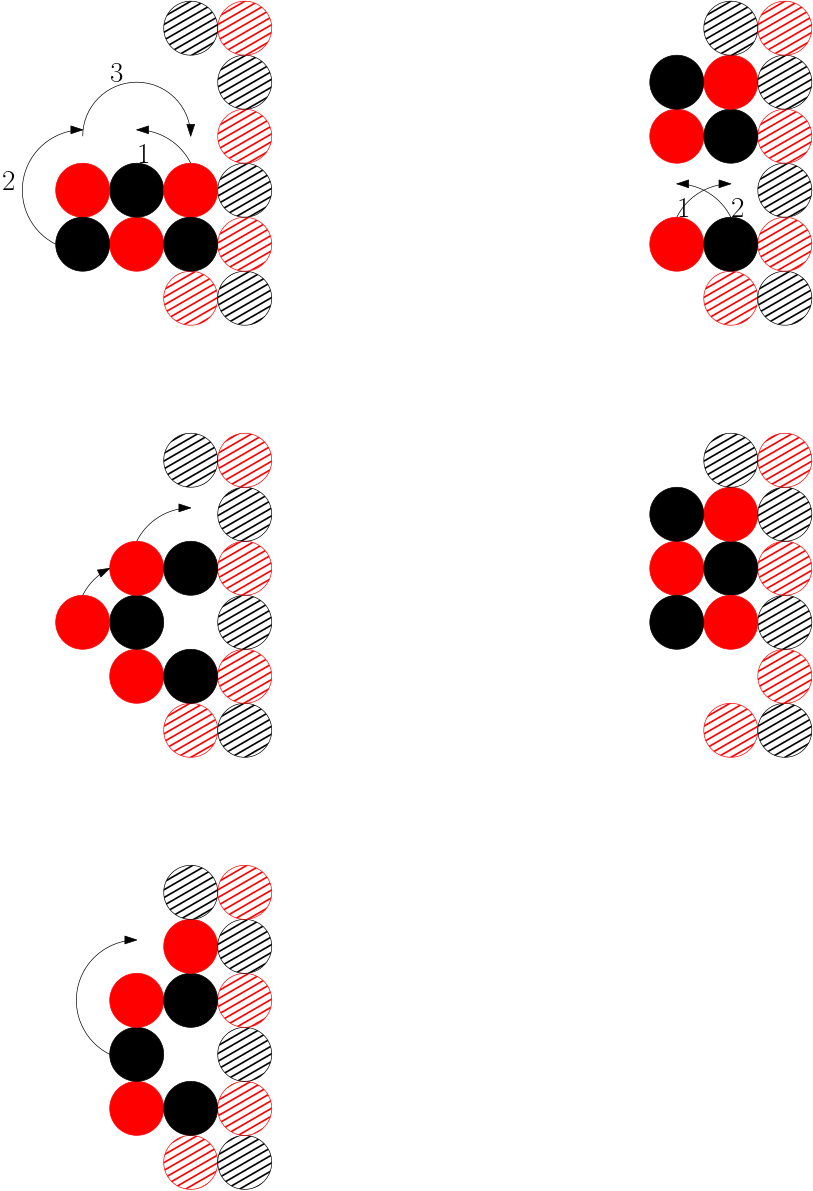}
\caption{Climbing a 6-robot on a line with a black repository, with a gap of size 2, 3 and 4+.}
\label{fig:repo-6node-climb}
\end{figure}

\begin{figure}
\centering
\includegraphics[width=0.45\linewidth]{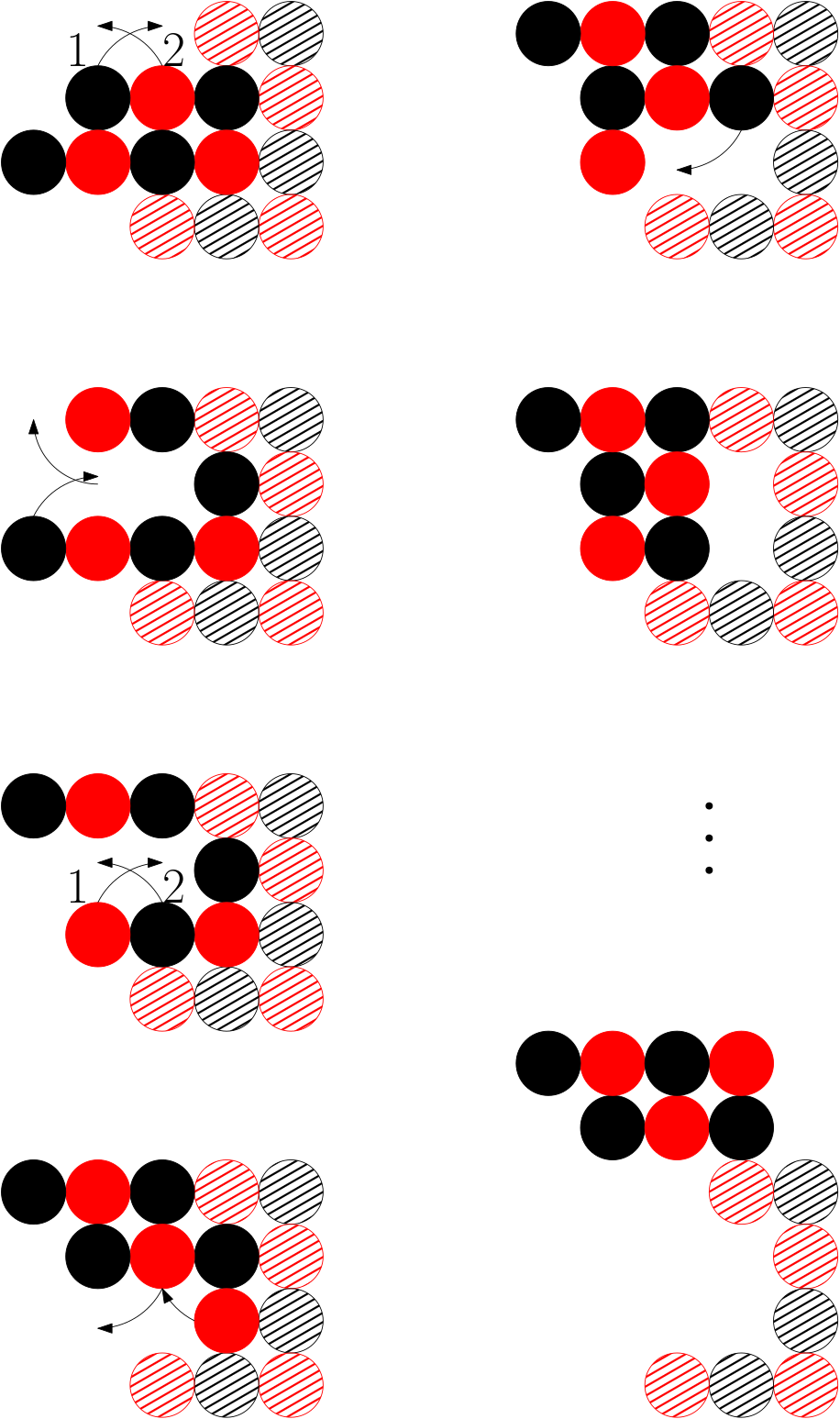}
\centering
\includegraphics[width=0.45\linewidth]{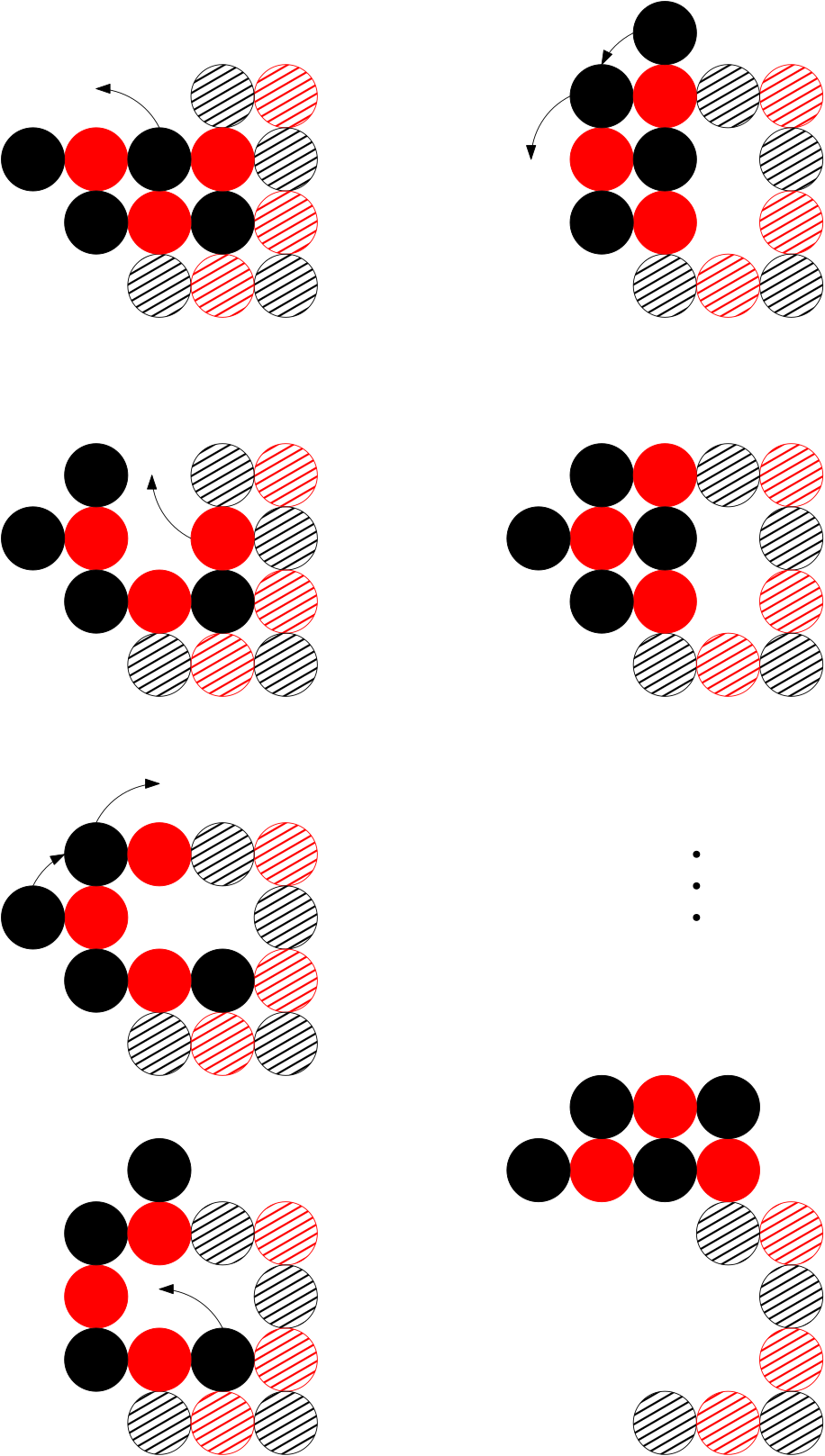}
\caption{Climbing a 7-robot on a line with a black repository with a gap of size 2.}
\label{fig:repo-7node-climb-1}
\end{figure}

\begin{figure}
\centering
\includegraphics[width=0.45\linewidth]{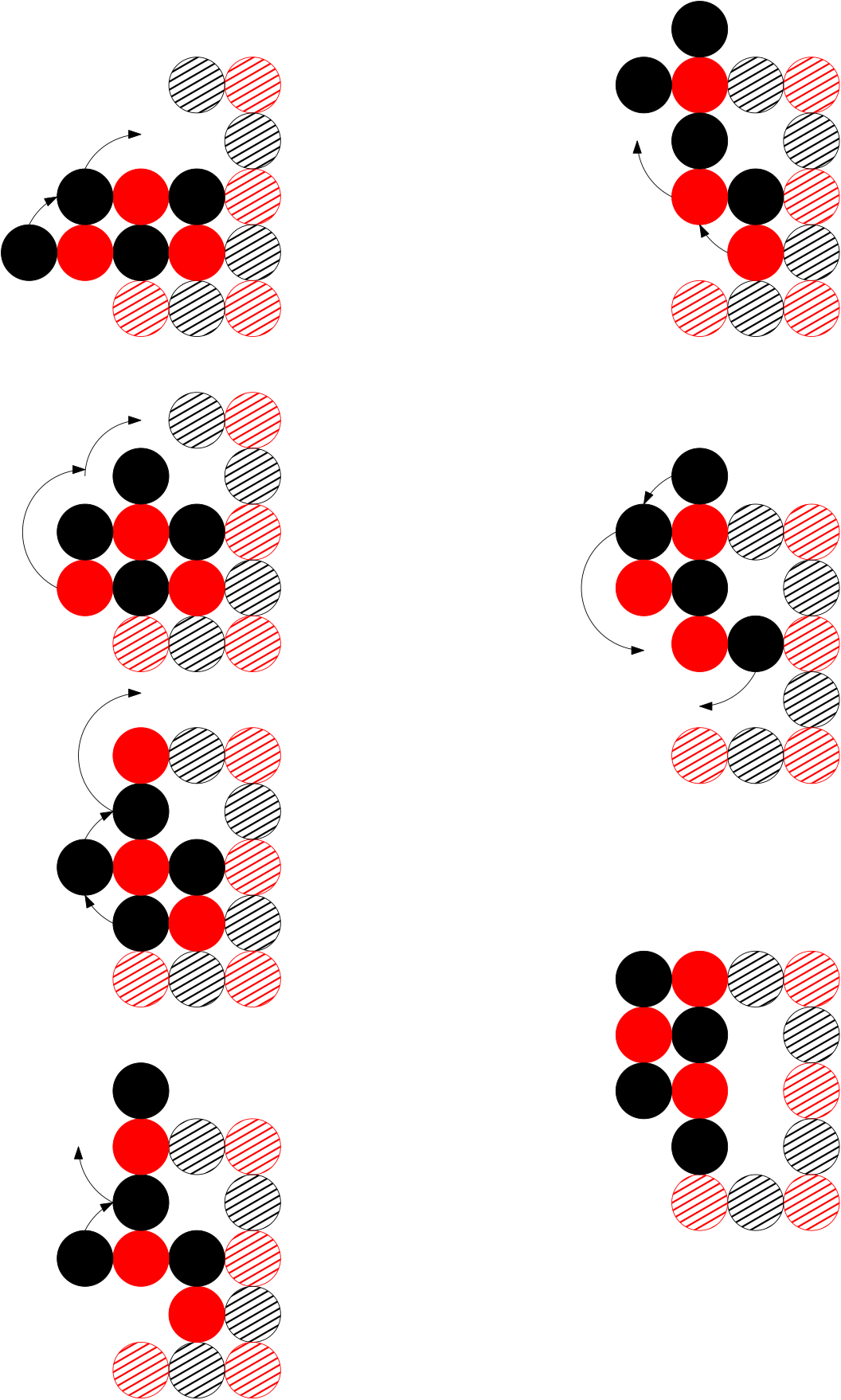}
\centering
\includegraphics[width=0.45\linewidth]{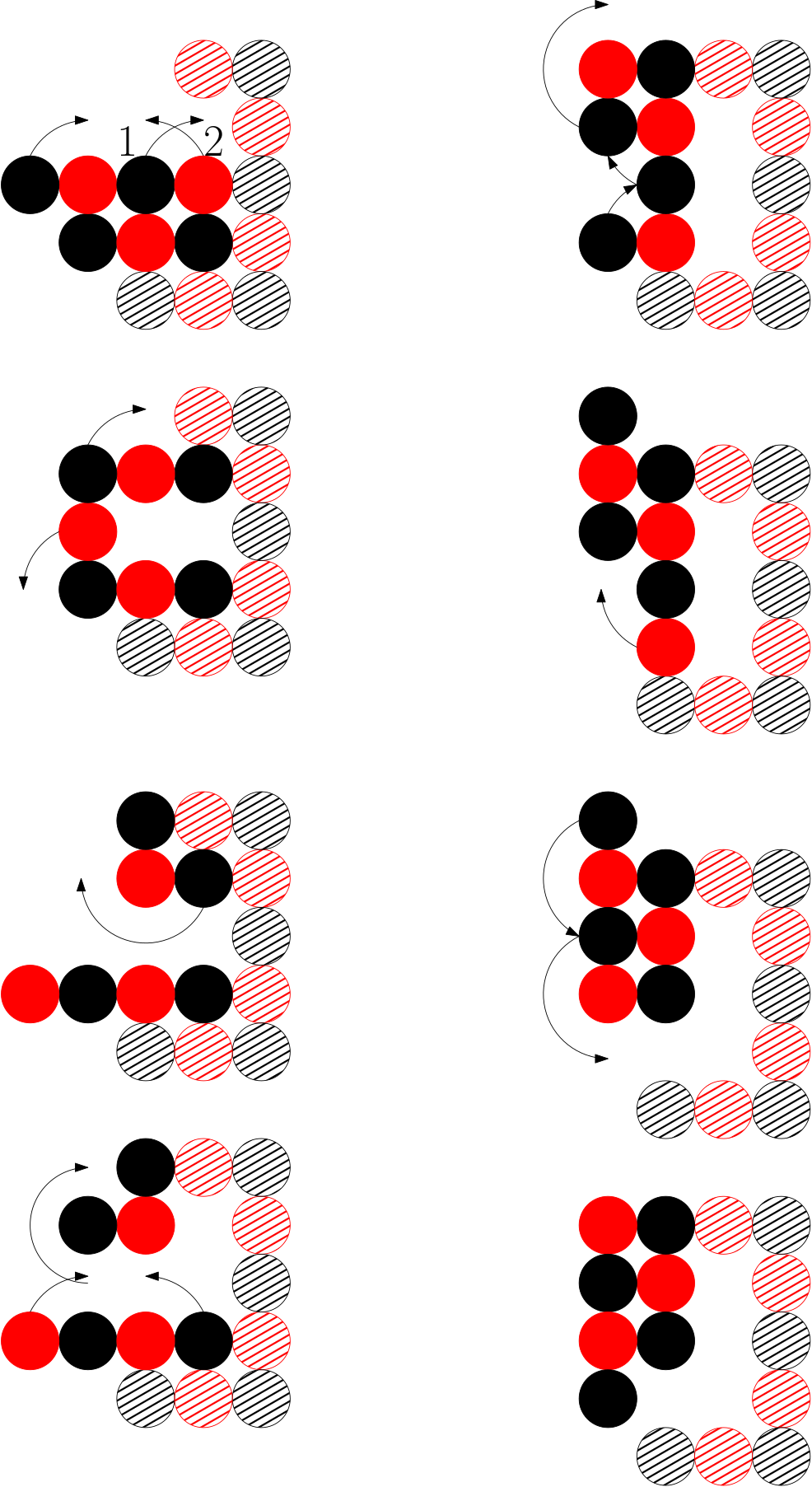}
\caption{Climbing a 7-robot on a line with a black repository with a gap of size 3.}
\label{fig:repo-7node-climb-2}
\end{figure}

\begin{figure}
\centering
\includegraphics[width=0.45\linewidth]{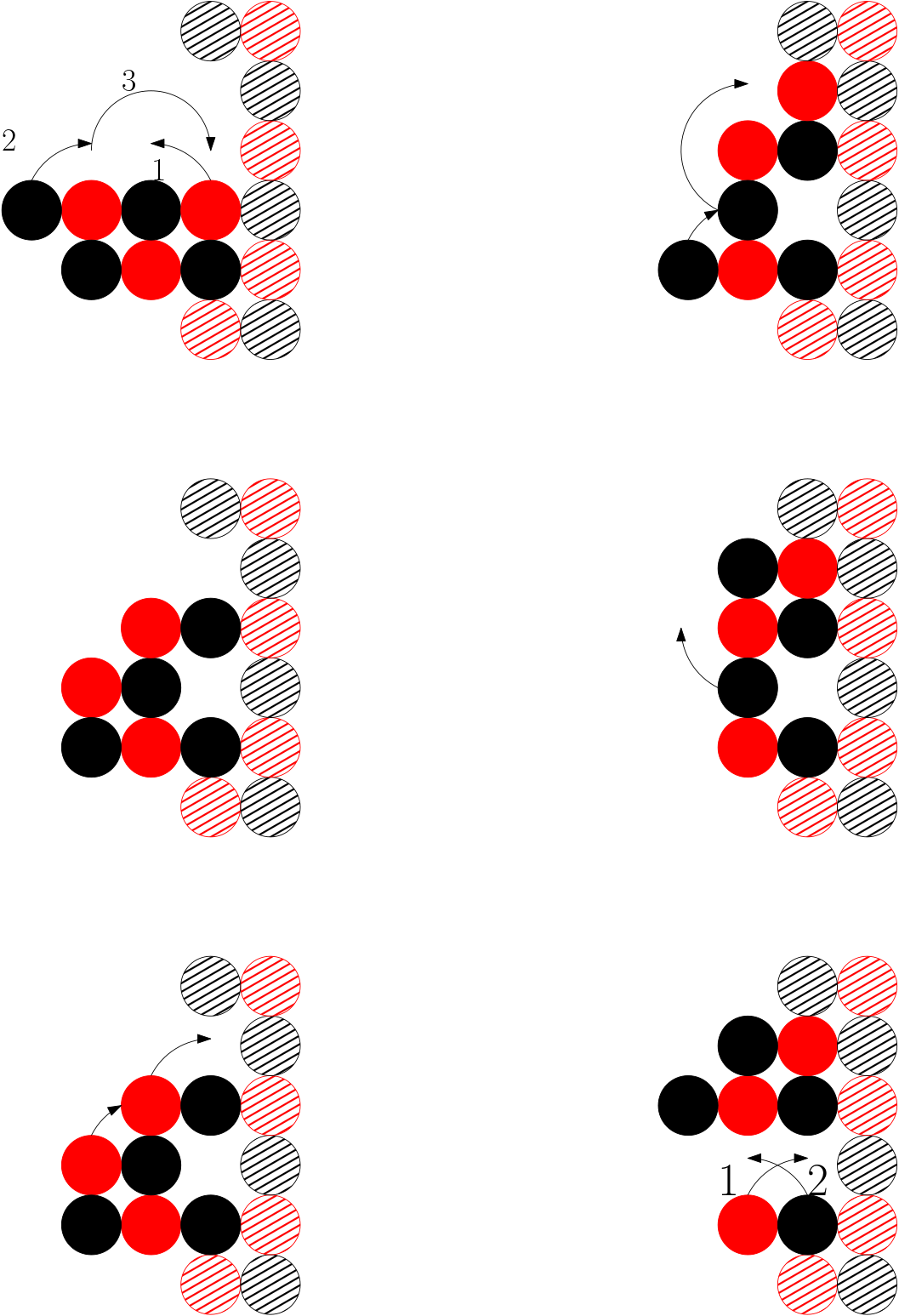}
\centering
\includegraphics[width=0.45\linewidth]{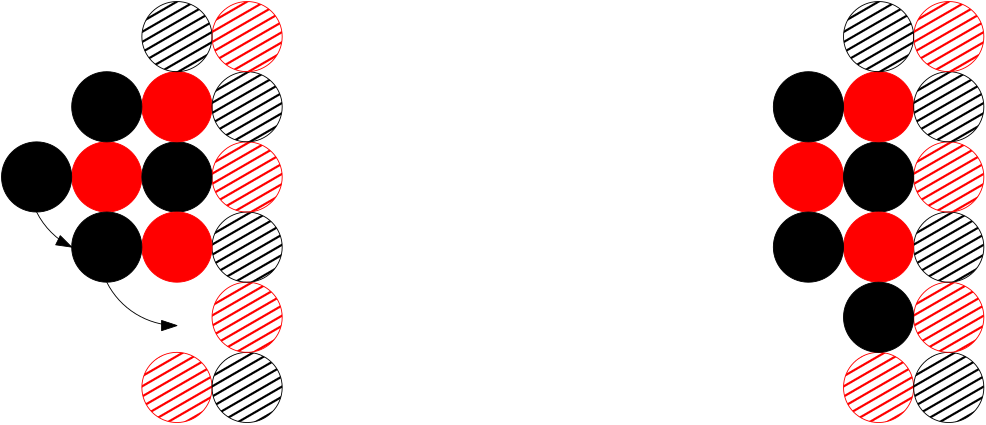}
\centering
\includegraphics[width=0.45\linewidth]{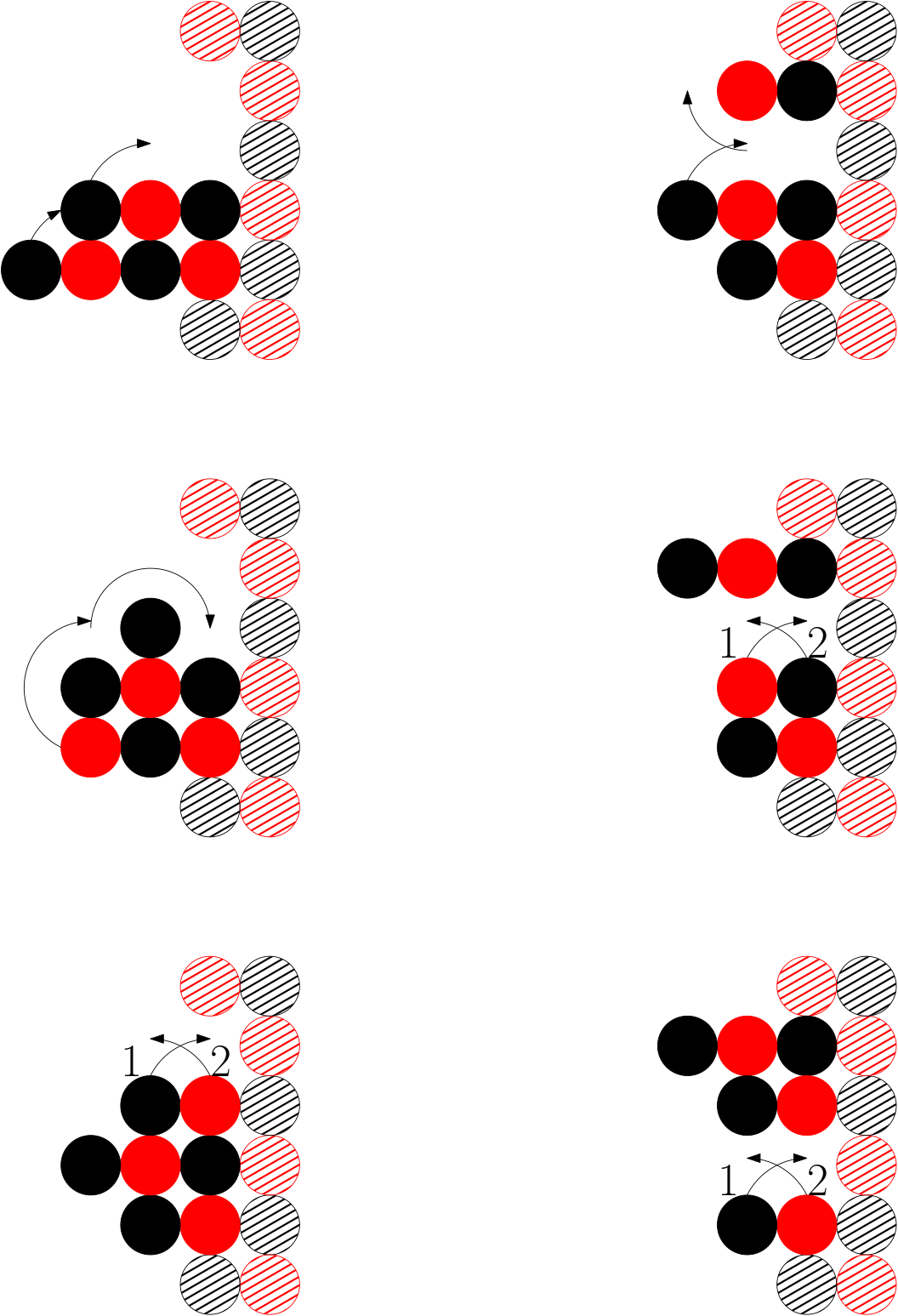}
\centering
\includegraphics[width=0.45\linewidth]{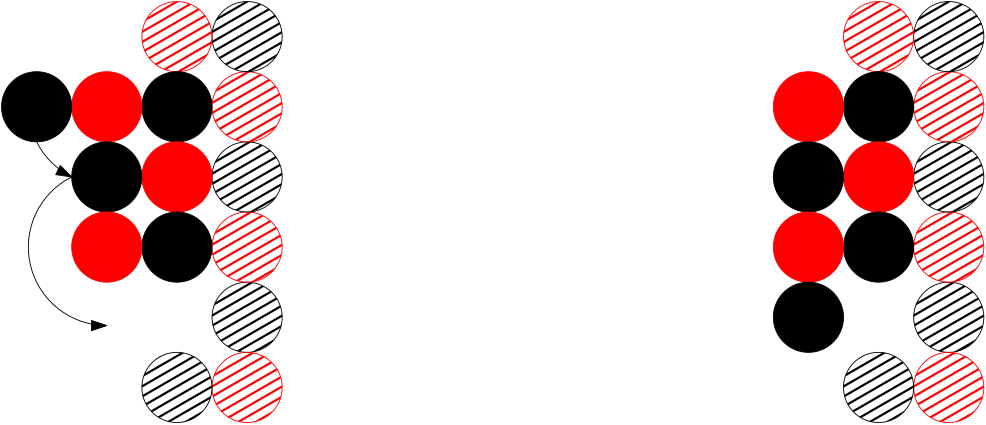}
\caption{Climbing a 7-robot on a line with a black repository with a gap of size 4+.}
\label{fig:repo-7node-climb-3}
\end{figure}

\FloatBarrier

\subsection{Initialisation}

\subsubsection{Robot Generation}

We now prove that we can generate a 6-robot from the orthogonal convex shape $S$ with the help of the 3 musketeers.

\begin{lemma}\label{lem:seed-to-robot}
Let $S$ be a connected orthogonal convex shape. Then there is a connected shape $M$ of 3 nodes (the 3 musketeers) and an attachment of $M$ to the bottom-most row of $S$, such that $S\cup M$ can reach a configuration $S^\prime\cup M^\prime$ satisfying the following properties. $S^\prime=S\setminus\{u_1,u_2,u_{3}\}$, where $\{u_1,u_2,u_{3}\}$ is the $3$-prefix of a row elimination sequence $\sigma$ of $S$ starting from the bottom-most row of $S$. $M^\prime$ is a $6$-robot on the perimeter of $S^\prime$.
\end{lemma}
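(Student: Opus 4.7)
The plan is an explicit construction. By Lemma~\ref{lem:remove-multiple-nodes}, $S$ admits a row elimination sequence $\sigma$ whose first sub-sequence peels the bottom-most row $R_1$ off from one of its endpoints; let $v_1, v_2, \ldots, v_k$ enumerate $R_1$ from that endpoint. When $k \geq 3$, the 3-prefix of $\sigma$ is $u_1 = v_1, u_2 = v_2, u_3 = v_3$; when $k \in \{1, 2\}$, the remaining $u_i$ come from the first nodes of a chosen elimination of $R_2$, which is itself a line by Proposition~\ref{prop:columns-rows}. If the attachment of $R_1$ to $R_2$ forbids eliminating from the chosen endpoint, I would mirror the construction and eliminate from the other endpoint of $R_1$ instead.

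In the main case, I would attach the three musketeers $M$ as a short connected piece (a vertical $1 \times 3$ strip, or a small L) hanging directly below $u_1 = (x_l, y_d)$, say at cells $(x_l, y_d - 1), (x_l, y_d - 2), (x_l, y_d - 3)$. Since $R_1$ is bottom-most these cells are free, and $S \cup M$ is connected through the edge $\{(x_l, y_d), (x_l, y_d - 1)\}$, so $M$ is indeed attached to $R_1$. I would then specify an explicit rotation choreography, carried out entirely in the empty workspace beneath and to the left of $R_1$, that rotates $u_1$, then $u_2$, then $u_3$ in turn off the bottom of $R_1$ onto cells adjacent to the (shifting) musketeers, so that the final union $M' = M \cup \{u_1, u_2, u_3\}$ forms a $2 \times 3$ block. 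Each individual $90\degree$ rotation is legal because $u_1$ is an endpoint of $R_1$ and $R_1$ is bottom-most, so the two cells demanded by the rotation both lie in the empty workspace; connectivity of every intermediate shape is maintained because the growing robot always shares an edge with $S$ through the remaining part of $R_1$ (or through $R_2$ once $R_1$ is exhausted).

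By construction, $M'$ is edge-adjacent to the new leftmost node of $R_1 \setminus \{u_1, u_2, u_3\}$, or to the leftmost node of $R_2$ if $R_1$ has been completely consumed, so $M'$ lies on the cell perimeter of $S' = S \setminus \{u_1, u_2, u_3\}$ as required. The main obstacle I expect is the case analysis at the boundary, namely $|R_1| \in \{1, 2\}$ and configurations in which the unique attachment of $R_2$ above $R_1$ restricts the admissible elimination orders; both are resolved by selecting the endpoint of $R_1$ from which to eliminate and the orientation of $M$ according to the local geometry, applying a mirrored version of the construction where necessary. As is standard elsewhere in the paper, the explicit rotation sequence is most transparently presented via figures in the style of Figures~\ref{fig:climb3}--\ref{fig:climb1_2}.
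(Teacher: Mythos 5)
Your overall strategy matches the paper's: attach $M$ beneath the bottom row, absorb the first three eliminated nodes to form the $2\times 3$ robot, and fall back to $R_2$ when $|R_1|\le 2$. The genuine gap is your claim that, for $k\ge 3$, the $3$-prefix of the row elimination sequence consists of three consecutive nodes $v_1,v_2,v_3$ taken from a single endpoint of $R_1$, with mirroring to the other endpoint as the only fallback. By the definition of a row elimination sequence, removing three nodes from one endpoint of $R_1$ requires that some node of $R_1$ at distance at least $3$ from that endpoint remain attached to $R_2$ throughout; if the (necessarily consecutive) attachment of $R_1$ to $R_2$ sits too close to both endpoints --- e.g.\ a five-node bottom row whose only neighbour in $R_2$ lies above its middle node --- then \emph{every} admissible elimination of $R_1$ must switch endpoints after at most two removals, so the $3$-prefix mixes nodes from both ends and neither orientation of your construction produces it. The paper's proof devotes its first case to exactly this situation: it lifts the two nodes available at one endpoint to form an intermediate $5$-seed, has that seed travel along $R_1$ to the opposite endpoint, and lifts the third node there. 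You need this (or an equivalent) mechanism; without it the main case of your construction does not cover all orthogonal convex shapes.

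A secondary, more cosmetic difference: the paper places $M$ \emph{horizontally} beneath the three target nodes, so that for $|R_1|\in\{3,4\}$ the $2\times 3$ robot is already formed with no rotations at all, whereas your vertical placement below $u_1$ requires a rotation choreography whose individual moves (and the connectivity of each intermediate shape) would still have to be verified, in the style of the paper's figures. Your treatment of $|R_1|\in\{1,2\}$ agrees with the paper's in spirit, though the paper additionally spells out the cascading case $|R_1|=|R_2|=1$, where the $4$-seed must first be converted into a $5$-seed attached to $R_3$ before it can grow to six nodes; your sketch should account for this recursion as well.
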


\begin{proof}
Let $R_i$, $i\geq 1$, be the $i$th row of $S$ counting bottom up. Assume first that $|R_1|\geq 5$ and that the elimination sequence $\sigma$ can start from the rightmost node $(x,y)$ of $R_1$ (the leftmost case is symmetric). If $\sigma$ can continue without switching direction for at least 3 steps, then placing $M$ as a horizontal line at $(x,y-1),(x-1,y-1),(x-2,y-1)$ gives the required 6 robot. If not, then for at least one of the two endpoints, $\sigma$ can make 2 steps before switching, let that endpoint w.l.o.g. be again the rightmost node. Placing $M$ at $(x-2,y-1),(x-3,y-1),(x-4,y-1)$ allows it to lift the two rightmost nodes, become a 5-seed, travel to the other endpoint and lift it, thus becoming a 6-robot.

Next, let $|R_1|\in\{3,4\}$. Then, aligning the 3 nodes of $M$ below the rightmost 3 nodes of $R_1$ immediately gives the required 6-robot.

If on the other hand $|R_1|=2$ or $|R_1|=1$, then $M$ can be placed so that a 5-seed or a $4$-seed, respectively, is attached to the bottom of row $R_2$. If it is a 5-seed then it can reach the rightmost/leftmost endpoint of $R_2$ and lift that node, thus becoming a 6-robot. If it is a 4-seed then if $|R_2|\geq 2$ it can reach the rightmost/leftmost endpoint of $R_2$ and lift two nodes, possibly one node from each endpoint, thus becoming a 6-robot. The only remaining case is when a 4-seed is attached to $R_2$ and $|R_2|=1$. In that case, the configuration of the 4-seed and the single node of $R_2$ can be transformed into a $5$-seed attached to the bottom of row $R_3$, from which the previous case can again be applied.
\qed
\end{proof}

\begin{figure}
\centering
\includegraphics[width=0.45\textwidth]{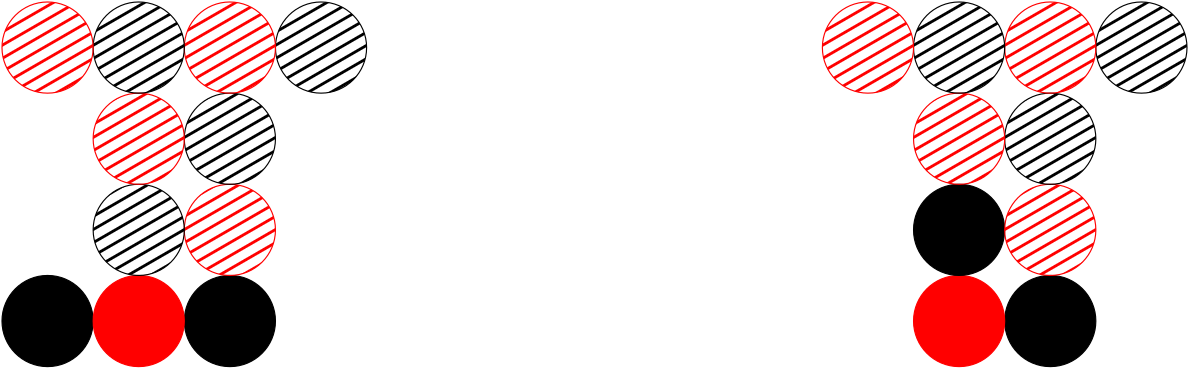}
\caption{Some seed placements. The striped circles represent the orthogonal convex shape $S$.}
\label{fig:seed-placements}
\end{figure}
\FloatBarrier

\subsubsection{Prefix Construction}

The \emph{x-gradient} and \emph{y-gradient} of two neighbouring nodes is the difference in the $x$ and $y$ co-ordinates of the two nodes, respectively. A \emph{parity rhombus} is a shape where every line is of odd length and the same colour, and the x-gradient and y-gradient of every node on the end of every line is at most $1$.

To construct the extended staircase from an orthogonal convex shape $S$, we must first retrieve a sequence of $3$ nodes $u_1, u_2, u_3$ from $S$, where $u_3$ is black. We assume w.l.o.g. that $S$ is a black parity shape. We now show with the following 4 lemmas that this is possible, even in the edge case where $S$ is a parity rhombus.

\begin{lemma}\label{lem:extract-rhombus}
For any shape $S \cup M$, where $S$ is a black parity rhombus of $n$ nodes divided into $p$ rows, $R_1, R_2, \ldots, R_p$ and $M$ is a 6-robot, it is possible for $M$ to extract two black nodes and a red node $u_1, u_2, u_3$ from $S$.
\end{lemma}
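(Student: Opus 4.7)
The plan is for the 6-robot $M$ to extract the three required nodes one at a time, alternating short extraction sequences with perimeter traversals. By Lemma \ref{lem:seed-to-robot} we may assume $M$ is already a 6-robot attached to the perimeter of $S$. A black parity rhombus has four tip nodes (the singletons of its topmost, bottommost, leftmost and rightmost rows), and all four are black because $S$ has black parity. Using Theorem \ref{the:6-robot-traverse}, $M$ can be navigated to a position adjacent to any chosen tip, say the bottom tip at $(x_0,y_0)$.

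First I would exhibit a short explicit rotation sequence that lifts the bottom tip into $M$, turning $M$ into a 7-robot carrying a black load. The residual shape $S\setminus\{\text{tip}\}$ is still a connected orthogonal convex shape, because its perimeter satisfies the regular expression of Proposition \ref{prop:regular-expression} (the tip contributed only a small local cap which flattens out on removal) and every row and column remains a consecutive line with unimodal extremes (Proposition \ref{prop:columns-rows}). Hence, by Theorem \ref{the:7-robot-traverse}, the 7-robot can traverse the new perimeter to a placement location, deposit its load, and revert to a 6-robot.

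For the second extraction I would navigate $M$ back underneath the truncated rhombus and target the left endpoint of the former second row, which is also black. A single rotation of this endpoint around its red in-row neighbour, into the cell vacated by the bottom tip, lifts it onto $M$. This preserves both connectivity and orthogonal convexity: the remaining two nodes of that row (now R and B, in this order) stay attached to the bulk through the row above, and no column or row develops a gap. After placing this second black, the red node $(x_0,y_0+1)$ has become the left endpoint of the remaining former second row, and an analogous rotation extracts it onto $M$. Designating the first extracted black and the red as the bicolour pair $\{u_1,u_2\}$ and the second extracted black as $u_3$ fulfils the statement and matches the prerequisites of Algorithm \ref{alg:extended-staircase-gen-alg} (a bicolour pair followed by a black).

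The main obstacle will be the smallest rhombi, in particular the $5$-node diamond with row lengths $1,3,1$, where after two extractions the residual shape has only three or fewer nodes and $M$ may lack the free space to perform a clean third extraction without momentarily disconnecting the residual shape from itself. In such cases overall connectivity is still maintained through the robot, but the detailed rotation arguments will require explicit diagrams in the spirit of Figures \ref{fig:slide-on}--\ref{fig:edge1}. A secondary, largely routine, technicality is verifying after each extraction that the residual shape remains orthogonal convex so that Theorems \ref{the:6-robot-traverse} and \ref{the:7-robot-traverse} continue to apply to the inter-extraction traversals; this follows from Proposition \ref{prop:columns-rows} by checking that none of the three removals leaves a gap in any row or column.
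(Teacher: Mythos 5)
Your high-level plan --- take the bottom tip first, then a black endpoint of the adjacent row, then the red node this exposes --- is consistent with what the paper does; the paper's own proof is little more than a pointer to the explicit move sequences of Figure \ref{fig:single-black} for a $5$-row rhombus, plus the remark that larger rhombi reduce to that case via Theorems \ref{the:6-robot-traverse} and \ref{the:7-robot-traverse}. The difficulty is that those explicit sequences \emph{are} the content of this lemma: the parity rhombus is singled out precisely because it is $0$-blocked and the generic arguments of Lemmas \ref{lem:extract-2-nodes} and \ref{lem:extract-black-node} fail for it, so ``I would exhibit a short explicit rotation sequence'' defers exactly the part that needs proving. Worse, the one manoeuvre you do commit to has a concrete connectivity problem. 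If the robot sits ``underneath the truncated rhombus'' at $y$-coordinates at most $y_0-1$ while the remaining shape starts at row $y_0+1$, the only thing joining robot and shape is the node currently occupying $(x_0,y_0)$; the moment that node rotates off $(x_0,y_0)$ to become the robot's load, the configuration splits into two components, violating the requirement that edge-connectivity hold after \emph{every individual rotation}, not merely before and after each extraction. The same issue already arises for your very first extraction if the robot is placed directly below the tip: the tip ends at height $y_0-1$ and nothing remains at height $y_0$ to bridge the robot to row $R_2$. Avoiding this forces the robot into specific positions flush against $R_2$, which is exactly the tight geometry the figures are there to resolve.

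A second gap is the ordering. As you observe, the only feasible extraction order out of the rhombus is black (tip), black (row-$2$ endpoint), red; but the downstream prefix construction (Algorithm \ref{alg:extended-staircase-gen-alg} via Lemma \ref{lem:shape-staircase-properties}) needs the bicolour pair \emph{placed} before the third black: the bottom node of the pair sits directly above the anchor node, whereas the third black at $(x_l,y_d)$ is edge-adjacent only to the pair and not to $S$. Since the robot transports and deposits one node at a time and every deposited prefix must itself remain connected to $S$, you cannot recover the required order simply by relabelling $u_1,u_2,u_3$ after the fact; you must either park the first black in a temporary position and move it again, or exchange it for a red node of the robot's own body (node identity is not preserved, only colour, as in Lemma \ref{lem:build-ext-staircase}). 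Either fix is plausible, but spelling one out is part of what this lemma has to establish.
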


\begin{proof}
We extract these nodes by following the procedure of Figure \ref{fig:single-black}. This example is for a rhombus with $5$ rows but, by Theorem \ref{the:6-robot-traverse} and Theorem \ref{the:7-robot-traverse}, additional rows can be navigated and so do not fundamentally change the procedure.
\end{proof}

\begin{figure}
\centering
\begin{subfigure}{.5\textwidth}
\centering
\includegraphics[width=0.75\linewidth]{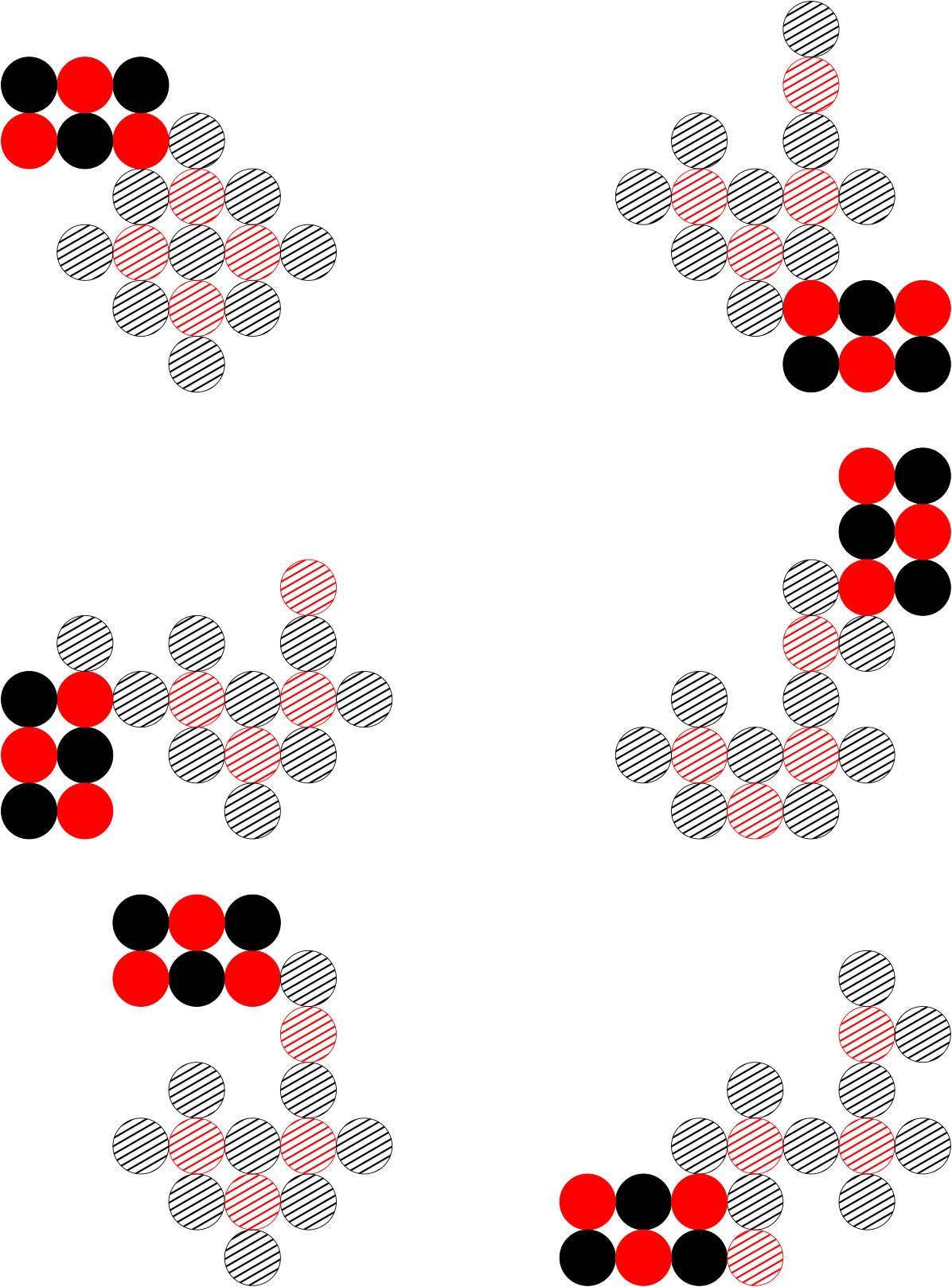}
\end{subfigure}%
\begin{subfigure}{.5\textwidth}
\centering
\includegraphics[width=0.75\linewidth]{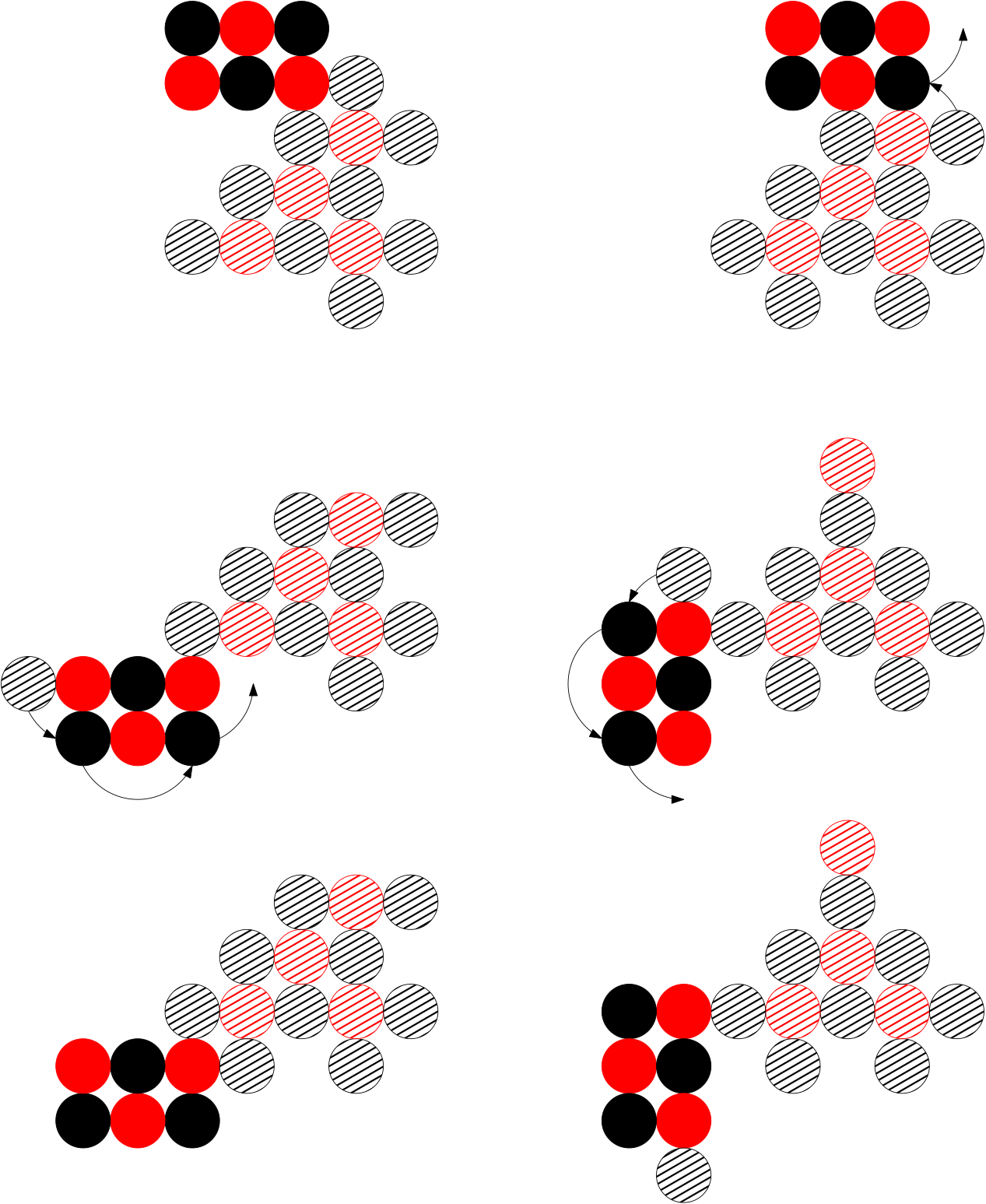}
\end{subfigure}
\begin{subfigure}{.5\textwidth}
\vspace{1cm}
\centering
\includegraphics[width=0.75\linewidth]{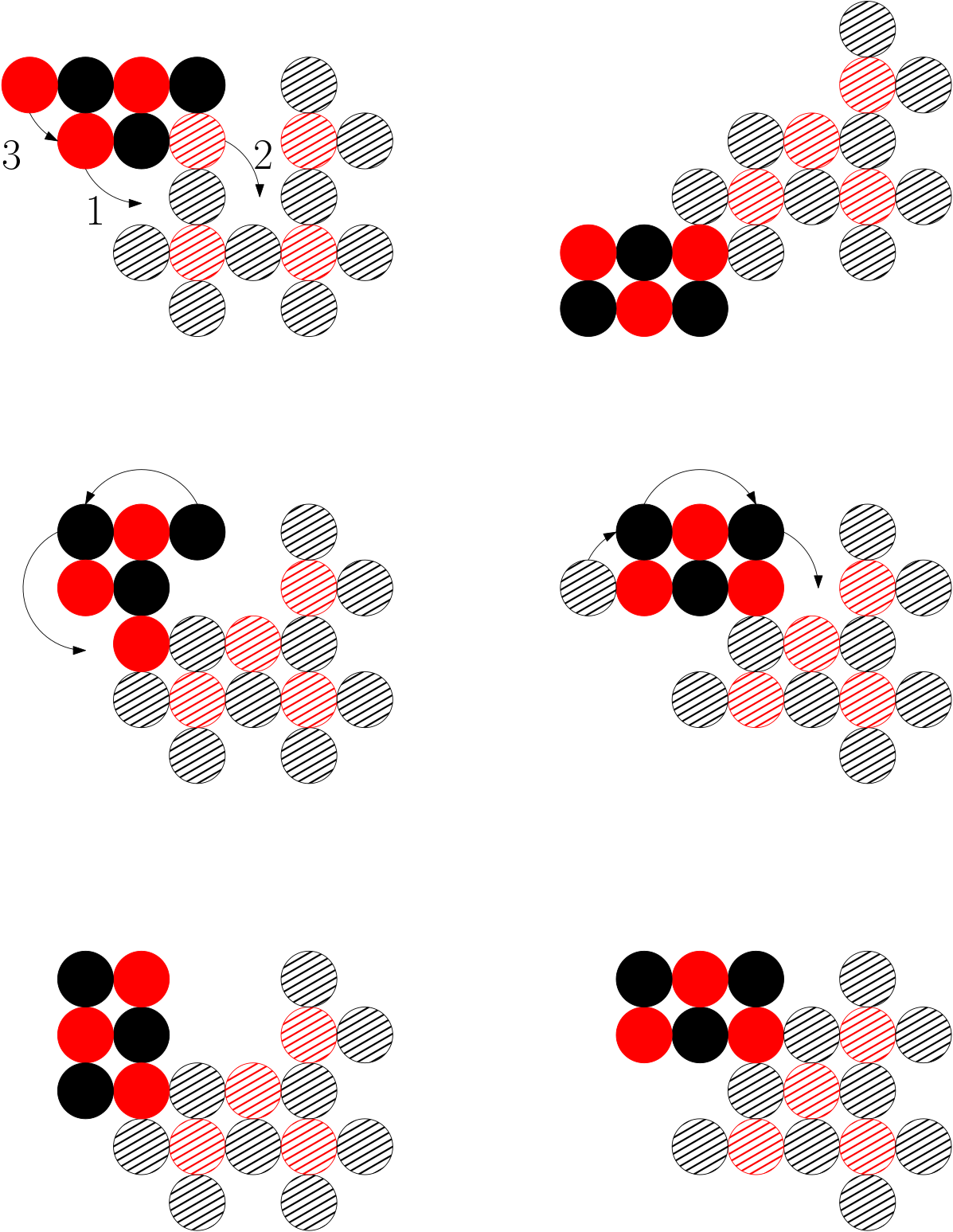}
\end{subfigure}%
\begin{subfigure}{.5\textwidth}
\centering
\includegraphics[width=0.75\linewidth]{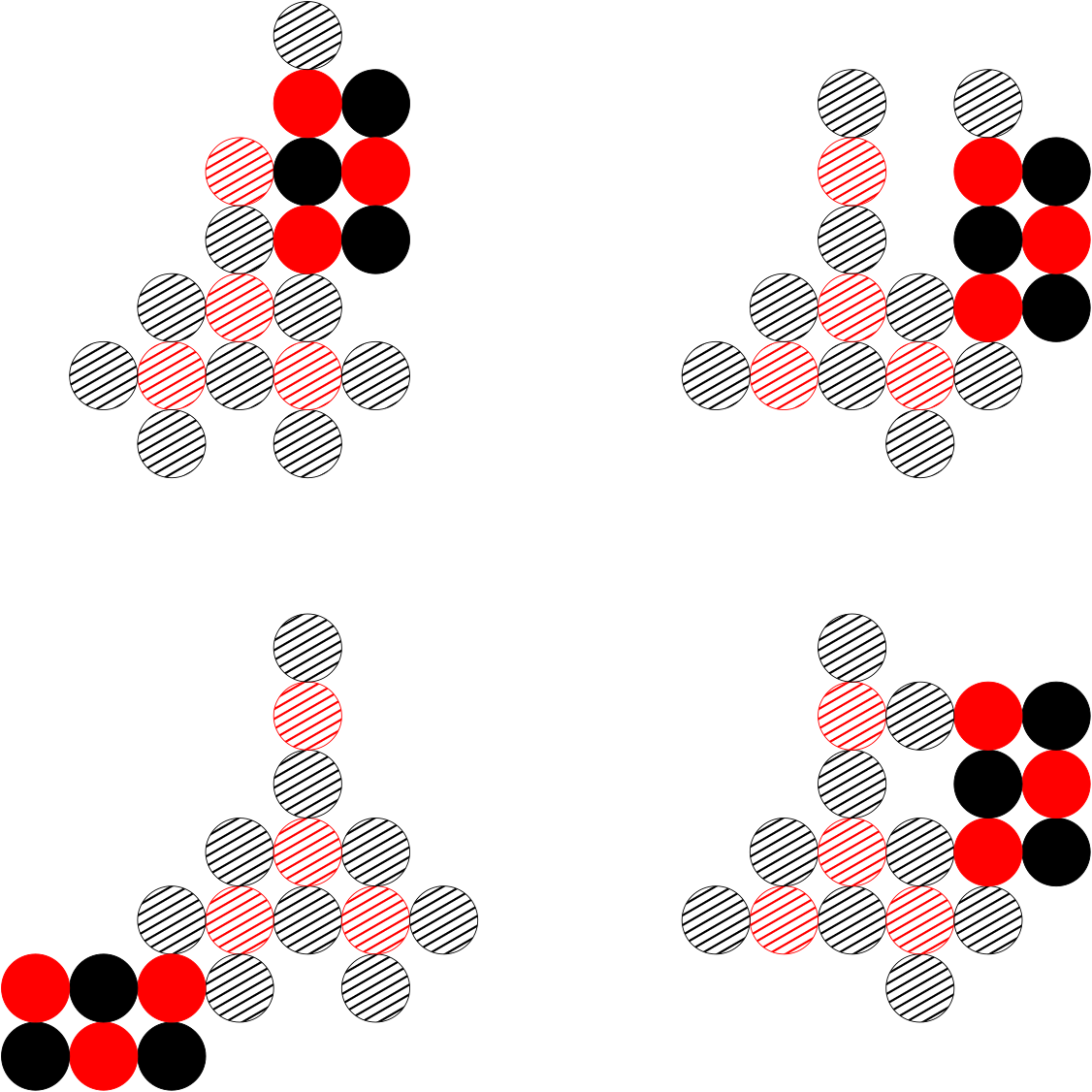}
\end{subfigure}
\caption{Converting a black parity rhombus.}
\label{fig:single-black}
\end{figure}

An orthogonal convex shape $S$ divided into $p$ rows, $R_1, R_2, \ldots, R_p$ is \emph{line-like} if the first node in $R_i$ is above the last node in $R_{i - 1}$ for all $0 < i \leq p$.

A line $l$ \emph{blocks} an empty cell $c$ in an orthogonal convex shape $S$ if there is a node in $l$ such that adding a node to $c$ would cause $S$ to lose orthogonal convexity.

\begin{lemma}\label{lem:extract-2-nodes}
For any shape $S \cup R$ where $S$ is a non-red parity connected orthogonal convex shape of $n$ nodes divided into $p$ rows, $R_1, R_2, \ldots, R_p$ and $R$ is a 6-robot, it is possible for $R$ to extract a bicolour pair of nodes $u, v$ from $S$, where the resulting shape $S' = S \setminus \{ u, v\}$ is a connected orthogonal convex shape.
\end{lemma}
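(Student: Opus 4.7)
The plan is to locate two nodes $u, v$ of opposite colour on the boundary of $S$ whose removal preserves both orthogonal convexity and connectivity, and then invoke the $6$-robot's traversal ability to extract them.

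First, I apply Lemma \ref{lem:remove-multiple-nodes} to obtain a row elimination sequence $\sigma = (u_1, u_2, \ldots, u_n)$ starting from the bottom-most row $R_1$. The key observation is that consecutive nodes along any single row or column of an orthogonal convex shape are grid-adjacent and hence of opposite colours. Thus, whenever $|R_1| \geq 2$, the first two nodes of $\sigma$ both lie in $R_1$ as consecutive endpoint-nodes, yielding the desired bicolour pair with $S' = S \setminus \{u_1, u_2\}$ connected and orthogonal convex by that same lemma. If $|R_1| = 1$ but some other terminal ($R_p$, $C_1$, or $C_q$) has size at least $2$, I rotate the argument by the appropriate multiple of $90\degree$ and invoke Lemma \ref{lem:remove-multiple-nodes} for a column or reversed-row elimination, obtaining the bicolour pair in exactly the same way.

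The delicate case is when all four terminal rows/columns contain exactly one node each; here $S$ is pointy at every corner. The hypothesis that $S$ is non-red parity and not a parity rhombus (the parity rhombus case being handled separately by Lemma \ref{lem:extract-rhombus}) forces some internal line of $S$ to either have even length, or to violate the unit-gradient condition on its endpoints. I exploit this asymmetry to identify a terminal whose unique node lies adjacent to an endpoint of its neighbouring line; removing both that terminal node and that endpoint preserves orthogonal convexity via Proposition \ref{prop:columns-rows} and gives two opposite-coloured nodes by the horizontal/vertical adjacency argument.

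Once $u$ and $v$ are fixed, the $6$-robot $R$ applies Theorem \ref{the:6-robot-traverse}, together with Lemma \ref{lem:black-repository} should its traversal pass near the single-black repository, to move to $u$, extract it, traverse to $v$, and extract it as well. The main obstacle I foresee is the all-single-terminals case: ruling out a configuration which forbids every bicolour extraction requires the non-parity-rhombus hypothesis to be exploited concretely to locate at least one valid extraction site, likely through a careful case analysis of the column lengths adjacent to each of the four single-node terminals. The remaining cases reduce to direct applications of the row/column elimination framework established in Lemma \ref{lem:remove-multiple-nodes}.
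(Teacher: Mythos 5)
Your main case is sound and arguably cleaner than the paper's argument: when some terminal row or column has at least two nodes, taking the first two entries of a row (or column) elimination sequence from Lemma \ref{lem:remove-multiple-nodes} that starts by walking inward from an endpoint does give two grid-adjacent, hence opposite-coloured, nodes whose removal leaves a connected orthogonal convex shape. (One point to make explicit: not every admissible elimination sequence has adjacent first two entries --- form (iii) of the definition may begin with the two opposite endpoints of the row, which are same-coloured when the row has odd length --- so you must commit to one of the specific forms constructed in the proof of Lemma \ref{lem:remove-multiple-nodes}.) The paper instead works at the top row $R_p$ and argues directly by cases on $|R_p|$ and $|R_{p-1}|$; your route through the elimination-sequence machinery dispatches that territory in one stroke.

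The genuine gap is the case you yourself flag: all four terminal rows/columns are singletons. You assert that non-red parity together with ``not a parity rhombus'' forces some terminal whose unique node is adjacent to an endpoint of its neighbouring line, but you neither prove this existence claim nor verify that removing that endpoint second preserves connectivity (it can be the sole attachment of the rows beyond it). This is precisely where the paper does its real work: it analyses $|R_p|\in\{1,2,3\}$ against the length of $R_{p-1}$, and in the hard sub-cases (a single red or black node in $R_p$ sitting over a length-$3$ row) it runs a counting/pigeonhole argument on odd-length rows with monochromatic endpoints --- if every black-ended odd row were blocked, the shape would need strictly more red-ended odd rows and hence red parity, contradicting the hypothesis --- before reducing the one surviving configuration, the black parity rhombus, to Lemma \ref{lem:extract-rhombus}. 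Without an argument of this kind your proposal does not establish that a valid bicolour extraction site exists in the pointy case, so the lemma is not yet proved.
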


\begin{proof}
We divide our proof into cases. In the first case, $S$ is line-like and the rows $R_1$ and $R_p$ have black nodes which can be extracted. In this case, we can extract the node and then the following node, which is necessarily of the opposite colour. If only one of $R_1$ and $R_p$ has a black node which can be extracted, we can rotate $S$ by $180\degree$ to ensure that the row containing the black does not contain the anchor node and then extract from it. If neither row has a black node which can be extracted, then the line-like shape has a red parity, which violates our assumption of a black parity shape.

If $S$ is not line-like, then we consider the row $R_p$. If $R_p$ is of length $\geq 4$, then we can extract two nodes from $R_p$ without breaking connectivity by extracting from the side furthest from the point where $R_{p - 1}$ connects to $R_p$. If $R_p$ is of length 3 and $R_{p - 1}$ is of length $\geq 2$, then we can extract two nodes from $R_p$, leaving one connected to $R_{p - 1}$. If $R_p$ is of length 3 and $R_{p - 1}$ is of length 1, then we can extract 2 nodes unless $R_{p - 1}$ is connected to the middle node of $R_p$. In this case, $R_{p - 2}$ to $R_1$ must be a single node and we can extract from the other end of the line starting with $R_1$ as the existence of a column after $R_{p - 1}$ would violate convexity. If $R_p$ is of length 2, then we can extract $R_p$. If $R_p$ and $R_{p - 1}$ are of length 1 each, then we can extract them. This leaves the cases where $R_p$ is of length 1 and $R_{p - 1}$ is of length $\geq 1$. If $R_{p - 1}$ is of length $\geq 4$ then we can move $R_p$ if necessary and extract two nodes from $R_{p - 1}$. If $R_{p - 1}$ is of length $2$ then we can rotate $R_p$ into $R_{p - 1}$ and the resulting situation is equivalent to $R_p$ of length 3.

If $S$ has a single red node in $R_p$ and $R_{p - 1}$ is of length $3$, then for $S$ to have a black parity $\geq 0$, there must be some row $R_i$ which is an odd length black line. We can move the node in $R_p$ to either end of this line, unless there is a row $R_j$ which blocks this, by extending further than the ends of $R_i$. If $R_j = R_{i + 1}$ or $R_j = R_{i - 1}$, then $R_j$ cannot block $R_i$. If $R_j$ is an odd length black line or even length line, we can place $R_p$ on it. If $R_j$ is an odd length red line, then to maintain parity, there must be another odd length black line. If there is a line $R_k$ which neighbours $R_i$ and has a greater length than it, then $R_j$ cannot block $R_i$. Therefore, given $b$ odd length black lines, for every line to be blocked there must be at least $r = b + 1$ odd length red lines, one long line to block the nodes and $b$ lines to connect them together into a shape. Including $R_p$, such a shape would have a red parity of at least $2$, and is therefore impossible via assumption. We can therefore move $R_p$, creating a new situation where extraction of two nodes from $R_{p - 1}$ is possible.

If $S$ has a single black node in $R_p$ and $R_{p - 1}$ is of length $3$, then if $R_{p - 2}$ is of length $2$ we can move the node in $R_p$ to it and extract from $R_{p - 1}$. If $R_{p - 2}$ is also of a length $\geq 3$ then unless $S$ is a black parity rhombus it is possible for the 6-robot to extract the node in $R_p$ and move away from $R_{p - 1}$. After that, the red node from $R_{p - 1}$ can be rotated. The robot can then store the black node it is carrying on the red node by moving around the perimeter of the shape, which is still orthogonal convex. Finally, the 6-robot can extract both the black node and the red node. If $S$ is a black parity rhombus then by Lemma \ref{lem:extract-rhombus} we can extract two nodes from it using special movements.
\qed
\end{proof}

\begin{lemma}\label{lem:extract-black-node}
For any shape $S \cup M$, where $S$ is a non-red parity connected orthogonal convex shape of $n$ nodes divided into $p$ rows, $R_1, R_2, \ldots, R_p$ and $M$ is a 6-robot, it is possible for $M$ to extract a black node $u$ from $S$, where the resulting shape $S' = S \setminus \{ u\}$ is a connected orthogonal convex shape.
\end{lemma}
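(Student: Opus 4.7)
The plan is to exhibit a black node $u$ on the external surface of $S$ such that $S\setminus\{u\}$ remains a connected orthogonal convex shape, and then invoke Theorem \ref{the:6-robot-traverse} to have the 6-robot $M$ traverse the perimeter, reach $u$, and extract it. The key structural fact, from Proposition \ref{prop:columns-rows}, is that every row and column of $S$ is a consecutive line of nodes, so removing an endpoint of a terminal row or column merely shortens one column or row by a single cell and cannot create a horizontal or vertical gap; orthogonal convexity is automatically preserved, and connectivity is preserved unless that endpoint is the unique attachment between its row/column and the adjacent one.

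I would proceed by case analysis on the top-most row $R_p$, using the symmetric treatments of $R_1$, $C_1$, $C_q$ obtained by $90\degree$ and $180\degree$ rotations. If $|R_p|\geq 2$, both endpoints of $R_p$ lie on the external surface and at least one of them is always removable while preserving connectivity; such an endpoint is black whenever $|R_p|$ is even, or whenever $|R_p|$ is odd with its ends black. If $|R_p|=1$ and the single node is black, I extract it directly (after possibly a short reconfiguration to keep $R_{p-1}$ connected, as in the corresponding step of the proof of Lemma \ref{lem:extract-2-nodes}). If all endpoints of $R_p$ are red, I switch to $R_1$ or to a terminal column $C_1,C_q$ via the symmetric argument.

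The main obstacle will be the configuration in which every terminal endpoint of $R_1$, $R_p$, $C_1$, $C_q$ is red. A column-contribution analysis in the spirit of Lemma \ref{lem:maximum-difference} shows that this situation forces the red count to strictly exceed the black count, contradicting the non-red parity hypothesis, except when $S$ is a parity rhombus. Since $S$ has non-red parity, only the black-parity rhombus subcase survives, and this case is dispatched by Lemma \ref{lem:extract-rhombus}, which extracts two black nodes and a red one; retaining just the first extracted black node yields the desired single-black extraction. Combined with Theorem \ref{the:6-robot-traverse} for the navigation of $M$, this covers every case and establishes the lemma.
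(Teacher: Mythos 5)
Your overall strategy differs from the paper's: you restrict the search for a removable black node to the endpoints of the four terminal rows/columns $R_1,R_p,C_1,C_q$, whereas the paper searches over \emph{all} rows for an odd-length row that begins and ends with a black node (whose existence follows from non-red parity, since even rows are colour-balanced) and then uses a pigeonhole argument over all such rows to find one whose black endpoint can be removed without violating convexity or connectivity. Your convexity observation is sound -- an endpoint of a terminal row is also an endpoint of its column, so its removal keeps every row and column a consecutive run -- and it is true that at least one endpoint of each terminal row is not a cut vertex. But there is a concrete error in the step ``such an endpoint is black whenever $|R_p|$ is even'': when $|R_p|$ is even the two endpoints have opposite colours, and the unique removable one may be the \emph{red} one. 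For example, if $R_p$ juts out to the right over a single supporting column (so $R_{p-1}$ lies entirely at or to the left of the left end of $R_p$), the left endpoint is the only cut vertex candidate and may well be the black one; your case analysis would then wrongly declare success at $R_p$.

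This error propagates into your residual case, which you characterise as ``every terminal endpoint of $R_1,R_p,C_1,C_q$ is red.'' The actual residual configuration also includes shapes in which black terminal endpoints exist but are all cut vertices, and your counting argument does not address that situation at all. Moreover, the counting claim itself -- that all-red terminal endpoints force a strict red majority except for the parity rhombus -- is asserted ``in the spirit of Lemma~\ref{lem:maximum-difference}'' but never proved, and it is the crux of the whole argument: non-red parity only guarantees that some odd-length row has black endpoints, and that row may be an \emph{interior} row (as in a $3\times 3$ square with red corners, where the only black nodes are the edge midpoints of interior rows/columns). Your proof never examines interior rows, which is precisely where the paper's pigeonhole argument over all odd-length black-ended versus red-ended rows does its work. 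To close the gap you would need either to prove your terminal-endpoint counting claim rigorously (including the cut-vertex subcases) or to extend the search to interior row endpoints as the paper does.
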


\begin{proof}
We consider two cases for when the black nodes are the majority, and when they are exactly $n/2$.

In the first case, since $S$ is majority black by assumption, there must be at least one row $R_i$ which is a single black node or a line of odd length which begins and ends with a black node. We can extract a black node from this line, unless such an extraction will violate orthogonal convexity. This occurs if the neighbouring lines $R_{i-1}$ and $R_{i+1}$ are of the same length. For $S$ to have black parity, this implies the existence of another odd length line from which a black node can be extracted. More generally, given $x$ odd length lines which end in reds, by the pigeonhole principle there must be $x + 1$ odd length lines ending in blacks, implying at least one can be extracted from without violating orthogonal convexity. Moreover, the removal cannot break connectivity, because this would imply $R_{i-1}$ and $R_{i+1}$ are single red nodes on either end of $R_i$, which also implies the existence of another odd length line to maintain the ratio of black to red nodes.

In the second case, we try to extract a black node from the bottom row $R_p$. If this is not possible, it implies either that $R_p$ is an odd length line with red parity or $R_p$ is an even length line and $R_{p-1}$ is a single red node connected to the black node at the end of $R_p$. In either case, the existence of a row which is of red parity implies the existence of an odd length line of black parity to maintain the overall parity of the shape. A similar pigeonhole principle argument to the first case follows, both for orthogonal convexity and connectivity.

There is a special case where the black node to be extracted happens to be the anchor node. In this scenario, we simply rotate the shape $180\degree$, giving us an equivalent scenario where the black node is in $R_p$ and thus guaranteed to be accessible.
\qed
\end{proof}

\begin{lemma}
For any shape $S \cup M$, where $S$ is a non-red parity connected orthogonal convex shape of $n$ nodes divided into $p$ rows, $R_1, R_2, \ldots, R_p$ and $M$ is a 6-robot, it is possible for $M$ to extract a sequence of nodes $(u_1, u_2, u_3)$ from $S$, where $u_1, u_2$ is a bicolour pair, $u_3$ is black, and $S\setminus\{u_1,u_2,u_3\}$ is a connected orthogonal convex shape.
\end{lemma}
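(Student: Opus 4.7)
The plan is to build the sequence $(u_1,u_2,u_3)$ by chaining the two preceding extraction lemmas, with the parity rhombus case handled separately via Lemma~\ref{lem:extract-rhombus}.

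First, I would dispatch the edge case in which $S$ is a black parity rhombus. Lemma~\ref{lem:extract-rhombus} already gives an explicit procedure by which the 6-robot extracts two black nodes and one red node; by inspecting the extraction shown in Figure~\ref{fig:single-black}, the order in which the three nodes leave $S$ can be arranged to be $(\text{black},\text{red},\text{black})$, so that $(u_1,u_2)$ forms a bicolour pair and $u_3$ is black, and the remaining shape is still connected and orthogonal convex.

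For the main case, where $S$ is not a parity rhombus, I would first apply Lemma~\ref{lem:extract-2-nodes} to obtain a bicolour pair $(u_1,u_2)$ whose removal leaves a connected orthogonal convex shape $S' := S\setminus\{u_1,u_2\}$. Since removing one black together with one red preserves the colour-difference, $S'$ retains the non-red parity property and hence satisfies the hypotheses of Lemma~\ref{lem:extract-black-node}. Applying the latter to $S'$ produces a black node $u_3$ whose removal preserves both connectivity and orthogonal convexity; concatenating the three extractions yields the required sequence.

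The hard part will be confirming that the chaining stays clean when the first extraction happens to push $S'$ into the parity rhombus edge case even though $S$ itself was not. In that event, rather than invoke Lemma~\ref{lem:extract-black-node} as a black box, I would directly verify that the black corner at either end of the rhombus can be removed by the 6-robot while maintaining orthogonal convexity; the anchor-node obstruction can be sidestepped by the same $180\degree$ rotation trick employed in the proof of Lemma~\ref{lem:extract-black-node}. With this edge case absorbed, all remaining situations follow by a direct composition of the two preceding lemmas, and no new geometric machinery is needed.
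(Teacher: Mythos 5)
Your proposal is correct and follows essentially the same route as the paper: the paper's proof simply composes Lemma~\ref{lem:extract-2-nodes} (to obtain the bicolour pair) with Lemma~\ref{lem:extract-black-node} (to obtain the black node), relying on the fact that the rhombus edge case is already absorbed inside Lemma~\ref{lem:extract-2-nodes}. Your additional observations — that removing one black and one red preserves the non-red parity of $S'$, and that the intermediate shape might itself become a parity rhombus — are extra care the paper omits, but they do not change the underlying argument.
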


\begin{proof}
By Lemma \ref{lem:extract-2-nodes}, we can extract two nodes from a $S$. The 6-robot places these nodes on the anchor node. By Lemma \ref{lem:extract-black-node} we can then extract a black node from $S$. The 6-robot places this node as well.
\end{proof}

\subsection{Transformations between Shapes}

In this section, we show that, given our previous results, we are now in the position to convert an orthogonal convex shape into another such shape. We begin with the conversion of an extended staircase into a diagonal line-with-leaves, then the orthogonal convex shape to the diagonal line-with-leaves, and then our main result follows by reversibility.

\subsubsection{Transforming $S$ to Extended Staircase}

\begin{lemma}\label{lem:build-ext-staircase}
Let $S$ be a connected orthogonal convex shape with $n$ nodes divided into $p$ rows $R_1, R_2, \ldots, R_p$. Given a row elimination sequence $\sigma=(u_1, u_2,$ $\ldots, u_{n})$ of $S$, an extended staircase generation sequence $\sigma'=(u'_1, u'_2,$ $\ldots, u'_{n})$ which is colour-order preserving w.r.t. $\sigma$, and a 6-robot placed on the external surface of $S$, for all $1 \leq i < n$ the 6-robot is capable of picking up the node $u_i$, moving as a 7-robot to the empty cell $u'_i$ and placing it, and then returning as a 6-robot to $u_{i + 1}$.
\end{lemma}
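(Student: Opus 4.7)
The plan is to proceed by induction on $i$, maintaining the invariant that at the start of step $i$ the current configuration consists of the remaining source shape $S_{i-1} = S \setminus \{u_1,\ldots,u_{i-1}\}$, the partial extended staircase $T_{i-1} = \{u'_1,\ldots,u'_{i-1}\}$, and the 6-robot positioned on the external surface of the union $S_{i-1} \cup T_{i-1}$ at a cell adjacent to $u_i$. By the definition of a row elimination sequence (and Lemma~\ref{lem:remove-multiple-nodes}), $S_{i-1}$ is a connected orthogonal convex shape and $u_i$ lies on its external surface; by Lemma~\ref{lem:approx-dll-prefix-convexity}, $T_{i-1}$ is orthogonal convex apart from a possibly occupied single-black repository cell; and by Lemma~\ref{lem:shape-staircase-properties}, $S_{i-1} \cup T_{i-1}$ is a connected shape whose only ``non-convex'' feature is the single-black repository.

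For the base case I would appeal to the seed-to-robot lemma (Lemma~\ref{lem:seed-to-robot}) together with the prefix-construction lemmas, which together place a 6-robot on the external surface of $S$ in a position from which the first target cells $u'_1, u'_2, u'_3$ are reachable. Strictly speaking these first placements are a special initialisation case: they are handled by the same local rotations used in the seed-to-robot and prefix-construction proofs, and I would appeal to those rather than re-prove them here.

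For the inductive step I would carry out four actions. (a) \emph{Extract $u_i$}: since the 6-robot sits on the external surface of $S_{i-1}$ next to $u_i$, a single rotation lifts $u_i$ onto the robot, turning it into a 7-robot with $u_i$ as load. (b) \emph{Navigate from $u_i$ to $u'_i$}: the 7-robot traverses the perimeter of the combined shape from the source side to the destination side; for the orthogonal convex parts of the perimeter this is Theorem~\ref{the:7-robot-traverse}, for the single-black repository it is Lemma~\ref{lem:black-repository}, and for crossing between $S_{i-1}$ and $T_{i-1}$ the orthogonal convexity of $R_p \cup T_i$ from Lemma~\ref{lem:shape-staircase-properties} guarantees that the junction looks locally like one of the corner cases already handled by the traversal theorems. (c) \emph{Place $u_i$ at $u'_i$}: by the colour-order preserving property of $\sigma'$ and the fact that $u'_i$ is in the cell perimeter of $T_{i-1}$ (by the definition of a generation sequence, and by Lemma~\ref{lem:approx-dll-prefix-convexity}), a single rotation drops the load into $u'_i$. (d) \emph{Return to $u_{i+1}$}: the now 6-robot retraces the perimeter, using Theorem~\ref{the:6-robot-traverse} and Lemma~\ref{lem:black-repository}, and stops next to $u_{i+1}$, which by Lemma~\ref{lem:remove-multiple-nodes} is on the external surface of $S_i = S_{i-1} \setminus \{u_i\}$. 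This restores the invariant for step $i+1$.

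The main obstacle I expect is step (b): the combined shape $S_{i-1} \cup T_{i-1}$ is not globally orthogonal convex (the extended staircase protrudes from the anchor region of $S_{i-1}$, and the single-black repository introduces a further indentation), so Theorems~\ref{the:6-robot-traverse} and~\ref{the:7-robot-traverse} do not apply to the whole perimeter in one shot. I would address this by decomposing the perimeter into three regions — the orthogonal convex boundary of $S_{i-1}$, the orthogonal convex boundary of $T_{i-1}$ excluding the single-black repository, and the short junction plus repository region — and by arguing that each local boundary pattern encountered falls within a case already covered by the corner-case analysis (Lemma~\ref{lem:progress-corner-cases}, Lemma~\ref{lem:progress-corner-cases-7node}) or by Lemma~\ref{lem:black-repository}. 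A secondary subtlety is that the 7-robot's colouring may be ``bad'' when it arrives at a corner, so I would verify that the colouring invariants maintained by the traversal theorems are preserved when transitioning between the $S_{i-1}$-part and the $T_{i-1}$-part of the boundary; this reduces to checking that both shapes have the parity consistent with the colour-order preservation of $\sigma'$ w.r.t.\ $\sigma$.
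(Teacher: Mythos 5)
Your proposal is correct and follows essentially the same route as the paper's proof: both reduce the claim to the traversal theorems (Theorems~\ref{the:6-robot-traverse} and~\ref{the:7-robot-traverse}) on the orthogonal convex portions, invoke Lemma~\ref{lem:shape-staircase-properties} to handle the junction with the partial staircase by approaching it from $R_p$, and invoke Lemma~\ref{lem:black-repository} for the occupied single-black repository. Your explicit inductive framing and perimeter decomposition merely make precise what the paper states more tersely.
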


\begin{proof}
We follow the procedure of Algorithm \ref{alg:extended-staircase-build-alg}. By Theorem \ref{the:6-robot-traverse} and Theorem \ref{the:7-robot-traverse}, the 6-robot $R$ and 7-robot $R \cup u_i$ can climb and slide around the external surface of $S$. We use this to move to each $u_i$, extract it, move to the cell for $u'_i$ and then place a node of the same colour as $u_i$ in it, substituting $u_i$ for a node in $R$ as necessary to create new 6-robot. By Lemma \ref{lem:shape-staircase-properties}, so long as we approach $T_i$ from $R_p$, we can climb onto and off $T_i$ to place the nodes using the same movements as the previous theorems. By Lemma \ref{lem:black-repository}, placing a black node in the repository cell does not inhibit movement.
\qed
\end{proof}

\subsubsection{Transforming Extended Staircase to Diagonal-Line-with-Leaves}

\begin{lemma}\label{lem:build-dll}
Let $W \cup T \cup R$ be the union of the $Stairs$ of an extended staircase $W$, $T \subseteq \{BRep \cup RRep\}$ from the extended staircase and a 6-node robot $R$ on the cell perimeter of $S \cup T$. Given a shape elimination sequence $\sigma=(u_1, u_2, \ldots, u_n)$ of $T$, a diagonal line-with-leaves generation sequence $\sigma'$ which is colour-order preserving w.r.t. $\sigma$ and a 6-robot placed on the external surface of $S$, for all $1 \leq i \leq n$ the 6-robot is capable of picking up the node $u_i$, moving as a 7-robot to $u'_i$ and placing it, and then returning as a 6-robot to $u_{i + 1}$.
\end{lemma}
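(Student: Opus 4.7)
The plan is to follow Algorithm \ref{alg:dll-build-alg} iteratively, using the robot traversal theorems to handle each movement and the earlier generation-sequence lemmas to guarantee that the intermediate shapes stay in the class of shapes those theorems apply to. I would proceed by induction on $i$, maintaining the invariant that before iteration $i$ the non-robot configuration equals $W \cup (T \setminus \{u_1,\ldots,u_{i-1}\}) \cup \{u'_1,\ldots,u'_{i-1}\}$ and that $R$ is a 6-robot on its cell perimeter.

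First I would argue the extract-from-source phase. By Lemma \ref{lem:extend-to-dll}, $\sigma$ removes nodes from the outer ends of $BRep$ and $RRep$ alternately and inwards, so the remaining repository part is still a pair of gap-free diagonals anchored at the first column of $Stairs$, and the non-robot shape is orthogonal convex apart from at most the single-black repository cell of the original extended staircase. Theorem \ref{the:6-robot-traverse} then supplies a traversal path from the robot's current location to $u_i$, and if the route happens to pass the occupied single-black cell, Lemma \ref{lem:black-repository} shows the robot can still get through. Once alongside $u_i$, a single rotation extracts $u_i$ and turns $R$ into a 7-robot whose load has the same colour as the node just removed.

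Next I would argue the transport-and-placement phase. By Lemma \ref{lem:extend-to-dll}, the prefix $D_i = W \cup \{u'_1,\ldots,u'_i\}$ is a connected orthogonal convex shape with $u'_i$ in its cell perimeter; the still-present repository diagonals only enlarge this shape in a way that keeps $u'_i$ on the cell perimeter and preserves orthogonal convexity up to the same single-black exception. Theorem \ref{the:7-robot-traverse}, combined with Lemma \ref{lem:black-repository}, then lets the 7-robot carry its load to $u'_i$ and deposit it through the placement rotation, which restores a 6-robot. Since $\sigma^\prime$ is colour-order preserving with respect to $\sigma$, the node deposited has the colour required at $u'_i$, and the robot is then free to re-traverse the now slightly-modified perimeter toward $u_{i+1}$ by the same two theorems, re-establishing the invariant.

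The main obstacle I expect is justifying that the transient configuration really stays traversable. The alternating elimination order dictated by $\sigma$ is crucial: it prevents interior gaps from opening within either repository, while the generation order dictated by $\sigma^\prime$ extends $Stairs$ only along its outgoing diagonal, so that neither the source side nor the destination side can violate the two conditions of Proposition \ref{prop:regular-expression}. Once the invariant is verified for a single iteration, induction finishes the argument, and after $|T|$ iterations the shape is exactly the diagonal line-with-leaves extension of $Stairs$ that $\sigma^\prime$ was designed to construct.
\qed
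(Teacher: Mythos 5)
Your proposal is correct and follows essentially the same route as the paper's proof: iterate Algorithm \ref{alg:dll-build-alg}, invoke Theorems \ref{the:6-robot-traverse} and \ref{the:7-robot-traverse} for the 6- and 7-robot traversals, and observe that each placement merely extends $Stairs$ so the intermediate shapes remain connected and orthogonal convex. Your additional appeals to Lemma \ref{lem:extend-to-dll} and Lemma \ref{lem:black-repository}, and the explicit induction invariant, only make explicit what the paper's shorter argument leaves implicit.
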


\begin{proof}
We follow the procedure of Algorithm \ref{alg:dll-build-alg}. By Theorem \ref{the:6-robot-traverse} and Theorem \ref{the:7-robot-traverse}, the 6-robot $R$ and 7-robot $R \cup u_i$ can climb and slide around the external surface of $S \cup T$. We use this to move to each $u_i$, extract it, move to the cell for $u'_i$ and then place a node of the same colour as $u_i$ in it, substituting $u_i$ for a node in $R$ as necessary to create new 6-robot. Since the placement of $u'_i$ is extending $Stairs$, the resulting shape is always orthogonal convex for all $1 \leq i \leq n$.
\qed
\end{proof}

\subsubsection{Transforming $S$ to Diagonal-Line-with-Leaves}

\begin{lemma}\label{lem:convex-to-dll}
Let $S$ be a connected orthogonal convex shape. Then there is a connected shape $M$ of $3$ nodes (the $3$ musketeers) and an attachment of $M$ to the bottom-most row of $S$, such that $S \cup M$ can reach the configuration $D$, where $D$ is a diagonal line-with-leaves which is colour-consistent with $S$.
\end{lemma}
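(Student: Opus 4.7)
The plan is to execute Algorithm \ref{alg:high-level-alg} step by step, discharging each subroutine with a previously established lemma. We first attach the three musketeers $M$ to the bottom-most row of $S$ exactly as prescribed by Lemma \ref{lem:seed-to-robot}; this produces a connected $6$-robot $R$ whose nodes are the three musketeers together with the three-prefix of a row elimination sequence $\sigma$ of $S$ (guaranteed to exist by Lemma \ref{lem:remove-multiple-nodes}). The remainder of the proof consists of transporting the rest of $S$ into its target positions using $R$ as a mover.

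Next, we build the prefix of the extended staircase. We invoke the extraction lemmas from the Prefix Construction subsection to obtain, with the help of $R$, a sequence of three nodes $(u_1,u_2,u_3)$ from $S$ such that $\{u_1,u_2\}$ is a bicolour pair and $u_3$ is black (with Lemma \ref{lem:extract-rhombus} handling the degenerate rhombus case). These are placed by $R$ at the four-node (possibly five-node) prefix of the extended staircase anchored at the anchor node of $S$. By Lemma \ref{lem:shape-staircase-properties}, this anchor placement ensures that the growing extended staircase is disjoint from and properly connected to the shrinking $S$, and the union $R_p \cup T_i$ remains orthogonal convex throughout, so that robot traversal between the two regions is always possible.

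We then apply Lemma \ref{lem:build-ext-staircase}: taking the row elimination sequence $\sigma$ of what remains of $S$ and, via Algorithm \ref{alg:extended-staircase-gen-alg}, the extended staircase generation sequence $\sigma'$ that is colour-order preserving with respect to $\sigma$ (whose correctness is guaranteed by Lemma \ref{lem:res-subset-bound}, Lemma \ref{lem:approx-dll-gen-bound}, Lemma \ref{lem:approx-dll-prefix-convexity} and Lemma \ref{lem:convex-to-approx}), the $6$-robot iteratively picks up $u_i$, traverses the perimeter (Theorems \ref{the:6-robot-traverse} and \ref{the:7-robot-traverse}, with Lemma \ref{lem:black-repository} handling the single-black repository obstruction), and places it in cell $u'_i$. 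The result is an extended staircase $W\cup T$. Finally, Lemma \ref{lem:build-dll}, combined with the elimination/generation sequences supplied by Lemma \ref{lem:extend-to-dll}, redistributes the contents of the repositories $BRep$ and $RRep$ to extend $Stairs$ until only one repository remains, producing a diagonal line-with-leaves.

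The colour-consistency conclusion follows because every individual rotation preserves the chessboard colour of the moved node, so the multiset of colours is invariant throughout the entire transformation; since each generation sequence is colour-order preserving with respect to its elimination counterpart, the final diagonal line-with-leaves inherits exactly the colour counts of $S$. The principal obstacle is ensuring that all of the sequences glue consistently: the prefix extracted in the second step must satisfy the hypothesis of Lemma \ref{lem:convex-to-approx} (first two nodes bicoloured, third black, no single-coloured $3$-sub-sequence) so that Algorithm \ref{alg:extended-staircase-gen-alg} produces a valid generation sequence, and the resulting repositories must be small enough, by the imbalance bounds of Lemma \ref{lem:maximum-difference}, Lemma \ref{lem:hvconvex-parity-bound} and Lemma \ref{lem:dll-parity-bound}, that the final redistribution step of Lemma \ref{lem:extend-to-dll} yields a genuine diagonal line-with-leaves rather than a merely extended staircase.
\qed
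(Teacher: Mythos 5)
Your decomposition is essentially the paper's: form the $6$-robot via Lemma \ref{lem:seed-to-robot}, build the extended staircase via Lemma \ref{lem:build-ext-staircase}, and convert it to a diagonal line-with-leaves via Lemma \ref{lem:build-dll}, all orchestrated by Algorithm \ref{alg:high-level-alg}. You supply considerably more supporting detail than the paper's own (very terse) proof, and the extra care you take over the gluing conditions for Lemma \ref{lem:convex-to-approx} and the prefix-construction lemmas is all consistent with how the paper intends those pieces to fit together.

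There is one concrete omission. Throughout the transformation the $6$-robot consists of the $3$ musketeers \emph{plus} $3$ nodes borrowed from $S$ (this is exactly what Lemma \ref{lem:seed-to-robot} produces). When your construction phase ends you have a diagonal line-with-leaves that is still missing those $3$ nodes of $S$, so your closing claim that ``the final diagonal line-with-leaves inherits exactly the colour counts of $S$'' does not yet follow: colour is conserved for $S\cup M$ as a whole, but $D$ must contain $b(S)$ blacks and $r(S)$ reds on its own, with only the $3$-node seed left over. The paper closes this gap explicitly: by reversibility, the robot is positioned so that depositing $3$ of its nodes (of the appropriate colours) extends the shape to the final $D$, leaving a connected $3$-node remnant on its perimeter --- this is the \emph{TerminateRobot} step of Algorithm \ref{alg:high-level-alg}. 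You should add this final deposition step; without it the statement's colour-consistency conclusion is not established.
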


\begin{proof}
We follow the procedure of Algorithm \ref{alg:high-level-alg}. By Lemma \ref{lem:seed-to-robot} we can form a 6-robot from $S \cup M$. By Lemma \ref{lem:build-ext-staircase}, we can build an extended staircase from the resulting shape. By Lemma \ref{lem:build-dll}, we can then build a diagonal line-with-leaves. Finally, by reversibility, we can place $R$ such that the removal of 3 nodes leaves a larger diagonal line-with-leaves $D$ which is colour-consistent with $S$.
\qed
\end{proof}

\subsection{Time Analysis and Wrapping Up}

\begin{lemma}\label{lem:time-bound}
There exists a connected orthogonal convex shape of $n$ nodes $S$ and a diagonal line-with-leaves $T$ and such that any strategy which transforms $S$ into $T$ requires $O(n^2)$ time steps in the worst case.
\end{lemma}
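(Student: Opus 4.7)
The plan is to instantiate $S$ as a horizontal line of $n$ nodes and $T$ as the canonical diagonal line-with-leaves of $n$ nodes from Figure \ref{fig:diagonal-line-with-leaves} (both are orthogonal convex), and then establish the bound via a translation-invariant potential-function argument in the spirit of the line-to-staircase lower bound of \cite{MSS19}.

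Define
\[
\Phi(A) \;=\; \sum_{\{u,v\}\subseteq A} |y(u) - y(v)|,
\]
which depends only on the multi-set of $y$-coordinates of $A$ and is therefore translation-invariant. Since every node of $S$ shares the same $y$-coordinate, $\Phi(S) = 0$. On the other hand, the $y$-coordinates appearing in $T$ span a range of $\Theta(n)$ with $\Theta(1)$ nodes at each height: the top $\Theta(n)$ nodes of $T$ all lie at $y$-coordinates that exceed those of the bottom $\Theta(n)$ nodes by $\Omega(n)$, producing $\Omega(n^2)$ cross-pairs each contributing $\Omega(n)$ to $\Phi$. Hence $\Phi(T) = \Omega(n^3)$.

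Now I would use the fact that a single rotation displaces exactly one node $u$ to a diagonally adjacent cell, so $y(u)$ changes by exactly $1$ (and the $y$-coordinates of all other nodes are unaffected). Only the $n-1$ terms of $\Phi$ involving $u$ can change, each by at most $1$, so $\Phi$ changes by at most $n-1$ per move. Starting from $\Phi(S) = 0$ and reaching $\Phi(T) = \Omega(n^3)$ therefore requires at least $\Omega(n^3)/(n-1) = \Omega(n^2)$ rotations, as claimed. By reversibility of rotation, the identical bound holds for the $T \to S$ direction.

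The only non-routine step is justifying the $\Omega(n^3)$ estimate for $\Phi(T)$; this is a short pigeonhole-style calculation on the explicit column structure of the diagonal line-with-leaves. The per-move bound on $\Phi$ is immediate from the one-step diagonal displacement built into the definition of a rotation, and the rest of the argument is mechanical. Note that the presence or absence of a constant-size seed does not affect the asymptotic conclusion, since it alters both $\Phi(S)$ and $\Phi(T)$ by only $O(n^2)$, which is dominated by the $\Omega(n^3)$ gap.
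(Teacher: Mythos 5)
You have proved the converse inequality to the one the paper actually establishes here. Despite the word ``requires'' in the statement, the paper's own proof of this lemma is an \emph{upper-bound} accounting of the moves performed by its transformation: the robot carries each of the $n$ nodes along a route of length $O(n)$ at $c=O(1)$ rotations per unit of travel (worst case: $S$ is a staircase and nodes travel end to end), and then does the same again to convert the extended staircase into the diagonal line-with-leaves, for $O(3cn^2)=O(n^2)$ moves in total. That upper bound is what Theorem~\ref{the:dll-to-dll} invokes (``By Lemma~\ref{lem:time-bound}, this procedure takes $O(n^2)$ time steps''). Your potential-function argument is the matching \emph{lower} bound, which the paper does not re-prove but attributes to \cite{MSS19} in Section~\ref{sec2}; it cannot be substituted for the paper's proof, because an $\Omega(n^2)$ bound on some instance says nothing about whether Algorithm~\ref{alg:high-level-alg} finishes within $O(n^2)$ moves on every instance. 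So as a replacement for the paper's proof there is a genuine gap: the quantity the paper needs bounded is never addressed.

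Taken on its own terms as a lower bound, your argument is essentially sound --- $\Phi$ is translation-invariant, a rotation moves one node diagonally so $y$ changes by exactly $1$ and $\Phi$ by at most $n-1$, and the $\Omega(n^3)$ estimate for a shape with $y$-range $\Theta(n)$ and $O(1)$ nodes per height is a routine calculation --- but your choice of instance is flawed. A horizontal line of $n$ nodes is (near-)balanced in colour, whereas the canonical diagonal line-with-leaves of Figure~\ref{fig:diagonal-line-with-leaves} has $2k$ blacks and $k$ reds; since colour counts are invariant under rotation, no transformation between these two shapes exists and the statement ``any strategy which transforms $S$ into $T$ requires \ldots'' is vacuously true for that pair. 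You should instead take $T$ to be the diagonal line-with-leaves consisting entirely of $2$-node columns (a staircase), which is colour-consistent with a line of even length; the same computation still gives $\Phi(T)=\Omega(n^3)$ and hence the intended $\Omega(n^2)$ bound.
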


\begin{proof}
To construct $T$, we must transfer nodes using the robot to the anchor node. In the worst case, $S$ is a staircase, and the robot must move nodes from one end to the other. It must therefore make $O(cn^2)$ moves, where $c$ is the maximum number of rotations needed for the robot to move one step. When the extended staircase has been constructed, it must be converted into a diagonal line-with-leaves. In the worst case every column in the staircase has $4$ nodes, and the robot must extend $Stairs$ until one repository has a single node. Therefore, the robot must make $O(2cn^2)$ moves to travel on both sides of $Stairs$. Combining the worst cases of both procedures therefore takes $O(3cn^2) = O(n^2)$ time steps.
\end{proof}

\begin{proposition}\label{prop:unique-dll}
For any two connected orthogonal convex shapes $S$ and $T$ which are colour-consistent, Algorithm \ref{alg:high-level-alg} generates the diagonal line-with-leaves $D$ and $G$ such that $D = G$.
\end{proposition}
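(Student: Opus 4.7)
The strategy is to argue that the diagonal line-with-leaves produced by Algorithm~\ref{alg:high-level-alg} is a function purely of the pair $(b(S), r(S))$ of colour counts of the input (and hence of $n$, since $n = b(S) + r(S)$). Colour-consistency then gives $(b(S), r(S)) = (b(T), r(T))$, so the two algorithmic outputs $D$ and $G$ must coincide (up to the canonical placement fixed by the algorithm via the anchor node).

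First, I would unpack the structure of the output. By Algorithms~\ref{alg:extended-staircase-gen-alg} and~\ref{alg:extended-staircase-build-alg}, the algorithm produces an extended staircase consisting of a base $Stairs$ of 2-node alternating columns plus up to three repositories ($BRep$, $RRep$, and the single-black repository). The second phase, governed by Algorithm~\ref{alg:dll-build-alg} and Lemma~\ref{lem:extend-to-dll}, drains the non-maximal repository onto $Stairs$, yielding a diagonal line-with-leaves whose structure is completely specified by the number of columns in $Stairs$, the set of columns carrying a third (leaf) node, and the occupancy of the terminal single-node positions at the two ends.

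Second, I would establish that these structural parameters are determined by $(b, r)$ alone. Each 3-node column contributes $(2,1)$ to the colour count (assuming black parity w.l.o.g.), each 2-node column contributes $(1,1)$, and each terminal single-node column contributes $(1,0)$ or $(0,1)$. A simple counting argument, specialising Lemma~\ref{lem:maximum-difference} to the actual colour-difference $b - r$ rather than the extremal one, uniquely determines how many 3-node columns and how many terminals the shape has, and consequently the value of $k = |Stairs|/2$. Thus any two inputs with identical colour counts must yield identical parameters.

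Third, I would conclude that Algorithm~\ref{alg:high-level-alg}, being deterministic once these parameters are fixed and since it places the final shape canonically relative to the anchor node, produces the same shape $D = G$ for both inputs.

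\textbf{Main obstacle.} The delicate point is verifying that the repository sizes produced by Algorithm~\ref{alg:extended-staircase-gen-alg} depend only on $(b, r)$ and not on the finer-grained structure of the specific row elimination sequence $\sigma$ selected for $S$ (and analogously for $T$). This requires noting that, under the invariants guaranteed by Lemmas~\ref{lem:res-subset-bound} and~\ref{lem:approx-dll-gen-bound} (no single-coloured 3-sub-sequence, bounded imbalance), the final occupancies of $BRep$, $RRep$, and the single-black repository are forced to equal the excess of each colour beyond what $Stairs$ can accommodate as balanced bicolour pairs, which is a function of the multiset of colours alone. Once this is in hand, the uniqueness of $D = G$ follows immediately.
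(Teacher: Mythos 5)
The paper states this proposition without any proof at all, so there is nothing to compare your argument against directly; I can only assess it on its own terms. Your high-level skeleton --- the output is a deterministic function of the colour counts $(b,r)$, and colour-consistency forces $(b(S),r(S))=(b(T),r(T))$ --- is surely the intended argument, and the column-by-column accounting (3-node columns contribute $(2,1)$, 2-node columns $(1,1)$, terminals $(1,0)$ or $(0,1)$) is the right way to pin down the final parameters. That part is sound.

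However, the claim in your ``main obstacle'' paragraph is false as stated, and the real gap sits exactly there. The occupancies $N_B$ and $N_R$ at the end of Algorithm~\ref{alg:extended-staircase-gen-alg} are \emph{not} functions of $(b,r)$: they depend on where the colour-switches happen to fall in the particular row elimination sequence $\sigma$, and two colour-consistent inputs (or even two valid elimination sequences of the same input) can yield different pairs $(N_B,N_R)$. What is conserved is only the combination $N_B - N_R$ together with the single-black occupancy, via $b - r = (q + N_B + s) - (q + N_R)$ where $q$ is the number of $Stairs$ columns and $s\in\{0,1\}$. The correct argument is that the phase governed by Lemma~\ref{lem:extend-to-dll} drains $\min(N_B,N_R)$ matched pairs onto $Stairs$, after which the surviving repository has exactly $|N_B-N_R|$ nodes and $Stairs$ has exactly $\min(b,r)$ columns (up to the terminal adjustments) --- quantities that \emph{are} determined by $(b,r,s)$. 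You should attribute the normalisation to the draining phase, not to Algorithm~\ref{alg:extended-staircase-gen-alg}. Even then two residual degrees of freedom remain unaddressed: (i) whether the initial bicolour pair is placed at $(x_l-1,y_d),(x_l-1,y_d+1)$ or at $(x_l-1,y_d-1),(x_l-1,y_d)$ depends on the order of $u_1,u_2$ in $\sigma$, and the two placements put the black of the pair on opposite sides of the red, yielding geometrically distinct left ends; and (ii) the value of $s$ depends on whether $c(u_4)$ is black, which again varies with the choice of $\sigma$, and trades one terminal single-node column against one leaf in the final shape. A complete proof must either show the algorithm canonicalises these choices (e.g.\ always selecting an elimination sequence with a fixed prefix pattern) or show the discrepancies are eliminated before the output is declared; neither is done in your proposal, and the paper itself is silent on both points.
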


\begin{theorem}\label{the:dll-to-dll}
Let $S$ and $S'$ be connected colour-consistent orthogonal convex shapes. Then there is a connected shape $M$ of $3$ nodes (the $3$ musketeers) and an attachment of $M$ to the bottom-most row of $S$, such that $S \cup M$ can reach the configuration $S'$ in $O(n^2)$ time steps.
\end{theorem}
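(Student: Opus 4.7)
The plan is to combine reversibility of rotation with the forward transformation to the canonical diagonal line-with-leaves, so that any two colour-consistent orthogonal convex shapes meet at the same canonical shape. First, by Lemma \ref{lem:convex-to-dll}, starting from $S \cup M$ with a suitable 3-node seed $M$ attached to the bottom-most row of $S$, we can reach a configuration $D$, where $D$ is a diagonal line-with-leaves colour-consistent with $S$. Applying the same lemma to $S'$ (which, being colour-consistent with $S$, has the same number of nodes of each colour), with some seed $M'$ attached to the bottom-most row of $S'$, we obtain a transformation from $S' \cup M'$ to a diagonal line-with-leaves $D'$ which is colour-consistent with $S'$, hence also with $S$.

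Next, by Proposition \ref{prop:unique-dll}, two colour-consistent orthogonal convex shapes are driven by Algorithm \ref{alg:high-level-alg} to the same canonical diagonal line-with-leaves, i.e., $D=D'$. Since rotation movements are reversible (as stated in Section \ref{sec3}), the transformation $S' \cup M' \to D'$ can be inverted into a valid connectivity-preserving sequence $D' \to S' \cup M'$. Concatenating the forward transformation $S \cup M \to D$ with the reversed transformation $D = D' \to S' \cup M'$ yields a sequence taking $S \cup M$ to $S' \cup M'$. Because the 3 musketeers $M$ and $M'$ differ only in their final attachment to the target shape, while the intermediate sequence preserves connectivity throughout, we obtain a connectivity-preserving transformation of $S$ into $S'$ using a single 3-node seed.

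For the time bound, Lemma \ref{lem:convex-to-dll} is realised by Algorithm \ref{alg:high-level-alg} whose cost is analysed in Lemma \ref{lem:time-bound}, giving $O(n^2)$ rotations for the $S \to D$ direction. The reversed portion $D \to S'$ is the inverse of an execution of the same algorithm on $S'$, and therefore also uses $O(n^2)$ rotations. Summing both phases yields a total cost of $O(n^2) + O(n^2) = O(n^2)$ rotations, matching the claimed bound (and optimal, by the lower bound mentioned in Section \ref{sec2} on the distance between a line and a staircase).

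The main obstacle in formalising this is making sure that the seed can be reused across the join point $D = D'$: the forward phase ends with a 6-robot on the cell perimeter of $D$ that can be collapsed back into a connected 3-node shape on the bottom-most row of $D$ (this is precisely what the \emph{TerminateRobot} step of Algorithm \ref{alg:high-level-alg} together with the final reversibility argument of Lemma \ref{lem:convex-to-dll} delivers), and the reversed phase begins from that same configuration. Once this handover is stated cleanly, the rest of the proof is a direct application of the enumerated lemmas and reversibility, and the result follows.
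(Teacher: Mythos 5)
Your proposal is correct and follows essentially the same route as the paper: transform $S$ to the canonical diagonal line-with-leaves via Lemma \ref{lem:convex-to-dll}, invoke Proposition \ref{prop:unique-dll} and reversibility to continue to $S'$, and bound the total cost by Lemma \ref{lem:time-bound}. Your version is in fact slightly more careful than the paper's own three-line proof, since you make the appeal to Proposition \ref{prop:unique-dll} and the seed handover at the join point explicit rather than leaving them implicit.
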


\begin{proof}
By Lemma \ref{lem:convex-to-dll}, we can convert $S$ into a diagonal line-with-leaves $T$. By reversibility, we can convert $T$ into $S'$. By Lemma \ref{lem:time-bound}, this procedure takes $O(n^2)$ time steps.
\qed
\end{proof}

\section{Conclusions}\label{sec6}

There are some open problems which follow from the findings of our work. The most  obvious is expanding the class of shapes which can be constructed to achieve universal transformation. An example of a bad case is the ``double spiral'', which is a line forming two connected spirals. In this case, preserving connectivity after the removal of a node requires the robot to get to the centre of a spiral, which may not be possible without a special procedure to ``dig'' into it without breaking connectivity. Finally, successfully switching to a decentralised model of transformations will greatly expand the utility of the results, especially because most programmable matter systems which model real-world applications implement programs in this way. This in turn could lead to real-world applications for the efficient transformation of programmable matter systems.

\begin{figure}
\centering
\includegraphics[width=0.45\textwidth]{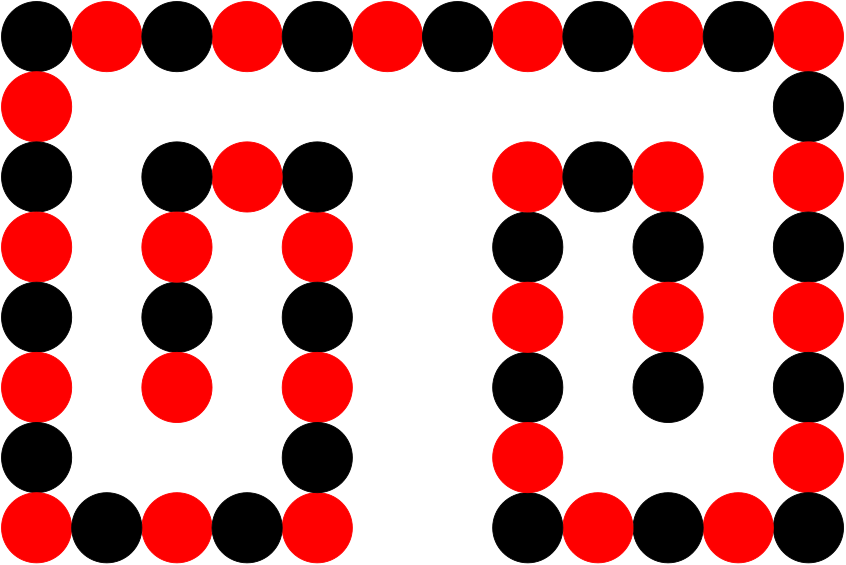}
\caption{An example of the double spiral shape.}
\label{fig:seed-placements}
\end{figure}
\FloatBarrier

\bibliographystyle{splncs04}
\bibliography{samplepaper}

\end{document}